\newcommand{\R}{{\bbR}}
\newcommand{\C}{{\mathbb C}}
\newcommand{\IR}{\mathbb{R}}
\newcommand{\ID}{\mathbb{D}}
\newcommand{\bbD}{{\mathbb{D}}}
\newcommand{\bbP}{{\mathbb{P}}}
\newcommand{\bbR}{{\mathbb{R}}}
\newcommand{\bsX}{{\boldsymbol{X}}}
\newcommand{\cC}{{\mathcal C}}
\newcommand{\cD}{{\mathcal D}}
\newcommand{\cF}{{\mathcal F}}
\newcommand{\cH}{{\mathcal H}}
\newcommand{\cO}{{\mathcal O}}
\renewcommand{\d}{{\delta}}
\renewcommand{\O}{{\Omega}}
\DeclareMathOperator{\const}{const}
\DeclareMathOperator{\dist}{dist}
\DeclareMathOperator{\supp}{supp}
\renewcommand{\ln}{\text{\rm ln}}
\newcommand{\beq}{\begin{equation}}
\newcommand{\enq}{\end{equation}}
\let\geq\geqslant
\let\leq\leqslant
\def\theequation{\@arabic\c@equation}
\numberwithin{equation}{section}
\newtheorem{theorem}{Theorem}[section]
\newtheorem{proposition}[theorem]{Proposition}
\newtheorem{lemma}[theorem]{Lemma}
\newtheorem{corollary}[theorem]{Corollary}
\theoremstyle{remark}
\newtheorem{remark}[theorem]{Remark}
\begin{document}

\title[Weak anisotropic Hardy inequality ]{Weak anisotropic Hardy inequality: essential self-adjointness of 
drift-diffusion operators on domains in $\IR^d$, revisited}

\author[G.\ Nenciu]{Gheorghe Nenciu}
\address{Gheorghe Nenciu\\Institute of Mathematics ``Simion Stoilow'' of the Romanian Academy\\ 21, Calea Grivi\c tei\\010702-Bucharest, Sector 1\\Romania}
\email{Gheorghe.Nenciu@imar.ro}

\author[I.\ Nenciu]{Irina Nenciu}
\address{Irina Nenciu\\
         Department of Mathematics, Statistics and Computer Science\\ 
         University of Illinois at Chicago\\         851 S. Morgan Street\\
         Chicago, IL \textit{and} Institute of Mathematics ``Simion Stoilow''
     of the Romanian Academy\\ 21, Calea Grivi\c tei\\010702-Bucharest, Sector 1\\Romania}
\email{nenciu@uic.edu}

\begin{abstract}
We consider the problem of essential self-adjointness of the drift-diffusion operator 
$H=-\frac{1}{\rho}\nabla\cdot \rho \mathbb D\nabla +V$  on  domains $\Omega  \subset \bbR^d$
with $\cC^2$-boundary $\partial \Omega$ and for large classes of coefficients  $\rho,\; \bbD$ and $V$.
We give criteria showing how the behavior as $x \rightarrow \partial \Omega$ of these coefficients balances to ensure 
essential self-adjointness of $H$. On the way we prove a weak anisotropic Hardy inequality which is of
independent interest.
\end{abstract}

\maketitle

\tableofcontents

\section{Introduction}\label{S:1}

In this paper we revisit the problem of essential self-adjointness for a drift-diffusion operator on domains in $\bbR^d$. 
More precisely, we consider the operator
\begin{equation}
H=-\frac{1}{\rho}\nabla\cdot \rho \mathbb D\nabla +V
\end{equation}
on a domain $\Omega \in \bbR^d$, where  $ \rho :\Omega \rightarrow (0, \infty)$, 
$\bbD: \Omega \rightarrow \bbR^{d \times d}$ is a  strictly positive definite matrix valued
function, and $V: \Omega  \rightarrow \bbR$.

The self-adjointness problem for second order elliptic partial differential operators
on domains in $\bbR^d$ (or more general on Riemannian manifolds) has a long and ramified history.
In the 1-dimensional case, the problem is well understood due to the powerful tools of ordinary differential equations theory, 
especially Weyl limit point/circle theory \cite{RS}, \cite{RoS}. The multi-dimensional case has been also much studied 
and we send the reader to \cite{BMS}, \cite{Br}, \cite{Co}, \cite{KSWW}, \cite{RS} for references, heuristics, and a detailed account.  
For more recent developments see \cite{BL}, \cite{BP}, \cite{CT}, \cite{MT}, \cite{NN1}, \cite{NN2}, \cite{PRS},
\cite{Ro1}, \cite{Ro2}.

Assuming enough regularity of the coefficients, $H$ is symmetric in $L^2\big(\Omega; \rho\,dx\big)$
on $\cC^{\infty}_0(\Omega)$ and as it is well known (see e.g. \cite{CT}) the essential self-adjointness
of $H$ is dictated only by their behavior near the boundary, $\partial \Omega$ of $\Omega$.
Hence the problem boils down to finding conditions on the behavior of its coefficients as $x \rightarrow \partial \Omega$ 
ensuring essential self-adjointness of $H$. 
The case which is well understood is when $\bbD$ is strongly degenerate, i.e. it decays sufficiently quickly as 
$ x \rightarrow \partial \Omega$ so that $\Omega$ endowed with the distance
\begin{equation}
ds^2 =\sum_{j,k=1}^d \bbD(x)^{-1}_{j,k}dx_j dx_k
\end{equation} 
is a complete metric space  (e.g.  in the case when $\|\bbD(x)\| \leq \const.\delta(x)^2$
where $\delta (x) = \dist \big(x, \partial \Omega\big)$). One particular result in this case, which follows from the general  theory in 
\cite{BMS}, is that $H_0$ is essentially self-adjoint irrespective of the geometry of $\Omega$ and the behavior of $\rho$. 
The situation when $\bbD$ is not strongly degenerate is much more involved and after more than five decades of research (see e.g. 
the line of research following the famous paper of Wienholtz \cite{S-H}, \cite{W}, \cite{Wie}, the Cordes approach \cite{Co} and the 
Brusentev theory \cite{Br}) the picture is far from being complete. More recently, for the 
particular case of Schr\"odinger operator, $H=-\Delta + V$ on bounded, simply connected domains  with smooth boundary we refined 
the previously known results on the behavior
of $V(x)$  as  $ x \rightarrow \partial \Omega$ ensuring essential self-adjointness \cite{NN1}. The key point of the method in \cite{NN1} 
is that essential self-adjointness is ensured by lower bounds of
$\big(\varphi, H\varphi\big),\; \varphi \in \cC^{\infty}_0$:
\begin{equation}\label{I:lb} 
\big(\varphi, \big(-\Delta + V\big)\varphi\big) \geq \int_{\Omega} \big(L(x) +V(x)\big) \big| \varphi
(x)\big|^2dx
\end{equation}
via Agmon-type exponential estimates for weak solutions of partial differential equations \cite{Ag}. 
The heuristics behind is that if the "barrier" $L(x) +V(x) \rightarrow \infty$
sufficiently fast as $x \rightarrow \partial \Omega$ the quantum particle cannot reach $\partial \Omega$, 
hence no boundary conditions are needed to specify the dynamics. It turns out that 
the method in \cite{NN1} works as well in more general settings and in particular,  very recently,
criteria for  essential self-adjointness of some classes of drift-diffusion operators were obtained
in \cite{NN2} and (by merging the approaches in \cite{NN2}, \cite{Ro1}) in \cite{Ro2}.

From  the above discussion it is clear that the main ingredient  which is needed in this approach is a (as "optimal" as possible) lower 
bound (aka Hardy inequality) for the general drift-diffusion operator $H_0=-\frac{1}{\rho}\nabla\cdot \rho \mathbb D\nabla$
\begin{equation}\label{I:H0lb} 
\big(\varphi, H_0\varphi\big) \geq \int_{\Omega} L(x)  \big| \varphi
(x)\big|^2 \rho(x)dx.
\end{equation}

If $\rho =1,\; \bbD=\mathds{1}$ then \eqref{I:H0lb} is provided by the well known weak Hardy inequality (see e.g. \cite{BM}, \cite{Da})
\begin{equation}
L(x)= C +\frac{1}{4 \delta(x)^2}, \; C> -\infty,\; \delta(x)= \dist(x, \partial \Omega).
\end{equation}
More generally if $\bbD(x)$ is isotropic i.e. $\bbD(x) = d(x)\mathds{1}= a(x)\delta(x)^{\beta}\mathds{1}$ then $L(x)$ is provided by weighted Hardy inequalities (see  \cite{Ro3} and references given there). Weighted Hardy inequalities can be used also for anisotropic $\bbD(x)$ by noticing that
\begin{equation}\label{I:H0min}
\big(\varphi, H_0\varphi\big) \geq \int_{\Omega} \big|\nabla \varphi(x)\big|^2 d(x)\rho(x)dx
\end{equation}
if $d(x)$ satisfies $ \bbD(x) \geq d(x)\mathds{1}$ but \eqref{I:H0min} might be very poor if $\bbD(x)$ is not in a "neighborhood" of an isotropic one.

The aim of this paper is to extend the results in \cite{NN2}, \cite{Ro2} to larger classes of $\rho,\; \bbD$ and $V$ in order   to allow the degree of degeneracy of $\bbD$ to vary with the point on $\partial \Omega$ and to dispense of isotropy like conditions. On the way we prove a weak anisotropic Hardy inequality which might be of independent interest.

The content of the paper is as follows. In Section \ref{S:2} we spell the setting for Sections
\ref{S:3} and \ref{S:4}. Section \ref{S:3} is devoted to the proof of our weak anisotropic  Hardy inequality. In section \ref{S:4} combining the result in Section \ref{S:3} with the method in \cite{NN1}, \cite{NN2} we provide criteria  for essential self-adjointness oh $H$ for  classes
 of coefficients described in Section \ref{S:2}. While in order not to obscure the main ideas in
 Sections \ref{S:3} and \ref{S:4} we restricted ourselves to the setting in Section \ref{S:2}, the method of proof works as well (adding the necessary technicalities) in more general settings, and
 we give in Section \ref{S:5} some refinements and extensions of the results in Sections \ref{S:3} and \ref{S:4}. Comments about relation with previous results, optimality, as well as an application to the essential self-adjointness of Laplace-Beltrami operators on two dimensional almost-Riemannian structures \cite{BBP}, \cite{BL} are also given.

\section{The setting}\label{S:2}
Through the paper, if not otherwise stated, $\Omega$ denotes a bounded simple connected domain in $\R^d$
whose boundary $\partial \Omega$ is a $\cC^2$ manifold of codimension 1. Let 

\begin{equation}\label{S:delta}
\delta (x) = \dist \big(x, \partial \Omega\big); \quad x \in \O
\end{equation}
be the distance to the boundary and 

\begin{equation}\label{S:gammanu}
\Gamma_{\nu} =\big\{ x \in \Omega | \d(x) < \nu \big\} ; \quad \nu >0.
\end{equation}

The main properties of the function $\delta(x)$ we need are given by the following:

\begin{lemma}\label{GT}\cite[ Ch. 14.6]{GT}
There exist $\nu_{\Omega} >0 $ and $C_{\Delta} < \infty$ such that
\begin{equation}\label{S:graddelta}
\delta \in \cC^2(\Gamma_{\nu_{\Omega}});\quad \big|\nabla\delta(x)\big|=1; \quad \big|\Delta \delta(x) \big|
\leq C_{\Delta}.
\end{equation}
\end{lemma}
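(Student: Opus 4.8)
The plan is to use the local structure of the $\cC^2$ boundary together with the nearest-point projection onto $\dOm$. Since $\dOm$ is compact and of class $\cC^2$, the outward unit normal $N(y)$ is a well-defined $\cC^1$ vector field on $\dOm$, and the normal map $\Phi(y,t)=y-tN(y)$ is $\cC^1$ with Jacobian equal to $\prod_{i=1}^{d-1}\big(1-t\kappa_i(y)\big)$ at $(y,t)$, where the $\kappa_i(y)$ are the principal curvatures of $\dOm$ at $y$. At $t=0$ this Jacobian equals $1$, so $\Phi$ is a local diffeomorphism near $\dOm\times\{0\}$; a standard compactness argument then produces $\nu_\Omega>0$ such that $\Phi$ restricts to a diffeomorphism from $\dOm\times(0,\nu_\Omega)$ onto $\Gamma_{\nu_\Omega}$. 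In particular, for each $x\in\Gamma_{\nu_\Omega}$ there is a \emph{unique} nearest boundary point $\pi(x)\in\dOm$, and $\delta(x)$ equals the parameter $t$ in the representation $x=\Phi\big(\pi(x),\delta(x)\big)$.

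From this representation the three assertions follow. First, $\delta$ inherits $\cC^2$ regularity on $\Gamma_{\nu_\Omega}$ from the inverse of $\Phi$, whose smoothness is controlled by the $\cC^2$ regularity of $\dOm$ (equivalently, by the continuity of the second fundamental form). Second, differentiating $\big|x-\pi(x)\big|^2=\delta(x)^2$ and using that the vector $x-\pi(x)$ is parallel to $N\big(\pi(x)\big)$ while the Jacobian of $\pi$ has range tangent to $\dOm$, one obtains $\nabla\delta(x)=-N\big(\pi(x)\big)$, hence $\big|\nabla\delta(x)\big|=1$; this is just the eikonal equation satisfied by the distance function.

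For the bound on $\Delta\delta$, one computes the Laplacian in the Fermi (normal) coordinates supplied by $\Phi$. A direct calculation gives
\begin{equation}\label{plan:Deltadelta}
\Delta\delta(x)=-\sum_{i=1}^{d-1}\frac{\kappa_i\big(\pi(x)\big)}{1-\delta(x)\,\kappa_i\big(\pi(x)\big)},
\end{equation}
so that $\Delta\delta$ at $x$ is the mean curvature of the level surface $\{\delta=\delta(x)\}$ expressed through the principal curvatures of $\dOm$. Since $\dOm$ is a compact $\cC^2$ manifold, the $\kappa_i$ are continuous, hence uniformly bounded, say $\max_i\sup_{\dOm}|\kappa_i|\le K$; shrinking $\nu_\Omega$ if necessary so that $\nu_\Omega K\le 1/2$ keeps the denominators in \eqref{plan:Deltadelta} bounded below by $1/2$ and yields the uniform bound $\big|\Delta\delta(x)\big|\le C_\Delta:=2(d-1)K$.

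I expect the main obstacle to be the regularity bookkeeping in the first two steps: passing from a merely $\cC^2$ boundary to a $\cC^2$ distance function is sharp, since the normal field $N$ is only $\cC^1$ and the second derivatives of $\delta$ are precisely the (continuous) components of the second fundamental form of the level sets. One must therefore invoke the inverse function theorem at the $\cC^1$ level for $\Phi$ and then verify, via \eqref{plan:Deltadelta} or by direct differentiation, that the resulting second derivatives of $\delta$ exist and extend continuously up to $\dOm$; the compactness of $\dOm$ is what upgrades these pointwise statements to the uniform constants $\nu_\Omega$ and $C_\Delta$.
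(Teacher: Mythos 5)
Your proposal is correct, and it is essentially the argument the paper relies on: the paper does not prove this lemma itself but cites [GT, Ch.\ 14.6], where exactly this nearest-point projection / principal-coordinate computation is carried out (including the formula $\Delta\delta = -\sum_i \kappa_i/(1-\delta\kappa_i)$ and the bootstrap $\nabla\delta = -N\circ\pi$ that upgrades $\delta$ to $\cC^2$). Your handling of the regularity subtlety and the uniform curvature bound matches that reference, so nothing is missing.
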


Now let
\begin{equation}\label{S:rho}
\rho \in \cC^1\big(\Omega; (0,\infty)\big),
\end{equation}
$\bbD(x)$ a matrix valued function:
\begin{equation}\label{S:D}
\bbD \in \cC^1\big(\Omega; \bbR^{d\times d}\big); \quad \bbD(x) >0
\end{equation}
and 
\begin{equation}\label{S:V}
V \in L^{\infty}_{loc}\big(\Omega; \bbR \big).
\end{equation}
With these we consider the following symmetric operator in $L^2\big(\Omega; \rho(x)dx\big)$:

\begin{equation}\label{S:H}
H=-\frac{1}{\rho} \nabla\cdot \rho \mathbb D\nabla +V; \quad \cD (H)= \cC^{\infty}_0(\Omega).
\end{equation}

By $H_0$ we denote the above operator with $V=0$ i.e.
\begin{equation}\label{S:H0}
H_0=-\frac{1}{\rho} \nabla\cdot \rho\mathbb D\nabla .
\end{equation}

The corresponding quadratic forms will be denoted by  $h\big[\cdot,\cdot\big]$ and $h_0\big[\cdot,\cdot\big]$ respectively.

Next we spell the assumptions on $\rho$ and $\bbD$ which will be used through the paper:

\textbf{Assumption A} 

Let 
\begin{equation}\label{S:snu0}
s_{\beta} \leq s <1; \quad 0 <\nu_0 < \text{min}\big\{\nu_{\O}, e^{-1}\big\}.
\end{equation}
{$\mathbf {A_{ \rho}}$ }:

For $x \in \Gamma_{\nu_0}$,
\begin{equation}\label{S:Arho}
\rho(x) = r(x) \delta(x)^{\gamma(x)}; \quad r \in \cC^1\big(\Gamma_{\nu_0}; (0,\infty)\big);  \quad \gamma \in 
\cC^1\big(\Gamma_{\nu_0}; \bbR\big),
 \end{equation}
 
 \begin{equation}\label{S:Arhosup}
\sup_{x\in \Gamma_{\nu_0}} \big(\big| \nabla \ln  r(x)\big| +\big| \nabla \gamma(x)\big| \big)\delta(x)^s <\infty.
\end{equation}

{$\mathbf {A_D}$}:

For $x \in \Gamma_{\nu_0}$ let $\bbP_x$ be the orthogonal projection in $\bbR^d$ along $\nabla \delta(x)$.
Write $\bbD(x)$ as a $2 \times 2$ block 
matrix according to the decomposition $\bbR^d=\bbP_x\bbR^d \oplus (1- \bbP_x)\bbR_d$:
\begin{align}\label{S:blockD}
\bbD(x)&= \bbP_x \bbD(x) \bbP_x  +\bbP_x \bbD(x)(1- \bbP_x  )+
(1-\bbP_x)\bbD(x)\bbP_x + \bbP_x \bbD(x) \bbP_x \nonumber \\ 
&=:
 \left(\begin{matrix} d(x) &d_{12}(x)\\ d_{21}(x) & d_{22}(x)  \end{matrix}\right).
\end{align}

Assume that
\begin{equation}\label{S:AD}
d(x)=a(x)\delta(x)^{\beta(x)}; \quad  a \in \cC^1\big(\Gamma_{\nu_0}; (0,\infty)\big);  \quad \beta \in 
\cC^1\big(\Gamma_{\nu_0}; \bbR\big),
\end{equation}
\begin{equation}\label{S:ADsup}
\sup_{x\in \Gamma_{\nu_0}} \big| \nabla \ln  a(x)\big|\delta(x)^s< \infty;  \quad 
\sup_{x\in \Gamma_{\nu_0}}\big|\nabla \beta(x)\big|\delta(x)^{s_{\beta}} <\infty.
\end{equation}

For further use we notice a direct consequence of Assumption A.  More precisely since $\Omega$ is bounded, $\partial \Omega$ is smooth and $s_{\beta} \leq s <1$
from \eqref{S:Arhosup} and \eqref{S:ADsup} it follows that
\begin{equation}\label{S:cA}
\beta, \; \gamma, \; \ln a, \; \ln r \in L^{\infty}(\Gamma_{\nu_0}).
\end{equation}
In particular
\begin{equation}\label{S:aainvers}
a,\; \frac{1}{a}, \; r, \; \frac{1}{r}  \in L^{\infty}(\Gamma_{\nu_0}).
\end{equation}

Since $\bbD(x)>0$, $\bbD(x)^{-1}$ exists, $\bbD(x)^{-1} >0$ hence (see \eqref{S:blockD})
$d(x) >0,\; d_{22}(x)>0$. By Schur complement lemma \cite{HJ} (we use its form used by physicists, see e.g. Lemma 2.3 in \cite{JN}) $\bbD(x)^{-1}$ written as a $2\times2$ block matrix reads:
\begin{align}\label{H:blockDinvers}
&\bbD^{-1} =  \nonumber \\
&\left(\begin{matrix} \tilde d &-\tilde d d_{12}d_{22}^{-1}\\
 & \\
 -d_{22}^{-1}d_{21}\tilde d & d_{22}^{-1} +d_{22}^{-1}d_{21}\tilde d
d_{12}d_{22}^{-1}  \end{matrix}\right),
\end{align}
with 
\begin{equation}\label{H:dtilde}
\tilde d= d^{-1}\big(1-q\big)^{-1}
\end{equation}
where
\begin{equation}\label{H:q}
q =d^{-1}d_{12}d_{22}^{-1}d_{21}\,.
\end{equation}
Since $d(x),\; \tilde d(x), \;d_{22}^{-1}(x)$ are all strictly positive it follows that
\begin{equation}\label{H:q01}
0 \leq q(x)<1.
\end{equation}

\textbf{Assumption Q}

\begin{equation}\label{S:Q}
\sup_{x \in \Gamma_{\nu_0}}\big| \nabla \ln \big(1-q(x)\big)\big| \delta(x)^s <\infty.
\end{equation}

As in the case of $a, r$ from  \eqref{S:Q} one has
\begin{equation}\label{S:Qinfty}
\frac{1}{1-q} \in L^{\infty}\big(\Gamma_{\nu_0}\big).
\end{equation}

\section{Weak  anisotropic  Hardy inequality}\label{S:3}


\begin{theorem}\label{H}

Suppose that Assumptions A and  Q hold true. 
Then   there exists $0<\nu_1 <\frac{\nu_0}{2}$ such that  for all $0<\nu < \nu_1$ 
and  $\varphi
 \in \cC^{\infty}_0(\Omega)$:

\begin{align}\label{H:Hardy1} 
h_0\big[\varphi, \varphi\big]
=\int_{\Omega}\overline{\nabla\varphi(x)}\cdot \big(\bbD(x)\nabla\varphi(x)\big)\rho(x)dx\nonumber \\
\geq -c_1(\nu) \|\chi_{\Omega \setminus \Gamma{\nu}}\varphi \|^2+ \int_{\Gamma_{\nu}}\cH_0(x)|\varphi(x)|^2 \rho(x) dx
\end{align}
where $c_1(\nu) <\infty$,
$\chi_{\Omega \setminus \Gamma{\nu}}$ is the characteristic function of $\Omega \setminus \Gamma{\nu}$ and
\begin{equation}\label{H:calH1}
\cH_0(x)= \frac{1}{4}\big(1-q(x)\big)a(x)\delta(x)^{\beta(x)-2} \Big[ \big(\beta(x)+\gamma(x)-1\big)^2 +\frac{1}{2} 
\Big(\ln\tfrac{1}{\delta(x)}\Big)^{-2}\Big].
\end{equation}
In particular if $\varphi \in \cC^{\infty}_0(\Gamma_{\nu})$ then
\begin{equation}\label{H:HardyL}
h_0[\varphi,\varphi] \geq \int_{\Gamma_{\nu}}\cH_0(x)|\varphi(x)|^2 \rho(x) dx.
\end{equation} 
\end{theorem}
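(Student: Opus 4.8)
The plan is to establish \eqref{H:Hardy1} by the vector-field (``super-solution'') method: complete a square adapted to the weight $\rho$ and the metric $\bbD$, and then reduce the resulting effective potential to a one-dimensional computation in the normal direction $\nabla\delta$. Concretely, for any real vector field $F$ on $\Omega$ with $\rho\bbD F\in\cC^1$ and any $\varphi\in\cC^\infty_0(\Omega)$, I would expand the manifestly nonnegative quantity $\int_\Omega\overline{(\nabla\varphi-F\varphi)}\cdot\bbD(\nabla\varphi-F\varphi)\,\rho\,dx\ge0$ and integrate by parts in the cross term (no boundary contribution, since $\varphi$ is compactly supported) to obtain
\[
h_0[\varphi,\varphi]\ge\int_\Omega\Big[-\tfrac1\rho\,\nabla\cdot(\rho\bbD F)-F\cdot\bbD F\Big]\,|\varphi|^2\,\rho\,dx.
\]
This reduces the theorem to exhibiting an $F$ for which the bracket dominates $\cH_0$ on $\Gamma_\nu$ and stays bounded below on the complement.

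The anisotropy is handled by choosing $F$ so that $\bbD F$ points along the normal. I would fix a cutoff $\xi\in\cC^\infty(\Omega)$ with $\xi\equiv1$ on $\Gamma_\nu$ and $\supp\xi\subset\Gamma_{2\nu}\subset\Gamma_{\nu_0}$ (this forces $\nu<\nu_0/2$, hence $\nu_1<\nu_0/2$), and set $F=\xi\,g\,\bbD^{-1}\nabla\delta$ for a scalar $g$ to be determined. Then $\bbD F=\xi g\,\nabla\delta$, so that, using $|\nabla\delta|=1$, Lemma \ref{GT} and \eqref{H:blockDinvers}--\eqref{H:q},
\[
F\cdot\bbD F=\xi^2 g^2\,\nabla\delta\cdot\bbD^{-1}\nabla\delta=\xi^2 g^2\,\tilde d,\qquad \tilde d=\big[(1-q)a\delta^\beta\big]^{-1},
\]
while $\nabla\cdot(\rho\bbD F)$ collapses to the normal derivative $\nabla\delta\cdot\nabla(\rho\xi g)$ plus $\rho\xi g\,\Delta\delta$. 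On $\Gamma_\nu$, where $\xi\equiv1$, the bracket is therefore a purely one-dimensional expression in $\delta$. Taking
\[
g=\Big(-\tfrac{\beta+\gamma-1}{2}-\tfrac12\big(\ln\tfrac1\delta\big)^{-1}\Big)(1-q)a\,\delta^{\beta-1},
\]
the leading $\delta^{\beta-2}$ terms reproduce, after optimizing the constant part of the multiplier, exactly $\tfrac14(1-q)a\,\delta^{\beta-2}(\beta+\gamma-1)^2$, and the logarithmic correction produces a further positive term of the form $(1-q)a\,\delta^{\beta-2}(\ln\tfrac1\delta)^{-2}$ up to a universal constant.

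The remaining contributions arise from differentiating $\ln r$, $\gamma$, $\ln a$, $\beta$ and $\ln(1-q)$, and from $\Delta\delta$. By Assumptions A and Q, specifically \eqref{S:Arhosup}, \eqref{S:ADsup} and \eqref{S:Q}, together with $s_\beta\le s<1$, each such term is $O\big((1-q)a\,\delta^{\beta-2}\,\delta^{1-s}(\ln\tfrac1\delta)^{k}\big)$ for some fixed $k$. Since $\delta^{1-s}(\ln\tfrac1\delta)^{k+2}\to0$ and $\nu_0<e^{-1}$ keeps $\ln\tfrac1\delta>1$ on $\Gamma_{\nu_0}$, these errors are, uniformly on $\Gamma_\nu$, dominated by a fraction of the logarithmic gain once $\nu_1$ is small. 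Choosing the correction above so that the optimal coefficient is produced and letting the errors consume at most half of it leaves the bracket $\ge\cH_0$ on $\Gamma_\nu$, with the stated constant $\tfrac12$. On $\Omega\setminus\Gamma_\nu$ both $F$ and $\nabla\xi$ are supported where $\delta\ge\nu$, so the bracket is bounded below by some $-c_1(\nu)$, producing the term $-c_1(\nu)\|\chi_{\Omega\setminus\Gamma_\nu}\varphi\|^2$; for $\varphi\in\cC^\infty_0(\Gamma_\nu)$ this term is absent and \eqref{H:HardyL} follows. I expect the main obstacle to be precisely this last bookkeeping: checking \emph{uniformly} on $\Gamma_\nu$ that every error is dominated by the slender logarithmic slack, and tuning $g$ so that exactly the claimed constant survives.
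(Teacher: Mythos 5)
Your proposal is correct and follows essentially the same route as the paper: your completion-of-the-square inequality is exactly the vector-field lemma the paper cites from Brusentsev (Lemma \ref{H:VFA}, with $\bsX=-\rho\,\bbD F$), and your Ansatz $F=\xi g\,\bbD^{-1}\nabla\delta$ with $g=-\big(\tfrac{\beta+\gamma-1}{2}+\tfrac12(\ln\tfrac1\delta)^{-1}\big)(1-q)a\delta^{\beta-1}$ reproduces precisely the paper's field $\bsX_0$ in \eqref{H:tildeX1}, including the Schur-complement factor $\tilde d=[(1-q)a\delta^\beta]^{-1}$, the identity $\delta f'=f^2$ yielding the logarithmic gain, and the device of sacrificing half of that gain to absorb the $O(\delta^{1-s}\ln\tfrac1\delta)$ errors permitted by Assumptions A and Q. The only differences are cosmetic (a $\nu$-dependent cutoff instead of the paper's fixed $\Psi$, and deriving rather than citing the vector-field inequality).
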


\begin{proof}
As in the proof  of Lemma 5.4 in \cite{NN2} we use the so called "vector field" approach to Hardy inequalities  see e.g.
\cite{Br}, \cite{BFT1}, \cite{M}, \cite{L} :

\begin{lemma}\cite[Theorem 4.1]{Br} \label{H:VFA}
Let $\bsX \in C^1\big(\Omega;\IR^d\big)$ be a differentiable real vector field on $\Omega$. Then
\begin{equation}\label{H:VFA}
h_0[\varphi, \varphi]  \geq  \int_{\Omega}\left(\nabla\cdot \bsX(x)-\bsX(x)\cdot \big(\rho (x) \ID(x)\big)^{-1}\bsX(x)\right)\,
\big|\varphi(x)\big|^2\,dx\,,
\end{equation}
for all $\varphi\in C_0^\infty(\Omega)$.
\end{lemma}

With this the proof consists in making an appropriate  Ansatz,  $\bsX_0(x)$,  for $\bsX(x)$ and estimating the r.h.s. of \eqref{H:VFA}.

Let for $x \in \Gamma_{\nu_0}$  (recall that $\nu_0 < \frac{1}{e}$ see \eqref{S:snu0}):
\begin{equation}\label{H:tildeX1}
\widetilde\bsX_0= \frac{1}{2}\big(1-q(x)\big)a(x)r(x)\delta(x)^{\beta(x)+\gamma(x)-1}
\big[\beta(x)+\gamma(x)-1 + f\big(\delta(x)\big)\big]\nabla\delta(x)
\end{equation}
with
\begin{equation}\label{H:f1}
f(t) = \frac{1}{\ln\frac{1}{t}}
\end{equation}
and then
\begin{equation}\label{H:X1}
\bsX_0(x)= \Psi(x)\widetilde\bsX_0(x)
\end{equation}
where $\Psi:\Omega \rightarrow [0,1], \; \Psi \in \cC^1(\Omega)$, and

\begin{equation}\label{H:Psi}
\Psi=
\begin{cases}
1 & \quad\text{for } x\in\Gamma_{\frac{\nu_0}{2}}\\
0 & \quad\text{for } x\not\in\Gamma_{\frac{3\nu_0}{4}} 
\end{cases}\,.
\end{equation}
By construction $\sup_{x\in\Omega}\big|\nabla \Psi(x)\big| <\infty$, hence from Assumption A for all $0<\nu < \frac{\nu_0}{2}$
\begin{equation}\label{H:sup}
\sup_{x\in \Omega\setminus \Gamma_\nu}\frac{1}{\rho(x)} \big|\nabla\cdot \bsX_0(x)-\bsX_0(x)\cdot \big(\rho(x) \bbD(x)\big)^{-1}\bsX_0(x)\big| =: M_{0,\nu} <\infty.
\end{equation}
On $\Gamma_{\nu}$, since $\nu <\frac{\nu_0}{2}$, from \eqref{H:X1} and \eqref{H:Psi} $\bsX_0(x)=\tilde\bsX_0(x)$ hence we have to compute $\nabla\cdot \widetilde\bsX_0(x)-\widetilde\bsX_0(x)\cdot \big(\rho(x) \bbD(x))^{-1}\widetilde\bsX_0(x)$.

By direct (a bit tedious) computation, from \eqref{H:tildeX1} one obtains for $x \in \Gamma_{\frac{\nu_0}{2}}$:
\begin{align}\label{H:divX1}
&\nabla \cdot \bsX_0(x)=\frac{1}{2}\big(1-q(x)\big)a(x)\rho(x)\delta(x)^{\beta(x)-2}\nonumber \\
&\Big[ \big(\beta(x)+
\gamma(x)-1\big)^2+\big(\beta(x)+
\gamma(x)-1\big)f\big(\delta(x)\big)
+\delta(x)f'\big(\delta(x)\big)+R_0(x) \Big]
\end{align}
with
\begin{align}\label{H:R1}
&R_0(x)=\frac{\delta(x)}{2}\Big\{\big[1+\big(\beta(x)+\gamma(x)-1+f\big(\delta(x)\big)
\big)\ln\delta(x)\big]\nonumber \\
&\big(\nabla \beta(x)+\nabla \gamma(x)\big)
+\big(\beta(x)+
\gamma(x)-1+\big(\delta(x)\big)\big)\big(\nabla \big(1-q(x)\big)a(x) r(x)\big)\Big\}\cdot\nabla\delta(x)\nonumber \\
&+\frac{\beta(x)+\gamma(x)-1+f\big(\delta(x)\big)}{2}\delta(x)\Delta\delta(x).
\end{align}
Taking into account  \eqref{S:graddelta}, \eqref{S:Arhosup}, \eqref{S:ADsup} and the fact that $\lim_{\delta(x)
\rightarrow 0}\delta(x)^{\frac{1-s}{2}} \ln \delta(x)
=0$ one obtains that uniformly for $s_{\beta}\leq s$
\begin{equation}\label{H:supR1}
\sup_{x\in \Gamma_{\frac{\nu_0}{2}}}\big|R_0(x)\big|\delta(x)^{\frac{s-1}{2}} =:R_{0,\nu_0} <\infty.
\end{equation}

We  compute now $\bsX_0(x)\cdot \big(\rho (x) \ID(x)\big)^{-1}\bsX_0(x))$. From \eqref{S:AD}, \eqref{H:blockDinvers}, \eqref{H:dtilde}, \eqref{H:tildeX1} and \eqref{H:X1},

\begin{align}\label{H:X1D-1X1}
&\bsX_0(x)\cdot \big(\rho (x) \ID(x)\big)^{-1}\bsX_0(x)) = 
\frac{1}{4}\big(1-q(x)\big)a(x)\rho(x)\delta(x)^{\beta(x)-2}\nonumber \\
&\Big[
\big(\beta(x)+\gamma(x)-1\big)^2 +2\big(\beta(x)+\gamma(x)-1\big)f\big(\delta(x)\big)
+f\big(\delta(x)\big)^2\Big].
\end{align}
Putting together \eqref{H:divX1} and \eqref{H:X1D-1X1} one obtains  
\begin{align}\label{H:divX1D-1X1}
&\nabla \cdot \bsX_0(x) -\bsX_0(x)\cdot \big(\rho (x) \ID(x)\big)^{-1}\bsX_0(x))=
\big(1-q(x)\big)a(x)\rho(x)\delta(x)^{\beta(x)-2}\nonumber \\
&\Big[\frac{1}{4}\big(\beta(x)+\gamma(x)-1\big)^2
+\frac{1}{2}\big(\delta(x)f'\big(\delta(x)\big)-\frac{1}{4}f\big(\delta(x)\big)^2 +R_0(x)\Big].
\end{align}
Further from \eqref{H:f1} 
\begin{equation}\label{H:f1M1}
tf'(t) -\frac{1}{2}f(t)^2=\frac{1}{2}\Big(\frac{1}{\ln \frac{1}{t}}\Big)^2.
\end{equation}
Finally from \eqref{H:divX1D-1X1} and \eqref{H:f1M1} (on $\Gamma_{\frac{\nu_0}{2}}$)
\begin{align}\label{H:divX1D-1X1R1}
&\nabla \cdot \bsX_0(x) -\bsX_0(x)\cdot \big(\rho (x) \ID(x)\big)^{-1}\bsX_0(x))=
\big(1-q(x)\big)a(x)\rho(x)\delta(x)^{\beta(x)-2}\nonumber \\
&\Big[\frac{1}{4}\big(\beta(x)+\gamma(x)-1\big)^2
+\frac{1}{4}\Big(\frac{1}{\ln \frac{1}{\delta(x)}}\Big)^2 +R_0(x)\Big].
\end{align}
From \eqref{H:supR1} there exists $0< \nu_1< \frac{\nu_0}{2}$ such that for $0<\nu <\nu_1$
\begin{equation}\label{H:M2R1}
\frac{1}{8}\Big(\frac{1}{\ln \frac{1}{\delta(x)}}\Big)^2 +R_0(x) \geq 0
\end {equation}
hence on $\Gamma_{\nu_1}$
\begin{align}\label{H:divX1D-1X1fin}
&\nabla \cdot \bsX_0(x) -\bsX_0(x)\cdot \big(\rho (x) \ID(x)\big)^{-1}\bsX_0(x))\geq \nonumber \\
&\frac{1}{4}\big(1-q(x)\big)a(x)\rho(x)\delta(x)^{\beta(x)-2}\Big[
\big(\beta(x)+\gamma(x)-1\big)^2
+ \frac{1}{2}\frac{1}{\Big(\ln \frac{1}{\delta(x)}\Big)^2}
\Big].
\end{align}
Then \eqref{H:Hardy1} with $c_1(\nu)=M_{0,\nu}$ and $\cH_0$  given  by  \eqref{H:calH1} follows from
Lemma \ref{H:VFA}, \eqref{H:sup} and \eqref{H:divX1D-1X1fin}.
\end{proof}


\section{Essential self-adjointness of drift-diffusion operators}\label{S:4}

In this section we give criteria for essential  self-adjointness of operator $H$ (see \eqref{S:H}) 
in terms of $\rho$, $\bbD$, $V$ in the setting given in Section \ref{S:2}. The strategy of the proof of these criteria is the one outlined in Section 3 in of \cite{NN2}. It relies on the following inequality.
\begin{lemma}\label{BI}\cite[Lemma 3.4]{NN2}
Assume that there exist $E_0>-\infty$ and a function $B\,:\,\Omega\,\rightarrow\,[0,\infty)$ such that all $\varphi\in C_0^\infty(\Omega)$, it holds that:
\begin{equation}\label{E:h}
( \varphi, (H-E_0)\varphi) \geq \int_\Omega |\varphi(x)|^2 B(x)\rho (x)\,dx\,.
\end{equation}
Let $\psi_E$ be a weak solution of $\big(H-E\big)\psi_E=0$ for some $E< E_0$,
and let $g$ be a real-valued, Lipschitz continuous function on $\Omega$ satisfying (a.e):
\begin{equation}\label{E:gB}
\nabla g(x)\cdot\big(\bbD(x)\nabla g(x)\big)
\leq  B(x)+\frac{|E-E_0|}{2}\,.
\end{equation}
Then
\begin{equation}\label{E:BI}
\big(\psi_E , f^2\psi_E\big)\leq \frac{2}{|E-E_0|} \big(\psi_E,|m|\psi_E\big)\,,
\end{equation}
where $f=e^g\phi$ with $\phi\,:\,\Omega\rightarrow[0,1]$, $\phi\in \cC^1_0(\Omega)$, and
\begin{align}\label{E:|m|}
& |m|(x)= e^{2g(x)}\Big\{2\big[\nabla g(x)\cdot \big(\bbD(x)\nabla g(x)\big)\big]^{\frac{1}{2}}\nonumber \\
&\big[\nabla \phi(x)\cdot \big(\bbD(x)\nabla \phi(x)\big)\big]^{\frac{1}{2}} +
\nabla \phi(x)\cdot \big(\bbD(x)\nabla \phi(x)\big)\Big\}.
\end{align}
\end{lemma}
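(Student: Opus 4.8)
The plan is to run an Agmon-type energy estimate: combine an IMS-type localization identity for the anisotropic form $h$ with the weak eigenvalue equation for $\psi_E$, and then absorb the exponential weight using hypothesis \eqref{E:gB}. Throughout, $(\cdot,\cdot)$ and $\|\cdot\|$ denote the inner product and norm of $L^2(\Omega;\rho\,dx)$, and I write $\nabla u\cdot(\bbD\nabla u)$ for the relevant $\bbD$-gradient form. Note that $f=e^g\phi$ is real-valued, Lipschitz, and compactly supported in $\Omega$, since $\phi\in\cC^1_0(\Omega)$ and $g$ is Lipschitz.

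First I would establish the localization identity
\[
h[f\psi_E,f\psi_E]=\Re\,h[f^2\psi_E,\psi_E]+\int_\Omega \big(\nabla f\cdot(\bbD\nabla f)\big)\,|\psi_E|^2\rho\,dx .
\]
This comes from expanding $\nabla(f^2\psi_E)=2f(\nabla f)\psi_E+f^2\nabla\psi_E$ and $\nabla(f\psi_E)=(\nabla f)\psi_E+f\nabla\psi_E$, using that $\bbD$ is symmetric, and observing that the two cross-terms assemble into $\int_\Omega f\,(\bbD\nabla f)\cdot\nabla(|\psi_E|^2)\rho\,dx$, which is precisely $2\Re\int_\Omega f\,\overline{\psi_E}\,(\nabla f)\cdot(\bbD\nabla\psi_E)\rho\,dx$. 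The potential contributes the same term $\int_\Omega f^2|\psi_E|^2 V\rho\,dx$ on both sides, so the identity holds for $h$ exactly as for $h_0$.

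Next, since $\psi_E$ is a weak solution of $(H-E)\psi_E=0$, elliptic regularity puts $\psi_E\in H^1_{\mathrm{loc}}(\Omega)$, so $f^2\psi_E$ is an $H^1$ function with compact support and hence an admissible test function; testing the equation against it gives $\Re\,h[f^2\psi_E,\psi_E]=E\|f\psi_E\|^2$. Substituting into the localization identity and then applying the lower bound \eqref{E:h} to $\varphi=f\psi_E$ (which lies in the form domain, so that \eqref{E:h} extends to it by density) yields
\[
E\|f\psi_E\|^2+\int_\Omega \big(\nabla f\cdot(\bbD\nabla f)\big)|\psi_E|^2\rho\,dx=h[f\psi_E,f\psi_E]\geq E_0\|f\psi_E\|^2+\int_\Omega f^2 B\,|\psi_E|^2\rho\,dx .
\]
Because $E<E_0$, rearranging gives the key bound
\[
|E-E_0|\,\|f\psi_E\|^2\leq \int_\Omega\big(\nabla f\cdot(\bbD\nabla f)-f^2 B\big)\,|\psi_E|^2\rho\,dx .
\]

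The remaining step is algebraic. Writing $\nabla f=e^g(\phi\nabla g+\nabla\phi)$ and expanding the $\bbD$-form produces a diagonal term $e^{2g}\phi^2\,\nabla g\cdot(\bbD\nabla g)$, a cross term, and $e^{2g}\nabla\phi\cdot(\bbD\nabla\phi)$. Hypothesis \eqref{E:gB} bounds $\nabla g\cdot(\bbD\nabla g)-B$ by $\tfrac12|E-E_0|$, so the diagonal part is at most $\tfrac12|E-E_0|f^2$; the Cauchy--Schwarz inequality for the positive-definite form $\bbD$ together with $0\le\phi\le1$ controls the cross term so that the surviving expression is bounded by $|m|(x)$ from \eqref{E:|m|}. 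This gives $\nabla f\cdot(\bbD\nabla f)-f^2 B\leq \tfrac12|E-E_0|f^2+|m|$, and substituting back lets the term $\tfrac12|E-E_0|\|f\psi_E\|^2$ be absorbed on the left, leaving $\tfrac12|E-E_0|\|f\psi_E\|^2\leq(\psi_E,|m|\psi_E)$; since $(\psi_E,f^2\psi_E)=\|f\psi_E\|^2$, this is exactly \eqref{E:BI}. I expect the genuine difficulty to be only the functional-analytic justification around the test function $f^2\psi_E$ and the extension of \eqref{E:h} to $f\psi_E$---namely the local $H^1$-regularity of the weak solution and the density argument legitimizing these non-smooth but compactly supported elements of the form domain---while the rest is routine computation.
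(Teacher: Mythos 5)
Your proof is correct. Note that the paper itself gives no proof of this lemma: it is quoted verbatim from \cite[Lemma 3.4]{NN2}, so there is no in-paper argument to compare against; your derivation --- the IMS-type localization identity for $h$, testing the weak equation against $f^2\psi_E$ (justified by local elliptic regularity and density), applying \eqref{E:h} to $f\psi_E$, and then using the Cauchy--Schwarz inequality for the $\bbD$-form together with \eqref{E:gB} to absorb the $\tfrac12|E-E_0|\,\|f\psi_E\|^2$ term --- is exactly the standard Agmon-type estimate by which the cited reference establishes it.
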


The choice of $\phi$ to be used in Lemma \ref{BI} is as follows (see \cite{NN2}). Let for $l=1,2,...$
\begin{equation}\label{E:rl}
r_l=\frac{r_1}{2^{l-1}}\,,\quad r_1<\frac{\nu_0}{2},
\end{equation}
\begin{equation}\label{E:kl}
k_l(t)\in \cC^1\big((0,\infty)\big);\quad \supp k'_l \subset (r_{l+1},r_l) ;\quad
k_l(t)=
\begin{cases}
0\,, &\quad t>r_{l+1}\\
1\,, &\quad t<r_{l},
\end{cases}
\end{equation}
\begin{equation}\label{E:lkderiv}
\big|k'_l(t)\big| \leq \frac{2}{r_l-r_{j+l}}
\end{equation}
and set

\begin{equation}\label{E:phil}
\phi_l(x)=k_l(\delta(x)),
\end{equation}

\begin{equation}\label{E:Bl}
B_l=\big\{ x\in \Omega \big| r_{l+1} < \delta(x) <r_l\big\}.
\end{equation}
Let $|m_l|(x)$ as given by \eqref{E:|m|} with $f=e^{2g(x)}\phi_l(x)$. By construction $B_l$ are disjoint and
\begin{equation}\label{E:suppphil}
\supp |m_l| \subset \supp \nabla\phi_l \subset B_l.
\end{equation}

The next lemma gives the essential self-adjointness of $H$ providing the following estimation, which has to be verified for each particular case, holds true.

\begin{lemma}\label{E:supml}
Suppose
 that (uniformly in $l$)
\begin{equation}\label{E:suplm}
\sup_{x \in \Omega}|m_l|(x) \leq C_m l, \quad C_m <\infty.
\end{equation}
 Then $H$ is essentially self-adjoint.
 \end{lemma}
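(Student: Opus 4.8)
The plan is to derive essential self-adjointness from the hypothesis \eqref{E:suplm} by the Agmon-type argument underlying \cite[Section 3]{NN2}, treating \eqref{E:suplm} purely as an input and not attempting to re-establish it. First I would reduce the claim to a statement about weak solutions. Since $B\ge 0$, the a priori bound \eqref{E:h}---which is the standing hypothesis licensing the use of Lemma \ref{BI} in this section, with $B$ furnished by the Hardy inequality of Theorem \ref{H}---gives $(\varphi,H\varphi)\ge E_0\|\varphi\|^2$ for all $\varphi\in\cC_0^\infty(\Omega)$, so $H$ is symmetric and bounded below by $E_0$ in $L^2(\Omega;\rho\,dx)$. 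For a semibounded symmetric operator, essential self-adjointness is equivalent to the triviality of the deficiency space at a real point below the bound; concretely, it suffices to fix one $E<E_0$ and show that the only $\psi_E\in L^2(\Omega;\rho\,dx)$ with $H^*\psi_E=E\psi_E$ is $\psi_E=0$. The identity $H^*\psi_E=E\psi_E$ means precisely that $\psi_E$ is a weak solution of $(H-E)\psi_E=0$, so Lemma \ref{BI} applies to it.

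Next I would insert the localizing family $\phi_l$ of \eqref{E:phil} into \eqref{E:BI}, keeping the Agmon weight $g$ fixed. With $f=e^{g}\phi_l$ the left-hand side of \eqref{E:BI} equals $\int_\Omega e^{2g}\phi_l^2|\psi_E|^2\rho\,dx$, while on the right-hand side the support property \eqref{E:suppphil} together with the hypothesis \eqref{E:suplm} gives $(\psi_E,|m_l|\psi_E)\le C_m l\,J_l$, where $J_l:=\int_{B_l}|\psi_E|^2\rho\,dx$. Hence, for every $l$, $\int_\Omega e^{2g}\phi_l^2|\psi_E|^2\rho\,dx\le \tfrac{2C_m l}{|E-E_0|}\,J_l$. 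Two structural facts are now in play: the cutoffs satisfy $0\le\phi_l\uparrow 1$ pointwise on $\Omega$ as $l\to\infty$ (the excluded layer $\{\delta<r_{l+1}\}$ shrinks to $\partial\Omega$), and the annuli $B_l$ are disjoint, so that $\sum_l J_l=\int_{\Gamma_{r_1}}|\psi_E|^2\rho\,dx\le\|\psi_E\|^2<\infty$.

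Then I would pass to the limit along a well-chosen subsequence. Because $\sum_l J_l<\infty$, the harmonic divergence $\sum_l (C_m l)^{-1}=\infty$ forces $\liminf_{l\to\infty} l\,J_l=0$: otherwise $J_l\gtrsim 1/l$ for large $l$ and $\sum_l J_l$ would diverge. Picking $l_k\to\infty$ with $l_k J_{l_k}\to 0$, the right-hand side of the inequality above tends to $0$ along $l_k$. On the left, $e^{2g}\phi_l^2|\psi_E|^2\rho\ge 0$ increases monotonically to $e^{2g}|\psi_E|^2\rho$, so by monotone convergence $\int_\Omega e^{2g}\phi_{l_k}^2|\psi_E|^2\rho\,dx\to\int_\Omega e^{2g}|\psi_E|^2\rho\,dx$. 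Therefore $\int_\Omega e^{2g}|\psi_E|^2\rho\,dx\le 0$, and since $e^{2g}\rho>0$ this yields $\psi_E=0$ a.e. on all of $\Omega$ (note that the full-domain limit makes any unique-continuation step unnecessary). Since $\ker(H^*-E)=\{0\}$ for this $E<E_0$, $H$ is essentially self-adjoint.

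The deduction itself is soft: all of the hard analysis is upstream, in Theorem \ref{H} and in the case-by-case verification of \eqref{E:suplm}. Within this lemma the one point that must be handled with care is the limiting step, and in particular the observation that the \emph{linear} growth allowed in \eqref{E:suplm} is exactly what keeps $\sum_l(C_m l)^{-1}$ divergent and thus supplies a subsequence annihilating the right-hand side; a super-linear bound such as $\sup_x|m_l|(x)\le C_m l^{2}$ would break this mechanism, since then $\liminf_l l^2 J_l$ need not vanish for summable $J_l$. This borderline matching is the quantitative shadow, at the level of the final deduction, of the criticality of the weak Hardy inequality---including the logarithmic correction $\tfrac12(\ln\tfrac1{\delta})^{-2}$ in \eqref{H:calH1}---that is what makes \eqref{E:suplm} achievable in the first place. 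I would also verify the reduction step itself, namely the standard equivalence for semibounded symmetric operators between essential self-adjointness and triviality of $\ker(H^*-E)$ for $E<E_0$, since it is what permits replacing the self-adjointness question by the vanishing of $L^2$ weak solutions.
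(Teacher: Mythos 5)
Your proposal is correct and follows essentially the same route as the paper: both feed the cutoffs $\phi_l$ into Lemma \ref{BI}, exploit $\supp |m_l|\subset B_l$ together with the linear bound \eqref{E:suplm}, use disjointness of the $B_l$ to get $\sum_l J_l\le\|\psi_E\|^2$ where $J_l=\int_{B_l}|\psi_E|^2\rho\,dx$, and let the divergence of $\sum_l 1/l$ annihilate the weak solution before invoking the standard criterion for semibounded operators. The only difference is harmless bookkeeping at the end: the paper localizes to an arbitrary compact $K$, where $e^{2g}\ge c_K>0$ and $\phi_l\equiv 1$ for large $l$, and sums the resulting inequalities over $l$, whereas you extract a subsequence with $l_k J_{l_k}\to 0$ and pass to the global limit by monotone convergence against the weight $e^{2g}$ --- two faces of the same harmonic-series mechanism.
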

 \begin{proof}
 Since, by assumption (see Lemma \ref{BI}), $g(x)$ is continuous on $\Omega$ for any compact $K \subset \Omega$
 
\begin{equation}\label{E:Kinf}
\inf_{x\in K}e^{2g(x)} =c_K >0.
\end{equation} 
 Fix now arbitrarily a compact $K$ in $\Omega$. Since $\Omega \setminus \Gamma_{r_l}$ exhaust $\Omega$ as $l\rightarrow \infty$ there exists an integer $L(K)$ such that
$K \subset \big\{x\in \Omega \big| \delta (x) > r_{L(K)} \big\}$.
By construction (see \eqref{E:kl}, \eqref{E:phil})
 
\begin{equation}\label{E:phi1}
\phi_l\big|_K=1 \; \text{for}  \; l \geq L(K).
\end{equation}
 Then from \eqref{E:Kinf}, \eqref{E:phil}, \eqref{E:supml} and Lemma \eqref{BI} for all $l\geq L(K)$ 

\begin{equation}\label{E:BIlK}
 \int_{K}\big|\Psi_E(x)\big|^2 \rho(x) dx\leq \frac{1}{c_K} \int_{K}\big|\Psi_E(x)\big|^2 e^{2g(x)}\phi_l(x)^2\rho(x) dx \leq
\frac{2C_ml}{c_K(E_0-E)}\int_{B_l}\big|\Psi_E(x)\big|^2 \rho(x) dx.
\end{equation}
Summing \eqref{E:BIlK} over $l$ from $L(K)$ to $N$ and taking into account that $B_l$ are disjoint one  obtains 
\begin{equation}
\frac{c_K\big(E_0-E\big)}{2 C_m}\sum_{l=L(K)}^N\frac{1}{l}
 \int_{K}\big|\Psi_E(x)\big|^2 \rho(x) dx\leq  \|\Psi_E\|^2
\end{equation}
which in the limit $N\rightarrow \infty$ gives
\begin{equation}
 \int_{K}\big|\Psi_E(x)\big|^2 \rho(x) dx=0.
 \end{equation}
 Since $K$ was arbitrary, $\Psi_E =0$ and the proof is finished by invoking the basic criterion for 
 (essential) self-adjointness \cite[Proposition 3.8]{Sch}.
\end{proof}
  
  Summing up, for a given class of $\rho,\; \bbD,\;V$ the problem of essential self-adjointness of $H$ boils down to the problem of finding $g(x)$ satisfying the conditions of Lemma \ref{BI} and Lemma
\ref{E:supml} . 
  
As  a warming up we begin with the simple case when $\bbP_x\bbD\bbP_x$ is strongly degenerate at the boundary of $\Omega$ (i.e.
$\bbP_x\bbD\bbP_x \rightarrow 0$ quickly enough as $\delta(x) \rightarrow 0$) and $V$ is bounded from below.  
\begin{theorem}\label{ESAC}
Suppose that there exists $\mu < \frac{\nu_0}{2}$ such that
\begin{equation}\label{E:com}
\sup_{x\in \Gamma_{\mu}}\big\| \bbP_x\bbD(x)\bbP_x\big\| \delta^{-\beta(x)}  <\infty; \quad \inf_{x\in \Gamma_{\mu}}\beta(x)
\geq 2; \quad \inf_{x \in \Omega} V(x)>-\infty.
\end{equation}
Then $H$ is essentialy self-adjoint.
\end{theorem}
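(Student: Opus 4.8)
The plan is to run the general scheme assembled above: produce the lower bound \eqref{E:h} required by Lemma \ref{BI}, exhibit an admissible weight $g$, and then verify the linear growth condition \eqref{E:suplm} so that Lemma \ref{E:supml} applies. The whole point of this ``warming up'' case is that strong degeneracy of the normal part $\bbP_x\bbD(x)\bbP_x$ together with boundedness of $V$ from below lets me take the \emph{trivial} weight $g\equiv 0$; no genuine Agmon exponential is needed, which is in sharp contrast with the non-degenerate regime.

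For the lower bound I would take $E_0=\inf_\Omega V$ and $B\equiv 0$. Since $\bbD(x)>0$ and $\rho(x)>0$ force $h_0[\varphi,\varphi]\geq 0$, and since $\inf_\Omega V>-\infty$ by the last condition in \eqref{E:com}, for every $\varphi\in\cC^\infty_0(\Omega)$ one has $(\varphi,(H-E_0)\varphi)=h_0[\varphi,\varphi]+\int_\Omega (V-E_0)\,|\varphi|^2\rho\,dx\geq 0$, which is exactly \eqref{E:h} with $B\equiv 0$. Then, fixing any $E<E_0$ and choosing $g\equiv 0$ (Lipschitz and continuous), the requirement \eqref{E:gB} becomes $0\leq |E-E_0|/2$, trivially true.

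With this choice $f=\phi_l$ and the expression \eqref{E:|m|} collapses to $|m_l|(x)=\nabla\phi_l(x)\cdot\big(\bbD(x)\nabla\phi_l(x)\big)$. Writing $\phi_l=k_l\circ\delta$ gives $\nabla\phi_l=k_l'(\delta)\nabla\delta$, and since the $(1,1)$-block $d=\bbP_x\bbD\bbP_x$ in \eqref{S:blockD} is the scalar $\nabla\delta\cdot\bbD\nabla\delta$ (as $|\nabla\delta|=1$), \eqref{S:AD} gives $\nabla\delta\cdot\bbD\nabla\delta=a\,\delta^{\beta}$. Hence $|m_l|(x)=k_l'(\delta(x))^2\,a(x)\,\delta(x)^{\beta(x)}$, supported in $B_l$.

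The only step where the hypothesis $\beta\geq 2$ enters is the resulting uniform bound, and I expect it to be the only real obstacle. On $B_l$ one has $\delta<r_l=r_1 2^{1-l}$ and, by \eqref{E:lkderiv}, $|k_l'|\leq 2/(r_l-r_{l+1})=2^{l+1}/r_1$; since $\beta(x)\geq 2$ on $\Gamma_\mu\supset B_l$ and $\delta<\nu_0<1$, I would estimate $\delta^{\beta(x)}\leq\delta^2\leq r_l^2$, so that $k_l'(\delta)^2\,\delta^{\beta}\leq (2^{l+1}/r_1)^2 r_l^2=16$ and, using $a\in L^\infty(\Gamma_{\nu_0})$ from \eqref{S:aainvers}, $\sup_x|m_l|(x)\leq 16\|a\|_{L^\infty}=:C_m$ uniformly in $l$. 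In particular \eqref{E:suplm} holds with room to spare (since $C_m\leq C_m l$), and Lemma \ref{E:supml} yields the essential self-adjointness of $H$. The geometric choice $r_l\sim 2^{-l}$ makes $k_l'^2\sim 4^l$ while $\delta^{\beta}\lesssim 2^{-\beta l}$, so the product stays bounded \emph{exactly} when $\beta\geq 2$; for $\beta<2$ it would blow up like $2^{(2-\beta)l}$, violating the linear-in-$l$ budget of \eqref{E:suplm}, which is precisely why the non-degenerate regime requires the full weak anisotropic Hardy inequality of Theorem \ref{H} and a nontrivial weight $g$.
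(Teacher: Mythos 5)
Your proof is correct and takes essentially the same route as the paper's: the trivial weight $g\equiv 0$ and $B\equiv 0$ (the paper normalizes $V\geq 0$ and takes $E_0=0$ where you take $E_0=\inf_\Omega V$, which is equivalent), followed by the uniform bound on $|m_l|$ coming from the dyadic choice $r_l\sim 2^{-l}$ combined with $\delta^{\beta(x)}\leq\delta^2$ on $B_l\subset\Gamma_\mu$. The only cosmetic differences are that you make explicit the constant ($16\,\|a\|_{L^\infty}$, with $r_1\leq\mu$ so that $B_l\subset\Gamma_\mu$, where the paper simply sets $r_1=\mu$ and writes $C_3<\infty$) and that you read the normal block bound off Assumption $\mathbf{A_D}$ rather than directly from \eqref{E:com}, which give the same estimate here.
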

\begin{remark}
The fact that strong degeneracy of $\bbD$ ensure the essential self-adjointness of $H_0$ is well known. In particular the fact that if $\sup_{x\in \Gamma_{\mu}}\big\| \bbD(x)\big\| \delta^{-\beta(x)}  <\infty$ and $ \inf_{x\in \Gamma_{\mu}}\beta(x)
\geq 2$ then, without any conditions on $\rho$ and on regularity of $\partial \Omega$, $H_0$ is essentially self-adjoint 
follows from general results on essential self-adjointness of Laplace-Beltrami operators on complete weighted Riemannian manifolds (see e.g. \cite{BMS}, \cite{NN2}). \end{remark}
\begin{proof}
In this case the verification  of Lemma \ref{BI} and Lemma
\ref{E:supml} is immediate.  
 Without restricting the generality one can suppose that $V(x) \geq 0$ on $\Omega$. In this case we choose $g=0$ so that the condition\eqref{E:gB} is trivially satisfied with $E_0=0$ and $B(x) =0$.
The condition \eqref{E:h} is also trivially satisfied since $H_0 \geq 0,\; V\geq 0$.
It remains to verify Lemma \ref{E:supml}.  We take $r_1 =\mu$ in  \eqref{E:rl}. From \eqref{E:|m|}
(recall that $g=0$)
\begin{equation}
|m_l|(x) =\nabla \phi_l(x)\cdot \big(\bbD(x)\nabla \phi_l(x)\big)
\end{equation}
hence from  \eqref{E:lkderiv}, \eqref{E:phil} and \eqref{E:com}
 (recall that $\big|\nabla \delta (x)
\big|=1$)
one obtains that

\begin{equation}
\sup_{x \in \Omega}|m_l|(x) =:C_3 <\infty.
\end{equation}
and the proof is finished.
\end{proof}

We turn now to the  results concerning essential self-adjointness of $H$ when \\
$\inf_{x\in \Gamma_{\nu_0}}\beta(x) \geq 2$ does not hold true. Consider first the case 
when  $\nabla \beta(x)=0$.

\begin{theorem}\label{Mbetaconst} Suppose that
 Assumption A  and Assumption Q hold true, $\beta(x)= \beta <2$ and in addition there exists
 \begin{equation}\label{E:mu}
  0<\mu <\frac{\nu_0}{2}
 \end{equation}
 such that
  on $\Gamma_{\mu}$
\begin{equation}\label{E:Vbetaconst}
V(x)\geq 
 \delta(x)^{\beta-2}\Big[v(x)-w_{\mu}\delta(x)^\frac{1-s}{2}\Big];\; 0\leq w_{\mu} <\infty;\;
  v\in L^{\infty}(\Gamma_{\mu}; \bbR).
\end{equation}
Then:

i. If
\begin{equation}\label{E:ESA}
\inf_{\Gamma_{\mu}}\frac{ \big(1-q(x)\big(\beta+\gamma(x)-1\big)^2  +
\frac{4v(x)}{a(x)}}{\big(\beta-2\big)^2 } \geq 1
\end{equation}
$H$ is essentially self-adjoint.

ii. If
\begin{equation}\label{E:qsmall}
\sup_{x \in \Gamma_{\frac{\nu_0}{2}}}q(x)\delta(x)^{\frac{s-1}{2}}  =: Q_{\nu_0}<\infty
\end{equation}
and

 \begin{equation}\label{E:ESAC}
\inf_{\Gamma_{\mu}} \frac{\big(\beta(x)+\gamma(x)-1\big)^2  +
\frac{4v(x)}{a(x)}}{\big(\beta(x)-2\big)^2 } \geq 1,
\end{equation}
$H$ is essentially self-adjoint

\end{theorem}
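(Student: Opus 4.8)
The plan is to verify the hypotheses of Lemma~\ref{BI} and Lemma~\ref{E:supml} and then invoke the latter; the work is to produce the barrier $B$ of \eqref{E:h} and an Agmon weight $g$ satisfying \eqref{E:gB} and the bound $\sup_x|m_l|\le C_m l$. The barrier will come from adding $V$ to the weak anisotropic Hardy inequality of Theorem~\ref{H}, and the weight will be the \emph{critical} logarithmic weight $g=\tfrac{2-\beta}{2}\ln\delta$ near $\dOm$, extended to a bounded Lipschitz function on the rest of $\Omega$.

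First I would establish \eqref{E:h}. Fixing $\nu<\nu_1$ with $\nu\le\mu$, I add $\int V|\varphi|^2\rho\,dx$ to \eqref{H:Hardy1}, use the lower bound \eqref{E:Vbetaconst} on $\Gamma_\nu$, and note that $V$ is bounded below on the compact set $\Omega\setminus\Gamma_\nu$ (as $V\in L^\infty_{loc}$). This yields a bound $(\varphi,H\varphi)\ge -C_0\|\chi_{\Omega\setminus\Gamma_\nu}\varphi\|^2+\int_{\Gamma_\nu}\big(\cH_0+\delta^{\beta-2}(v-w_\mu\delta^{(1-s)/2})\big)|\varphi|^2\rho\,dx$. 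Taking $E_0=-C_0$ and letting $B$ equal the $\Gamma_\nu$-integrand (and $0$ elsewhere) casts this as \eqref{E:h}; that $B\ge0$ on $\Gamma_\nu$ after shrinking $\nu$ follows from \eqref{E:ESA}, since by \eqref{H:calH1} the leading part of $\cH_0+\delta^{\beta-2}v$ is $\tfrac14 a\delta^{\beta-2}\big[(1-q)(\beta+\gamma-1)^2+\tfrac{4v}{a}\big]\ge\tfrac14 a(\beta-2)^2\delta^{\beta-2}$, $a$ is bounded below by \eqref{S:aainvers}, and the $w_\mu$-term is of strictly higher order in $\delta$.

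Next I would choose the weight $g=\tfrac{2-\beta}{2}\ln\delta$ on $\Gamma_\nu$, constant outside; this is locally Lipschitz and continuous on $\Omega$. Since $\nabla\delta\cdot(\bbD\nabla\delta)=d=a\delta^{\beta}$ by \eqref{S:blockD} and \eqref{S:AD}, one computes $\nabla g\cdot(\bbD\nabla g)=(\tfrac{2-\beta}{2})^2 a\delta^{\beta-2}$ on $\Gamma_\nu$ (and $0$ outside), which matches \emph{exactly} the leading term of $B$. Hence $B-\nabla g\cdot(\bbD\nabla g)\ge \delta^{\beta-2}\big[\tfrac18(1-q)a(\ln\tfrac1\delta)^{-2}-w_\mu\delta^{(1-s)/2}\big]$ on $\Gamma_\nu$, and here is the crux: a power of $\delta$ beats any power of $\log$, so the logarithmic term $\tfrac12(\ln\tfrac1\delta)^{-2}$ retained in $\cH_0$ dominates the $w_\mu$ correction for small $\delta$, the bracket is nonnegative, and \eqref{E:gB} holds (trivially outside $\Gamma_\nu$). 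For part (ii), replacing \eqref{E:ESA} by \eqref{E:ESAC} leaves an extra defect $\tfrac14 a\delta^{\beta-2}q(\beta+\gamma-1)^2$; the $q$-smallness bound \eqref{E:qsmall} makes this again of order $\delta^{\beta-2+(1-s)/2}$, absorbed by the same logarithmic term. Finally, for Lemma~\ref{E:supml} I take $r_1\le\nu$ so that $\supp|m_l|\subset B_l\subset\Gamma_\nu$; then $g=\tfrac{2-\beta}{2}\ln\delta$ there, $\nabla\phi_l=k_l'(\delta)\nabla\delta$ with $|k_l'|\delta\le 4$ on $B_l$, and inserting $e^{2g}=\delta^{2-\beta}$, $\nabla g\cdot\bbD\nabla g\propto\delta^{\beta-2}$, $\nabla\phi_l\cdot\bbD\nabla\phi_l\propto\delta^{\beta}$ into \eqref{E:|m|} shows all powers of $\delta$ cancel, so $|m_l|\le C<\infty$ uniformly in $l$ — far stronger than $\le C_m l$.

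The main obstacle is the borderline nature of the weight: the exponent $\tfrac{2-\beta}{2}$ is forced. A smaller coefficient leaves the powers of $\delta$ in \eqref{E:|m|} uncancelled, so $|m_l|$ grows exponentially in $l$ and Lemma~\ref{E:supml} fails; a larger one makes $\nabla g\cdot\bbD\nabla g$ exceed $B$ by a term blowing up at $\dOm$, so \eqref{E:gB} fails. Thus there is no slack in $g$ itself, and the whole argument turns on extracting the required slack from the subleading logarithmic term in $\cH_0$ — precisely the term Theorem~\ref{H} was engineered to retain.
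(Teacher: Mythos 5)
Your proposal is correct and follows essentially the same route as the paper: Lemma~\ref{BI} plus Lemma~\ref{E:supml}, with the barrier $B$ built from Theorem~\ref{H} together with \eqref{E:Vbetaconst}, the critical weight $g=\tfrac{2-\beta}{2}\ln\delta$ near $\partial\Omega$ (the paper cuts it off with the smooth $\Psi$ rather than by truncation), the logarithmic term of $\cH_0$ absorbing both the $w_\mu$ correction and, in part (ii), the $q$-defect via \eqref{E:qsmall}, and the exact cancellation of powers of $\delta$ giving a uniform bound on $|m_l|$. The only difference is bookkeeping: the paper performs the absorption before defining $B$ (so $B$ contains neither the log term nor the $w_\mu$ term and \eqref{E:gB} follows directly from \eqref{E:ESA}), whereas you keep those terms inside $B$ and absorb them when verifying \eqref{E:gB} — an equivalent rearrangement.
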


\begin{proof}

i.  As in the proof of Theorem \ref{ESAC} one uses Lemma \ref{BI} but in this case one needs a nontrivial $g(x)$. The chioce of $g(x)$ is as follows:
\begin{equation}\label{E:gconst}
 g(x)=\Big(\frac{2-\beta}{2} \ln \delta(x)\Big)\Psi(x)
\end{equation}
where $\Psi$ (see \eqref{H:Psi}) is  the same as in the proof of Theorem \ref{H} .
From the fact that $g \in \cC^1(\Omega; \bbR^d)$,  $\bbD \in \cC(\Omega; \bbR^{d\times d})$
it follows that for all $\nu >0$
\begin{equation}\label{E:gnuconst}
\sup_{\Omega \setminus \Gamma_{\nu}} \nabla g(x)\cdot \big(\bbD(x)\nabla g(x)\big)=: g_{\nu} <\infty.
\end{equation} 

Next, we compute  $\nabla g(x)\cdot \big(\bbD(x)\nabla g(x)\big)$on $\Gamma_{\mu}$.
 For $\delta(x) <\frac{\nu_0}{2}$, $\Psi(x)=1$ hence for $x \in \Gamma_{\mu}$\begin{equation} \label{E:gradgconst}
\nabla g(x)= \frac{2-\beta}{2}\delta(x)^{-1} \nabla\delta(x).
\end{equation}
From \eqref{S:blockD} and \eqref{S:AD}
\begin{equation}
\nabla \delta(x)\cdot \big(\bbD(x)\nabla \delta(x)\big)=a(x)\delta(x)^{\beta},
\end{equation}
which together with \eqref{E:gradgconst} leads to
\begin{equation}\label{E:gradgMconst}
\nabla g(x)\cdot \big(\bbD(x)\nabla g(x)\big)=a(x)
\Big(\frac{2-\beta}{2}\Big)^2\delta(x)^{\beta-2} .
\end{equation}

The next step is to show that using Theorem \ref{H} and \eqref{E:gradgMconst} one can choose
$B(x);\; E_0,\; E$ such that \eqref{E:h} and \eqref{E:gB} hold true.
From \eqref{S:V} it follows that for all $\nu >0$
\begin{equation}\label{E:Vnu}
\text{ess-sup}_{x\in \Omega \setminus \Gamma_{\nu}} \big|V(x)\big| =: V_{\nu} < \infty
\end{equation}\label{E:infa}
and from \eqref{S:aainvers} for $0<\nu < \mu$
\begin{equation}\label{E:infa}
0<a_{\mu}:= \inf_{\Gamma_{\mu}}a(x) \leq \inf_{\Gamma_{\nu}}a(x).
\end{equation}

From  \eqref{E:Vbetaconst}, \eqref{E:Vnu}, \eqref{E:infa} and Theorem \ref{H}, for $0<\nu <\min\{\nu_1,\mu\}$
(see Theorem \ref{H} for $\nu_1$ and recall that on $\Gamma_{\mu}$, $\delta(x) <1$)
\begin{align}\label{E:hVbigger}
&h\big[\varphi,\varphi\big] \geq -\big( c_1(\nu) +
V_{\nu}\big)\| \varphi \|^2+     
\int_{\Gamma_{\nu}}a(x)  \delta(x)^{\beta -2}  
 \Big[\big(1-q(x)\big)\Big(\frac{\beta+ 
 \gamma (x) -1}{2}\Big)^2+ \nonumber \\ \frac{v(x)}{a(x)}+   
 & \frac{1-q(x)}{8}\frac{1}{
\ln \Big(\frac{1}{\delta(x)}\Big)^2}
 - \frac{w_{\mu}}{a_{\mu}}\delta(x)^{\frac{1-s}{2}}\Big] \big|\varphi (x)\big|^2 \rho(x)dx.
\end{align}

Since $\frac {1}{\ln \frac{1}{\delta(x)}} \rightarrow 0$ as $\delta(x)  \rightarrow 0$ slower than any
(strictly) positive power of $\delta(x)$ and $\frac{1}{1-q} \in L^{\infty}(\Gamma_{\mu})$ there exists 

\begin{equation}\label{E:nu2}
0<\nu_2 < \min\{\mu, \nu_1\}
\end{equation}
such  that on $\Gamma_{\nu_2}$  
\begin{equation}\label{E:lnconst}
\frac{1-q(x)}{8}\frac{1}{
\Big(\ln \frac{1}{\delta(x)}\Big)^2}
 - \frac{w_{\mu}}{a_{\mu}}\delta(x)^{\frac{1-s}{2}} \geq 0,
\end{equation}
hence on $\Gamma_{\nu_2}$, 
\begin{align}\label{E:hVbetaconst}
&h\big[\varphi,\varphi\big] \geq -\big( c_1(\nu_2) +
V_{\nu_2}\big)\| \varphi \|^2+     
\int_{\Gamma_{\nu_2}}a(x)  \delta(x)^{\beta -2}\nonumber \\  
 &\Big[\big(1-q(x)\big)\Big(\frac{\beta+ 
 \gamma (x) -1}{2}\Big)^2+  \frac{v(x)}{a(x)}\Big]   
  \big|\varphi (x)\big|^2 \rho(x)dx.
\end{align}
Now, from \eqref{E:hVbetaconst},  \eqref{E:h} holds true by choosing:
\begin{equation}\label{E:E0}
E_0 =-\big(c_1(\nu_2)+V_{\nu_2} \big)
\end{equation}
and 
\begin{equation}\label{E:Bbetaconst}
B(x)= 
\begin{cases}
a(x)  \delta(x)^{\beta -2}  
 \Big[\big(1-q(x)\big)\Big(\frac{\beta+ 
\gamma (x) -1}{2}\Big)^2+  \frac{v(x)}{a(x)}\Big]  & \text{for}\; x\in \Gamma_{\nu_2} \\
0  &\text{otherwise}
\end{cases}.
\end{equation}
Then by choosing 
\begin{equation}
\frac{E_0 -E}{2} = g_{\nu_2},
\end{equation}
\eqref{E:gB} follows from \eqref{E:ESA}, \eqref{E:gnuconst}  \eqref{E:gradgMconst} and \eqref{E:Bbetaconst}.
The last step in the proof of Theorem \ref{Mbetaconst} is to verify \eqref{E:supml} in Lemma
\ref{E:supml} where in the definition of $\phi_l$ we take $r_1 <\nu_2$.

 From \eqref{E:lkderiv}, \eqref{E:phil}  and Assumption A:
\begin{equation}
\nabla \phi_l(x)\cdot \big(\bbD(x)\nabla \phi_l(x)\big)=
\big(k_l'(\delta(x)\big)^2\nabla \delta(x)\cdot \big(\bbD(x)\nabla \delta(x)\big) \leq
\frac{4}{r^2_{l+1}} a(x)\delta (x)^{\beta(x)}
\end{equation}
hence uniformly in $l$:
\begin{equation}\label{E:cphi}
\sup_{x\in B_l }\nabla \phi_l(x)\cdot \big(\bbD(x)\nabla \phi_l(x)\big)\delta(x)^{2-\beta} \leq c_{\phi} <\infty.
\end{equation}

Further, from \eqref{E:gradgMconst} (again uniformly in $l$)

\begin{equation}\label{E:gMconst}
\sup_{x\in B_l }\nabla g(x)\cdot \big(\bbD(x)\nabla g(x)\big)\delta(x)^{2-\beta} \leq c_{g} <\infty.
\end{equation}
On the other hand from \eqref{E:gconst} on $\Gamma_{\mu}$:
\begin{equation}\label{E:expg}
e^{2g(x)} =\delta(x)^{2-\beta} .
\end{equation}
Plugging  \eqref{E:cphi},\eqref{E:gMconst}, \eqref{E:expg} into \eqref{E:|m|} one obtains
(recall that $\supp |m|_l \subset B_l \subset \Gamma_{\mu}$) \eqref{E:suplm}
and the proof of Theorem \ref{Mbetaconst} i. is finished.

ii. Rewrite \eqref{E:hVbigger} as
\begin{align}\label{E:hVqbigger}
&h\big[\varphi,\varphi\big] \geq -\big( c_1(\nu) +
V_{\nu}\big)\| \varphi \|^2+     
\int_{\Gamma_{\nu}}a(x)  \delta(x)^{\beta -2}  
 \Big[\Big(\frac{\beta+ 
 \gamma (x) -1}{2}\Big)^2+ \frac{v(x)}{a(x)}-\nonumber \\  
 & -q(x)\Big(\frac{\beta+ 
 \gamma (x) -1}{2}\Big)^2 + \frac{1-q(x)}{8}\frac{1}{
\ln \Big(\frac{1}{\delta(x)}\Big)^2}
 - \frac{w_{\mu}}{a_{\mu}}\delta(x)^{\frac{1-s}{2}}\Big] \big|\varphi (x)\big|^2 \rho(x)dx.
\end{align}
From \eqref{E:qsmall} (recall that $\gamma \in L^{\infty}(\Gamma_{\nu_0}$) it follows again that
there exists 
$0<\nu_2 < \min\{\mu, \nu_1\}$
such  that on $\Gamma_{\nu_2}$  
\begin{equation}\label{E:lnconstq}
-q(x)\Big(\frac{\beta+ 
 \gamma (x) -1}{2}\Big)^2+\frac{1-q(x)}{8}\frac{1}{
\Big(\ln \frac{1}{\delta(x)}\Big)^2}
 - \frac{w_{\mu}}{a_{\mu}}\delta(x)^{\frac{1-s}{2}} \geq 0.
\end{equation}
The rest of the argument goes unchanged.

\end{proof}

We state now our main result concerning essential self-adjointness oh $H$ in the case when 
neither $\inf_{x\in \Gamma_{\nu_0}}\beta(x)
\geq 2$  nor $\beta(x) = constant$  holds true.  For $\eta < \nu_0$ let
\begin{equation}\label{E:gammaeta}
\Gamma_{\eta,-}=\big\{x\in \Gamma_{\eta} \big| \beta(x) <2\big\}; \quad
\Gamma_{\eta,+}=\big\{x\in \Gamma_{\eta} \big| \beta(x) \geq 2\big\}.
\end{equation}

\begin{theorem}\label{M}
Suppose Assumptions A and Q hold true,
 and in addition there exists
 \begin{equation}\label{E:muM}
 0<\mu <\frac{\nu_0}{2}
 \end{equation}
 
 such that
\begin{equation}\label{E:DM}
\sup_{x\in \Gamma_{\mu,-}}\|\bbD(x)\|\delta(x)^{-\big(\beta(x) +s_{\beta}-s\big)} <\infty,
\end{equation}
\begin{equation}\label{E:VM}
V(x)\geq 
\begin{cases}
0 \quad & \text{for} \;x \in \Gamma_{\mu,+}\\
 V(x)\geq 
 \delta(x)^{\beta-2)}\\\Big[v(x)-w_{\mu}\delta(x)^\frac{1-s}{2}\Big];\; 0\leq w_{\mu} <\infty;\;
  v\in L^{\infty}(\Gamma_{\mu}; \bbR)
  & \text{for}\;x\in  \Gamma_{\mu,-}
\end{cases}.
\end{equation}
Then if
\begin{equation}\label{E:ESAM}
\inf_{\Gamma_{\mu,-}} \frac{\big(1-q(x)\big)\big(\beta(x)+\gamma(x)-1\big)^2  +
\frac{4v(x)}{a(x)}}{\big(\beta(x)-2\big)^2 } \geq 1
\end{equation}
$H$ is essentially self-adjoint.
\end{theorem}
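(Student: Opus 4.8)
The plan is to follow the route of Theorems \ref{ESAC} and \ref{Mbetaconst}, verifying the hypotheses of Lemma \ref{BI} and Lemma \ref{E:supml}, but with a weight $g$ that interpolates between the two regimes separated by $\Gamma_{\mu,\pm}$. On $\Gamma_{\mu,+}$, where $\beta(x)\geq 2$ and $V\geq 0$, the strong-degeneracy argument of Theorem \ref{ESAC} works with $g\equiv 0$; on $\Gamma_{\mu,-}$ one needs the logarithmic weight of Theorem \ref{Mbetaconst}. I would therefore take
\[
g(x)=\frac{\big(2-\beta(x)\big)_+}{2}\,\big(\ln\delta(x)\big)\,\Psi(x),
\]
with $\Psi$ as in \eqref{H:Psi} and $(\cdot)_+$ the positive part. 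Since $\beta\in\cC^1$, this $g$ is Lipschitz on $\Omega$ (as required by Lemma \ref{BI}), it vanishes identically on $\Gamma_{\mu,+}$, and near $\partial\Omega$ on $\Gamma_{\mu,-}$ it equals $\tfrac{2-\beta(x)}{2}\ln\delta(x)$.

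First I would establish the lower bound \eqref{E:h}. Applying Theorem \ref{H} and adding $V$, for $\nu<\min\{\nu_1,\mu\}$ one gets $h[\varphi,\varphi]\geq -(c_1(\nu)+V_\nu)\|\varphi\|^2+\int_{\Gamma_\nu}(\cH_0+V)|\varphi|^2\rho\,dx$. On $\Gamma_{\nu,+}$ both $\cH_0\geq 0$ (from \eqref{H:calH1}) and $V\geq 0$, so this region contributes nonnegatively and needs no further work. On $\Gamma_{\nu,-}$ I would insert \eqref{E:VM} and factor out $\tfrac14 a\delta^{\beta-2}$ exactly as in \eqref{E:hVbigger}, which identifies the barrier $B$ (supported in $\Gamma_{\nu_2,-}$) and the constant $E_0$, the leading coefficient being $(1-q)(\beta+\gamma-1)^2+4v/a$.

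The heart of the argument is \eqref{E:gB} on $\Gamma_{\mu,-}$, where $\nabla g=\tfrac{2-\beta}{2}\delta^{-1}\nabla\delta-\tfrac12(\ln\delta)\nabla\beta$. Using $\nabla\delta\cdot(\bbD\nabla\delta)=d=a\delta^{\beta}$, the principal part of $\nabla g\cdot(\bbD\nabla g)$ is $\big(\tfrac{2-\beta}{2}\big)^2 a\delta^{\beta-2}$, plus cross and quadratic terms in $\nabla\beta$. This is precisely where the anisotropic hypothesis \eqref{E:DM} is indispensable: in contrast to Theorem \ref{Mbetaconst}, $\nabla\beta$ need not be aligned with $\nabla\delta$ and may probe the large block $d_{22}$, so I would bound these terms via $\|\bbD\|\leq C\delta^{\beta+s_\beta-s}$ together with $|\nabla\beta|\leq C\delta^{-s_\beta}$ from \eqref{S:ADsup}. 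By Cauchy--Schwarz in the $\bbD$-inner product, the quadratic term is $O\big((\ln\delta)^2\delta^{\beta-s_\beta-s}\big)$ and the cross term is $O\big(|\ln\delta|\,\delta^{\beta-1-(s_\beta+s)/2}\big)$; both are $o(\delta^{\beta-2})$ exactly because $s_\beta\leq s<1$ forces $2-s_\beta-s>0$. Hence $\nabla g\cdot(\bbD\nabla g)=\big(\tfrac{2-\beta}{2}\big)^2 a\delta^{\beta-2}\big(1+o(1)\big)$, and comparing with $B$ the inequality \eqref{E:gB} reduces, after dividing by $\tfrac14 a\delta^{\beta-2}$, to \eqref{E:ESAM}. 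As in Theorem \ref{Mbetaconst}, even when \eqref{E:ESAM} holds with equality the strictly positive term $\tfrac18(1-q)a\delta^{\beta-2}(\ln\tfrac1\delta)^{-2}$ in $\cH_0$ dominates all remaining negative contributions as well as the constant margin $g_{\nu_2}=|E-E_0|/2$, after possibly shrinking $\nu$.

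Finally I would verify \eqref{E:suplm}. Since $\nabla\phi_l=k_l'(\delta)\nabla\delta$ is parallel to $\nabla\delta$, it picks out only the block $d=a\delta^{\beta}$, so $\nabla\phi_l\cdot(\bbD\nabla\phi_l)\leq 4r_{l+1}^{-2}a\delta^{\beta}$ uniformly on all of $\Gamma_\mu$, using only Assumption A. On $\Gamma_{\mu,+}$, where $g=0$ and $\beta\geq 2$, this already gives $|m_l|=\nabla\phi_l\cdot(\bbD\nabla\phi_l)=O(r_{l+1}^{\beta-2})=O(1)$; on $\Gamma_{\mu,-}$, using $e^{2g}=\delta^{2-\beta}$ and the bound $\nabla g\cdot(\bbD\nabla g)=O(\delta^{\beta-2})$ established above, the two terms in \eqref{E:|m|} scale like $\delta/r_{l+1}$ and $(\delta/r_{l+1})^2$, both $O(1)$ on $B_l$ since there $r_{l+1}<\delta<r_l=2r_{l+1}$. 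Thus $\sup_x|m_l|\leq C$, which certainly implies \eqref{E:suplm}, and Lemma \ref{E:supml} yields essential self-adjointness. I expect the one genuine obstacle to be the step of the previous paragraph: showing that \eqref{E:DM} is exactly strong enough to absorb the $\nabla\beta$ terms produced by the $x$-dependence of the exponent $\beta(x)$ --- the new feature absent from Theorem \ref{ESAC} (where $g=0$) and from Theorem \ref{Mbetaconst} (where $\nabla\beta=0$).
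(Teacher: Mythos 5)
Your proposal is correct, and it follows the paper's overall architecture (Theorem \ref{H} for the barrier, Lemma \ref{BI} and Lemma \ref{E:supml} for the Agmon-type estimate, a logarithmic weight vanishing on $\Gamma_{\mu,+}$ and equal to $\tfrac{2-\beta}{2}\ln\delta$ on $\Gamma_{\mu,-}$), but it resolves the crux of the theorem --- the cross and quadratic $\nabla\beta$ terms in $\nabla g\cdot(\bbD\nabla g)$ --- by a different mechanism. The paper does not use the pure weight $g=\tfrac{(2-\beta)_+}{2}\ln\delta\,\Psi$: it takes $g=\min\{0,\tilde g\}$ with $\tilde g$ as in \eqref{E:tildegM}, i.e.\ with an extra term $\lambda\ln\ln\tfrac{1}{\delta}$, $\lambda$ as in \eqref{E:lambda}. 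The point (Lemma \ref{G}) is that the main term of $\nabla g\cdot(\bbD\nabla g)$ then carries the factor $\big(1-\tfrac{\lambda}{\ln(1/\delta)}\big)^2$, whose deficit $\sim 2\lambda/\ln\tfrac{1}{\delta}$ beats the power-law error $c_4(\nu_0)\delta^{(1-s)/2}$ produced by the $\nabla\beta$ terms (estimated, as you do, from \eqref{E:DM} and \eqref{S:ADsup}); hence the paper can take $B$ to be only the leading term \eqref{E:BM}, the $(\ln\tfrac{1}{\delta})^{-2}$ piece of $\cH_0$ having been spent in \eqref{E:va} on the defect $w_\mu\delta^{(1-s)/2}$ of $V$. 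You instead keep the pure logarithmic weight and retain $\tfrac18(1-q)a\delta^{\beta-2}(\ln\tfrac{1}{\delta})^{-2}$ inside $B$, letting it absorb both the defect of $V$ and the $\nabla\beta$ errors; your exponent count ($2-s_\beta-s>0$ and $1-(s_\beta+s)/2>0$, so these errors are power-law small while the absorber is only logarithmically small) is exactly the count behind \eqref{E:ln} and Lemma \ref{G}, so this is sound, and the use of Cauchy--Schwarz in the $\bbD$-inner product for the cross term is a legitimate variant of the paper's direct $\|\bbD\|\,|\nabla\beta|$ bound. Your route buys a simpler $g$ with $e^{2g}=\delta^{2-\beta}$ exactly, hence $\sup_x|m_l|=O(1)$ rather than the paper's $O(l)$ (so the full strength of Lemma \ref{E:supml} is not even needed), and it shows the subleading term of Theorem \ref{H} doing real work; the paper's $\ln\ln$-device keeps the Hardy barrier and the Agmon weight decoupled, which is the mechanism reused in Proposition \ref{PR} and Theorem \ref{C:ESAlog}, where the barrier's logarithmic surplus is needed for other purposes. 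Two minor blemishes: ``Lipschitz on $\Omega$'' should be read locally, as in the paper, since $|\nabla g|\sim\delta^{-1}$ blows up at $\partial\Omega$ (only local Lipschitz continuity plus Rademacher is used, $f=e^g\phi_l$ being compactly supported); and the clause about the log term dominating ``the constant margin $g_{\nu_2}=|E-E_0|/2$'' is garbled --- that margin sits on the favorable side of \eqref{E:gB} and is what handles $\Omega\setminus\Gamma_\nu$, exactly as you use it elsewhere.
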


\textit{Proof} The proof of Theorem \ref{M} follows the proof of Theorem \ref{Mbetaconst}.
 As in the proof of Theorem \ref{Mbetaconst} we use Lemma \ref{BI} and Lemma \ref{E:supml} but in this case we need a more elaborate
choice of $g(x)$. 
The chioce of $g(x)$ is as folloows:
\begin{equation}\label{E:gM}
g(x)= \min \big\{ 0, \tilde g(x)\big\}
\end{equation}
with
\begin{equation}\label{E:tildegM}
\tilde g(x)=\frac{2-\beta(x)}{2}\Big( \ln \delta(x)+\lambda \ln \ln \frac{1}{\delta(x)}\Big)\Psi(x)
\end{equation}
where $\Psi$ is  the same as in the proof of Theorem \ref{H} (see \eqref{H:Psi}) and
\begin{equation}\label{E:lambda}
0<\lambda \leq \min \Big\{1,\frac{1}{1+ \sup_{x \in \Gamma_{\nu_0}}|\beta(x)-2|}\Big\}.
\end{equation}

On $\Gamma_{\frac{\nu_0}{2}}$ one has
\begin{equation}\label{E:gnu0-}
g(x) =
\begin{cases}
0 \quad & \text{for} \;x \in \Gamma_{\frac{\nu_0}{2},+}\\
\frac{2-\beta(x)}{2}\Big( \ln \delta(x)+\lambda \ln \ln \frac{1}{\delta(x)}\Big)
&\text{for}\;x\in  \Gamma_{\frac{\nu_0}{2},-}
\end{cases}.
\end{equation}
Indeed (recall that $\nu_0 \leq e^{-1}$ and $\Psi\big|_{\Gamma_{\frac{\nu_0}{2}}}=1$) 
since $ \ln \delta(x)+\lambda \ln \ln \frac{1}{\delta(x)}<0$ it follows that
$g(x) <0$  if and only if $\beta(x)-2 <0$. Moreover since $g$ is Lipschitz
continuous, by Rademacher theorem $\nabla g(x)$ exists a.e. and, whenever it exists, equals 
$ \nabla \tilde g(x)$ if $\beta(x)-2 <0$ and vanishes otherwise. The estimation involving $g(x)$ we need is provided by the next lemma.
\begin{lemma}\label{G}
There exist $0\leq  c_4(\nu_0),\; c_5(\nu_0) <\infty$ such that on $\Gamma_{\frac{\nu_0}{2}}$:

\begin{equation}\label{E:gradgMM}
\nabla g(x)\cdot\big(\bbD(x)\nabla g(x)\big)
\begin{cases}
= 0 \quad & \text{for} \;x \in \Gamma_{\frac{\nu_0}{2},+}\\
\leq c_5(\nu_0) +\big(\frac{\beta (x)-2}{2}\big)^2a(x)\delta(x)^{\beta(x)-2} \\\Big[
\big(1-\frac{\lambda}{\ln\frac{1}{\delta(x)}}\big)^2 +c_4(\nu_0) \delta(x)^{\frac{1-s}{2}}\Big]
 & \text{for}\;x\in  \Gamma_{\frac{\nu_0}{2},-}
\end{cases}.
\end{equation}

\end{lemma}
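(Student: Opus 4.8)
The plan is to differentiate $g$ explicitly on each of the two pieces $\Gamma_{\frac{\nu_0}{2},\pm}$ and estimate the quadratic form $\nabla g\cdot(\bbD\nabla g)$. On $\Gamma_{\frac{\nu_0}{2},+}$ we have $g\equiv 0$ by \eqref{E:gnu0-}, so $\nabla g=0$ a.e.\ (by the Rademacher argument recorded before the lemma) and the first line of \eqref{E:gradgMM} is trivial. On $\Gamma_{\frac{\nu_0}{2},-}$, where $\Psi=1$, we have $g=\frac{2-\beta}{2}\,\psi$ with $\psi:=\ln\delta+\lambda\ln\ln\frac1\delta$; differentiating and using $|\nabla\delta|=1$ from Lemma \ref{GT} gives $\nabla g=A+B$, where
\[
A(x):=\frac{2-\beta(x)}{2\,\delta(x)}\Big(1-\frac{\lambda}{\ln\frac1{\delta(x)}}\Big)\nabla\delta(x),\qquad
B(x):=-\frac{\psi(x)}{2}\,\nabla\beta(x).
\]
Expanding the form and recalling $\nabla\delta\cdot(\bbD\nabla\delta)=a\delta^{\beta}$ (as in the proof of Theorem \ref{Mbetaconst}) yields
\[
\nabla g\cdot(\bbD\nabla g)=P+2\,A\cdot(\bbD B)+Q,\qquad
P:=\Big(\tfrac{\beta-2}{2}\Big)^2\Big(1-\tfrac{\lambda}{\ln\frac1\delta}\Big)^2 a\,\delta^{\beta-2},
\]
with $Q:=B\cdot(\bbD B)$. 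Here $P$ is exactly the principal term in \eqref{E:gradgMM}, so it remains to absorb $2A\cdot(\bbD B)+Q$ into $c_5(\nu_0)+P_0\,c_4(\nu_0)\,\delta^{\frac{1-s}{2}}$, where $P_0:=\big(\tfrac{\beta-2}{2}\big)^2 a\,\delta^{\beta-2}$.

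The error terms are controlled directly from the hypotheses. Since $\beta<2$ on $\Gamma_{\mu,-}$, the degeneracy bound \eqref{E:DM} gives $\|\bbD\|\le C\delta^{\beta+s_\beta-s}$, the gradient bound \eqref{S:ADsup} gives $|\nabla\beta|\le C\delta^{-s_\beta}$, and $|\psi|\le C\ln\frac1\delta$; hence $Q\le\|\bbD\|\,|B|^2\le C\,\delta^{\beta-s_\beta-s}\big(\ln\tfrac1\delta\big)^2$. The cross term is estimated by the Cauchy--Schwarz inequality for the positive form $u\mapsto u\cdot(\bbD u)$, namely $|2A\cdot(\bbD B)|\le 2\sqrt{P}\,\sqrt{Q}\le 2\sqrt{P_0\,Q}$ using $P\le P_0$, so that $Q/P_0\le \frac{C}{(\beta-2)^2 a_\mu}\,\delta^{\,2-s_\beta-s}\big(\ln\tfrac1\delta\big)^2$, with $a_\mu:=\inf_{\Gamma_\mu}a>0$ by \eqref{S:aainvers}.

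It remains to route these estimates into the two sinks $c_5$ and $P_0c_4\delta^{\frac{1-s}{2}}$, and this is where the main difficulty lies: the coefficient $(\beta-2)^2$ of the principal term degenerates as $\beta\to 2^-$, whereas the $\nabla\beta$–error $B$ (hence $Q$) does \emph{not} vanish there, so one cannot simply divide by $P_0$. I would therefore split $\Gamma_{\frac{\nu_0}{2},-}$ at the threshold $\varepsilon_0:=1-\frac{s_\beta+s}{2}>0$ (positive since $s_\beta\le s<1$). On $\{|\beta-2|\ge\varepsilon_0\}$ one has $P_0\ge\frac{\varepsilon_0^2 a_\mu}{4}\delta^{\beta-2}$, and both $Q$ and $2\sqrt{P_0Q}=2P_0\sqrt{Q/P_0}$ are bounded by $P_0\,\delta^{\frac{1-s}{2}}$ up to logarithms, the surplus exponents $\frac{3}{2}-s_\beta-\frac{s}{2}$ and $\frac{1-s_\beta}{2}$ being strictly positive; this feeds the $P_0c_4\delta^{\frac{1-s}{2}}$ term. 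On $\{|\beta-2|<\varepsilon_0\}$, writing $u=2-\beta\in(0,\varepsilon_0)$, the exponents $2-u-s_\beta-s$ of $Q$ and $1-u-\frac{s_\beta+s}{2}$ of $\sqrt{P_0Q}$ are both positive, so $Q$ and the cross term are uniformly bounded and feed the constant $c_5$. Combining the two regimes gives \eqref{E:gradgMM}; the genuinely delicate point is the bookkeeping of powers of $\delta$ guaranteeing these surplus exponents are positive, which uses precisely the standing constraint $s_\beta\le s<1$ together with the calibrated decay exponent $\beta+s_\beta-s$ in \eqref{E:DM}.
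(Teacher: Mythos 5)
Your proof follows the paper's own route: the same decomposition $\nabla g=A+B$ on $\Gamma_{\frac{\nu_0}{2},-}$, the same principal term, and the same two-sink bookkeeping (a constant $c_5$ versus $c_4\,P_0\,\delta^{\frac{1-s}{2}}$, with $P_0=\big(\tfrac{\beta-2}{2}\big)^2a\,\delta^{\beta-2}$) driven by a case split on the size of $u:=2-\beta(x)$; replacing the paper's direct operator-norm bound on the cross term by Cauchy--Schwarz for the form $v\mapsto v\cdot(\bbD v)$ is an immaterial variation. The gap is in the small-$u$ regime, and it sits exactly at the step you yourself flag as delicate. You split at the threshold $\varepsilon_0=1-\frac{s_\beta+s}{2}$, which is precisely the \emph{critical} value for the cross term: on $\{0<u<\varepsilon_0\}$ your estimate reads
\begin{equation*}
2\sqrt{P_0Q}\;\lesssim\; u\,\sqrt{a(x)}\;\delta(x)^{\varepsilon_0-u}\,\ln\tfrac{1}{\delta(x)},
\end{equation*}
and the inference ``the exponent $\varepsilon_0-u$ is positive, hence the term is uniformly bounded'' is a non sequitur. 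Indeed $\sup_{0<\delta<1}\delta^{\epsilon}\ln\frac{1}{\delta}=(e\epsilon)^{-1}$, so the right-hand side can be of size $\frac{u\sqrt{a}}{e(\varepsilon_0-u)}$, which blows up as $u\uparrow\varepsilon_0$. Nothing in Assumption A or \eqref{E:DM} forbids $\beta$ from taking values with $2-\beta$ arbitrarily close to $\varepsilon_0$ at points where $\delta$ is correspondingly small (say $\delta=e^{-1/(\varepsilon_0-u)}$), so no finite constant $c_5(\nu_0)$ comes out of this argument: uniformity requires the surplus exponents to be bounded \emph{away} from zero on the whole region feeding $c_5$, not merely positive at each point. (The term $Q$ alone is not affected, since its exponent $2-u-s_\beta-s$ stays above $\varepsilon_0$ on this region; the failure is only in the cross term.)

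The repair is exactly the paper's choice of threshold: split strictly below the critical exponent, e.g.\ at $u\le\frac{1-s}{2}$ versus $u>\frac{1-s}{2}$ (note $\frac{1-s}{2}<1-s\le\varepsilon_0$ because $s_\beta\le s$), or keep your architecture and split at $\varepsilon_0/2$. With the paper's threshold, on the small-$u$ side the exponents of the cross term and of $Q$ are bounded below by $\frac{1-s_\beta}{2}$ and $\frac{3}{2}-s_\beta-\frac{s}{2}$ respectively, both $\ge\frac{1-s}{2}>0$ uniformly, so the logarithms are absorbed and $c_5(\nu_0)<\infty$; on the large-$u$ side $u$ is bounded below by $\frac{1-s}{2}$, so after dividing by $P_0\,\delta^{\frac{1-s}{2}}$ the surplus exponents you computed ($\frac{3}{2}-s_\beta-\frac{s}{2}$ and $\frac{1-s_\beta}{2}$) hold uniformly in $u$ and give $c_4(\nu_0)<\infty$. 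With this single change of threshold your argument closes and coincides, in substance, with the paper's proof of Lemma \ref{G}.
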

\begin{proof}
We estimate $\nabla g \cdot\big(\bbD\nabla g\big)$ for $x\in  \Gamma_{\frac{\nu_0}{2},-}$.
From \eqref{E:gnu0-} on  $\Gamma_{\frac{\nu_0}{2},-}$:
\begin{equation}\label{E:gradgM}
\nabla g(x)= \frac{2-\beta(x)}{2}\Big( 1-\frac{\lambda }{ \ln \frac{1}{\delta(x)}}\Big)\delta(x)^{-1} \nabla \delta (x)
-\frac{1}{2}\Big( \ln \delta(x)+\lambda \ln \ln \frac{1}{\delta(x)}\Big) \nabla \beta(x)
\end{equation}
hence from \eqref{S:AD} and Assumption A (recall that $\big|\nabla \delta(x)\big|=1$)
\begin{equation}\label{E:gradgMT}
\nabla g(x) \cdot\big(\bbD(x)\nabla g(x)\big)=\Big(\frac{2-\beta(x)}{2}\Big)^2 a(x)\Big( 1-\frac{\lambda }{ \ln \frac{1}{\delta(x)}}\Big)^2
\delta(x)^{\beta(x)-2} +T(x)
\end{equation}
with
\begin{align}\label{E:T}
&T(x)= \nonumber \\
&-\frac{2-\beta(x)}{4}\Big( 1-\frac{\lambda }{ \ln \frac{1}{\delta(x)}}\Big)\Big( \ln \delta(x)+\lambda \ln \ln \frac{1}{\delta(x)}\Big)\delta(x)^{-1}
\Big[ \nabla \delta (x)\cdot \big(\bbD(x)\nabla \beta(x)\big)+\nonumber \\
&\nabla \beta (x)\cdot \big(\bbD(x)\nabla \delta(x)\big)\Big]
+\frac{1}{4}\Big( \ln \delta(x)+\lambda \ln \ln \frac{1}{\delta(x)}\Big)^2 \nabla\beta (x)\cdot \big(\bbD(x)\nabla \beta(x)\big).
\end{align}
Further, one has to estimate $T$. Consider first the case
\begin{equation}\label{E:betaplus}
\beta (x)-2 \geq \frac{s-1}{2}.
\end{equation}
From Assumption A,  \eqref{S:cA}, \eqref{E:DM} and \eqref{E:T}, counting the powers of $\delta(x)$ one arrives at the conclusion 
that
\begin{equation}
\sup_{ x \in  \Gamma_{\frac{\nu_0}{2}};\;0>\beta (x)-2 \geq \frac{s-1}{2} }\big|T(x)\big| =: c_5(\nu_0)<\infty,
\end{equation}
hence from \eqref{E:gradgMT}
\begin{equation}\label{E:c5}
\nabla g(x) \cdot(\bbD(x)\nabla g(x))\leq \Big(\frac{2-\beta(x)}{2}\Big)^2 a(x)\Big( 1-\frac{\lambda }{ \ln \frac{1}{\delta(x)}}\Big)^2
\delta(x)^{\beta(x)-2}  +c_5(\nu_0).
\end{equation}
Consider now the alternative case
\begin{equation}
\beta (x)-2 <\frac{s-1}{2}
\end{equation}
which implies
\begin{equation}\label{E:betaminus}
\big| \beta(x)-2 \big| >\frac{1-s}{2}
\end{equation}
In this case  from \eqref{E:betaminus}, taking into account that $\frac{1}{a} \in L^{\infty}\big(\Gamma_{\frac{\nu_0}{2}}\big)$,
by counting the powers of $\delta(x)$ as before one obtain that
\begin{equation}
\sup_{ x \in  \Gamma_{\frac{\nu_0}{2}};\;\beta (x)-2 < \frac{s-1}{2} }
\Big(\frac{2}{2-\beta(x)}\Big)^2 \frac{1}{a(x)}\delta(x)^{2-\beta(x)}\delta(x)^{\frac{s-1}{2}} \big|T\big|=:
c_4(\nu_0) <\infty.
\end{equation}
which together with\eqref{E:gradgMT} 

\begin{equation}\label{E:c4}
\nabla g(x) \cdot\big(\bbD(x)\nabla g(x)\big)\leq \Big(\frac{2-\beta(x)}{2}\Big)^2 a(x)
\delta(x)^{(\beta(x)-2)} \Big[\Big( 1-\frac{\lambda }{ \ln \frac{1}{\delta(x)}}\Big)^2+c_4(\nu_0)\Big].
\end{equation}
Finally, \eqref{E:gradgM} follows from \eqref{E:c5} and \eqref{E:c4}.
\end{proof}

We estimate now $h\big[ \varphi, \varphi\big]$ from below.  From  \eqref{E:Vnu}, \eqref{E:VM} and Theorem \ref{H} for all $0<\nu <\min\{\nu_1, \mu\} $

\begin{align}\label{E:hV1}
& h\big[ \varphi, \varphi\big] \geq -\big(c_1(\nu)+V_{\nu} \big) \| \varphi \|^2+\int_{\Gamma_{\nu,-}}a(x)  \delta(x)^{\beta (x)-2} \nonumber \\ 
&\Big[\big(1-q(x)\big) \Big(\frac{\beta+ 
 \gamma (x) -1}{2}\Big)^2+ \frac{v(x)}{a(x)}+\frac{1-q(x)}{8}\frac{1}{\Big( \ln\frac{1}{ \delta(x)}\big)^2 }
 - \frac{w_{\nu}}{a(x)}\delta(x)^{\frac{1-s}{2}}\Big] \big|\varphi (x)\big|^2 \rho(x)dx.
\end{align} 
Since $\frac {1}{\ln \frac{1}{\delta(x)}} \rightarrow 0$ as $\delta(x)  \rightarrow 0$ slower than any
(strictly) positive power of $\delta(x)$ there exists 

\begin{equation}\label{E:nu3}
0<\nu_3 \leq \min\{\nu_1, \mu\}
\end{equation}
such that on $\Gamma_{\nu_3}$  
\begin{equation}\label{E:va}
\frac{1-q(x)}{8}\frac{1}{\Big( \ln\frac{1}{ \delta(x)}\big)^2 }
 - \frac{w_{\nu}}{a(x)}\delta(x)^{\frac{1-s}{2}} \geq 0
\end{equation}
and
\begin{equation}\label{E:ln}
\Big( 1-\frac{\lambda}{\ln \frac{1}{\delta(x)}}\Big)^2
+
 c_4(\nu_0)\delta(x)^{\frac{1-s}{2}}
 \leq 1.
\end{equation}.

From \eqref{E:hV1} and \eqref{E:va}, \eqref{E:h} holds true by choosing:
\begin{equation}\label{E:E0M}
E_0 =-\big(c_1(\nu_3)+V_{\nu_3} \big),
\end{equation}
\begin{equation}\label{E:BM}
B(x)= 
\begin{cases}
a(x)  \delta(x)^{\beta (x)-2} 
\Big[\big(1-q(x)\big)\big(\frac{\beta(x)+ 
 \gamma (x) -1}{2}\big)^2 + \frac{v(x)}{a(x)}\Big]
 &  \text{for}\; x\in \Gamma_{\nu_3,-} \\
0  &\text{otherwise}
\end{cases}
\end{equation}

Further, let
\begin{equation}\label{E:gnu1M}
\sup_{x\in \Omega \setminus \Gamma_{\nu_3}}
\nabla g(x)\cdot \big(\bbD(x) \nabla g(x)\big)=: g_{\nu_3} <\infty.
\end{equation}
Then by choosing
\begin{equation}
\frac{E_0 -E}{2} = c_5(\nu_0) +g_{\nu_3},
\end{equation}
\eqref{E:gB} follows from \eqref{E:ESAM}, \eqref{E:gradgMM}, \eqref{E:ln} and \eqref{E:BM}.

Finally we have to verify Lemma \ref{E:supml}. The choice of $\phi_l$ is as in the proof of Theorem \ref{Mbetaconst} with $r_1 \leq \nu_3$. 
From \eqref{E:lkderiv}, \eqref{E:phil}  and Assumption A (recall also that by construction
on $B_l$, $\delta(x) >r_l$ and $a \in L^{\infty}\big(\Gamma_{\nu_0}\big)$):
\begin{align}\label{E:nablaphi}
&\nabla \phi_l(x)\cdot \big(\bbD(x)\nabla \phi_l(x)\big)= 
\big(k_l'(\delta(x)\big)^2\nabla \delta(x)\cdot \big(\bbD(x)\nabla \delta(x)\big) \leq
\nonumber \\
&\frac{4}{r^2_{l+1}} a(x)\delta (x)^{\beta(x)} \leq 16a(x)\delta^{\beta(x)-2} \leq C_{\phi}
\delta^{\beta(x)-2}.
\end{align}

Let $B_{l,+}=\big\{x\in B_l \big| \beta(x) \geq 2\big\},\;
B_{l,-}=\big\{x\in B_l \big| \beta(x) < 2\big\}$.

From Lemma \ref{G} and \eqref{E:nablaphi}

\begin{equation}\label {E:ml+}
\big| m_l \big| \Big|_{B_{l,+}}= \nabla \phi_l\cdot \big(\bbD\nabla \phi_l\big)\Big|_{B_{l,+}} \leq C_{\phi}.
\end{equation}
Further, for $x \in B_{l,-}$,   from Lemma \ref{G} there exists $c_6(\nu_0) < \infty$
such that 
\begin{equation}\label{E:gradgM-}
\nabla g(x) \cdot\big(\bbD(x)\nabla g(x)\big) \leq c_5(\nu_0) +c_6(\nu_0)
\delta(x)^{\beta(x)-2},
\end{equation}
and from \eqref{E:gM}, \eqref{E:lambda}
\begin{equation}\label{E:expg-}
e^{2g(x)} \leq e^{2-\beta(x)}\ln \frac{1}{\delta(x)} .
\end{equation}
Plugging \eqref{E:nablaphi}, \eqref{E:ml+},  \eqref{E:gradgM-}, \eqref{E:expg-} into \eqref{E:|m|} and taking into account that on $B_l$, 
$\delta(x) \geq \nu_1 2^{1-l}$ (see \eqref{E:rl} and \eqref{E:Bl}) one obtains \eqref{E:suplm}
and the proof of Theorem \ref{M} is finished.

\section{ Comments, extensions, open problems} \label{S:5}
In this section besides comments and applications  we give some extensions of the results in previous sections. We would like to stress that the proofs of these extensions follow closely the proofs in Section \ref{S:3} and Section \ref{S:4}. What has to be added in each case is the choice of the vector field,
$\bsX(x)$ in Lemma \ref{H:VFA} and/or the choice of the function $g(x)$ in the Agmon type exponential estimation as given Lemma \ref{BI} leading via Lemma \ref{E:supml} to essential self-adjointness. The choices for $g(x)$ in this paper are extensions to the case at hand of those used in \cite{NN1} for the case $ \bbD =\mathds{1},\; \rho =1$.

1. For the sake of discussion let us fix (remind) the terminology. Let $\Omega \subset \bbR^d,\; H_0$ as given in Section \ref{S:2} and $\cH:\Omega \rightarrow [0,\infty),\; \cH \in L^{\infty}_{loc}(\Omega)$. $H_0$ is said to satisfy a $\cH$-weak Hardy inequality if there exists $M>-\infty$ such that for all $\varphi \in \C_0^{\infty}(\Omega)$ 
\begin{equation}\label{C:weakHardy}
h\big[ \varphi, \varphi\big] \geq M \int_{\Omega}\big| \varphi(x)\big|^2 \rho(x)dx
+\int_{\Omega} \cH(x)\big|\varphi(x)\big|^2 \rho(x)dx,
\end{equation}
and if in addition $M\geq 0$, $H_0$ is said to satisfy a $\cH$-strong Hardy inequality.
$H_0$ is said to satisfy a $\cH$-boundary layer Hardy inequality if there exists $\nu >0$
(see Section \ref{S:2} for $\Gamma_{\nu}$) such that for all $\varphi \in \C_0^{\infty}(\Gamma_{\nu})$ 
(see \cite{Ro3} for relations between week and boundary layer Hardy inequalities)
\begin{equation}\label{C:blHardy}
h\big[ \varphi, \varphi\big] \geq 
\int_{\Gamma_{\nu}} \cH(x)\big|\varphi(x)\big|^2 \rho(x)dx.
\end{equation}
Notice that the $\cH$-weak Hardy inequality for $H_0$ is equivalent with the 
 $\cH$-weak Hardy inequality for $H_0-M$. Since essential self-adjointness is stable against addition of constants when using Hardy inequalities to ensure essential self-adjointness of $H$, weak Hardy inequalities (which are easier to come by) are sufficient. The function $\cH$ is sometimes called "Hardy barrier" and in order to provide nontrivial lower bound for $h_0\big[ \varphi, \varphi\big]$ it must blow up as $x \rightarrow \partial \Omega$
 
With this terminology, Theorem \ref{H} provides a general $\cH_0$-weak/ boundary layer Hardy inequality for $H_0$. The formula \eqref{H:calH1} gives the height or the Hardy barrier. We would like to stress that $\cH_0(x)$ depends locally on $\bbD$ and $\rho$ i.e. the Hardy barrier is a local quantity. 
 
 In the particular case when $\rho=1$ and in addition $\bbD$ is isotropic (hence $q=0$) i.e.  
 $\bbD(x)=d(x) \mathds{1}=a(x)\delta(x)^{\beta(x)} \mathds{1}$, Theorem \ref{H} gives 
 (for simply connected, bounded domains  with smooth boundary and sufficiently small $\nu$) a  
weak/boundary layer weighted Hardy inequality generalizing previous results in \cite{Ro3}
\begin{align}\label{C:weakH}
& \int_{\Omega}a(x)\delta(x)^{\beta(x)} \big|\nabla \varphi(x) \big|^2 dx
 \geq  \nonumber \\
 &-c_1(\nu)\int_{\Omega}\big|\chi_{\Omega \setminus \Gamma_{\nu}}(x)  \varphi(x) \big|^2 dx 
 + \frac{1}{4}\int_{\Gamma_{\nu}}a(x) \big( \beta(x)-1\big)^2 \delta(x)^{\beta(x)-2}\big|  \varphi(x) \big|^2 dx  
 \end{align}
with $c_1(\nu) < \infty$. 
What \eqref{C:weakH} adds to the previous results  is that the degree of degeneracy of $\bbD$
can vary in $\Gamma_{\nu}$ along  $\partial \Omega$.
In the "classical" case  $\bbD =\mathds{1}; \rho=1$ one recovers the famous constant $\frac{1}{4}$ \cite{BM}:
\begin{align}\label{C:calH011}
\int_{\Omega}\big|\nabla \varphi(x)\big|^2dx \geq &
 -c_1(\nu_1)\|\chi_{\Omega \setminus\Gamma_{\nu_1}} \varphi \|^2
+\frac{1}{4}
\int_{\Gamma_{\nu_1}} \frac{1}{\delta(x)^2}\big|\varphi(x)\big|^2 dx \geq \nonumber \\ 
& -C\|\varphi \|^2
+\frac{1}{4}
\int_{\Omega} \frac{1}{\delta(x)^2}\big|\varphi(x)\big|^2 dx, \; C< \infty
\end{align}
which is known to be optimal in the sense that there exist simply connected, bounded domains
with $\cC^2$ boundary for which $C$ cannot be negative \cite{MMP}.

There is a large body of literature on weak weighted Hardy inequalities on domains in $\bbR^d$ with smooth/rough boundaries which can be partly traced from references in \cite{Ro3}. 
However the optimality problem for the case of rough boundaries is far from being solved. In particular an interesting open problem is whether  Theorem \ref{H} remains true if $\partial \Omega$ is $\cC^{1+\epsilon}$, $0<\epsilon <1$.

2. As a particular case of a more general result, for $H_0= -\Delta$, $\rho =1$ , $\Omega$ bounded and convex G. Barbatis, 
S. Filippas, and A. Tertikas (see \cite{BFT2} and \cite{NT} for more references and related developments) refined the strong Hardy inequality by adding an infinite series of logarithmic sub-leading terms to the Hardy barrier. By using the approach in \cite{BFT2} one can refine Theorem \ref{H} and give a hierarchy of weak anisotropic Hardy inequalities by adding logarithmic subleading terms to $\cH_0(x)$.

The technical tool is a class of functions related to iterated logarithms.
Given   $a \in (0,\infty)$ and $p=1,2,...$ let $t_{a,p}$ and   $L_p: (0, t_{a,p}] \rightarrow (0, 1]$ given recurrently by
\begin{equation}\label{C:tap}
t_{a,1}=\frac{a}{e},\; t_{a,p}=\frac{a}{e^{\frac{1}{t_{a,p-1}}}},
\end{equation}
\begin{equation}\label{C:Lap}
L_{a,1}(t)= \frac{1}{\ln\frac{a}{t}};\; L_{a,p}(t)= L_{a,1}\big(L_{a,p-1}(t)\big)
\end{equation}
and then
\begin{equation}\label{C:Mf}
M_{a,p}(t)=\prod_{k=1}^p L_{a,k}(t); \;f_{a,p}(t)= \sum_{l=1}^p M_{a,l}(t).
\end{equation}
By direct computation the solution of recurrence \eqref{C:Lap} is
\begin{equation}\label{C:iterln}
L_{a,p}(t)= \frac{1}{\ln a \ln a...\ln a\frac{1}{t}}
\end{equation}
and from \eqref{C:iterln} it is not hard to see that $L_{a,p}$ are well defined on $(0,t_{a,p}]$
and take values in $(0,1]$.
We also notice that $t_{1,p}=\frac{1}{e^{e^{\cdot^{\cdot^{e}}}}}$, $t_{e,p}=1$ and
\begin{equation}
t_{a,p}
\begin{cases}
<1;\;\lim_{p \rightarrow \infty}t_{a,p}=0 & \text{for} \; a<e\\
>1;\; \lim_{p \rightarrow \infty}t_{a,p}=\infty & \text{for} \; a>e
\end{cases}
\end{equation}

The magic of these functions lies in the following lemma ( see formula (2.4) in \cite{BFT2} where the case $a=e$ was considered).
\begin{lemma}\label{C:BFT}
For N=1,2,...
\begin{equation}\label{C:fM}
tf'_{a,N}(t)-\frac{1}{2}f_{a,N}(t)^2=\frac{1}{2} \sum_{k=1}^NM_{a,k}(t)^2.
\end{equation}
\end{lemma}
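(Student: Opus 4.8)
The plan is to reduce the claimed quadratic identity to a chain of first-order logarithmic-derivative relations that follow cleanly from the recursive definition \eqref{C:Lap}. The starting point is the elementary computation for the base function: since $L_{a,1}(t)=(\ln a-\ln t)^{-1}$, differentiating gives
\begin{equation}
t\,L_{a,1}'(t)=L_{a,1}(t)^2,
\end{equation}
so that $t\,L_{a,1}'/L_{a,1}=L_{a,1}=M_{a,1}$. I expect everything else to propagate from this one identity through the composition structure $L_{a,p}=L_{a,1}\circ L_{a,p-1}$.

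The key step, which I would prove by induction on $k$, is the clean relation
\begin{equation}
t\,\frac{L_{a,k}'(t)}{L_{a,k}(t)}=M_{a,k}(t).
\end{equation}
Indeed, applying the chain rule to $L_{a,k}=L_{a,1}\circ L_{a,k-1}$ together with the base identity $L_{a,1}'(s)=s^{-1}L_{a,1}(s)^2$ evaluated at $s=L_{a,k-1}(t)$ yields $t\,L_{a,k}'/L_{a,k}=L_{a,k}\cdot\bigl(t\,L_{a,k-1}'/L_{a,k-1}\bigr)$, and the inductive hypothesis then telescopes the right-hand side into the product $L_{a,k}L_{a,k-1}\cdots L_{a,1}=M_{a,k}$. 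Logarithmic differentiation of the product $M_{a,p}=\prod_{k=1}^{p}L_{a,k}$ (see \eqref{C:Mf}) immediately upgrades this to
\begin{equation}
t\,M_{a,p}'(t)=M_{a,p}(t)\,f_{a,p}(t),
\end{equation}
since $t\,M_{a,p}'/M_{a,p}=\sum_{k=1}^{p}t\,L_{a,k}'/L_{a,k}=\sum_{k=1}^{p}M_{a,k}=f_{a,p}$.

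With these in hand the conclusion is a purely algebraic bookkeeping matter. Summing over $l$ gives
\begin{equation}
t\,f_{a,N}'(t)=\sum_{l=1}^{N}t\,M_{a,l}'(t)=\sum_{l=1}^{N}M_{a,l}\,f_{a,l}=\sum_{1\le k\le l\le N}M_{a,k}\,M_{a,l},
\end{equation}
where I have used $f_{a,l}=\sum_{k=1}^{l}M_{a,k}$. Expanding $\tfrac12 f_{a,N}^2=\tfrac12\bigl(\sum_{k}M_{a,k}\bigr)^2=\tfrac12\sum_{k}M_{a,k}^2+\sum_{k<l}M_{a,k}M_{a,l}$ and subtracting from the previous display cancels all off-diagonal cross terms and halves the diagonal, leaving exactly $\tfrac12\sum_{k=1}^{N}M_{a,k}(t)^2$, as claimed. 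I would note along the way that all functions are $\cC^1$ on $(0,t_{a,N}]$ with strictly positive values in $(0,1]$, so the differentiations are legitimate on the stated domain. The only genuinely non-routine point is the inductive identity $t\,L_{a,k}'/L_{a,k}=M_{a,k}$; once that telescoping is established, the rest is the standard ``square of a sum'' decomposition.
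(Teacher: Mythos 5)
Your proof is correct. Both you and the paper rest on the same two differential identities — $t\,L_{a,k}'/L_{a,k}=M_{a,k}$ (equivalently $L_{a,k}'=\tfrac{1}{t}L_{a,k}M_{a,k}$) and $t\,M_{a,p}'=M_{a,p}f_{a,p}$ — but the final assembly is genuinely different. The paper proves \eqref{C:fM} by induction on $N$: the base case $N=1$ is a direct computation, and the inductive step writes $f_{a,N+1}=f_{a,N}+M_{a,N+1}$, differentiates using $M_{a,N+1}'=\tfrac{1}{t}M_{a,N+1}f_{a,N+1}$, and lets the cross terms cancel against the expansion of $\tfrac12(f_{a,N}+M_{a,N+1})^2$. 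You avoid that outer induction entirely: summing $t\,M_{a,l}'=M_{a,l}f_{a,l}$ over $l$ gives the triangular double sum $\sum_{k\le l}M_{a,k}M_{a,l}$, and comparison with the diagonal/off-diagonal split of $\tfrac12\bigl(\sum_k M_{a,k}\bigr)^2$ yields the identity in one stroke. Conversely, you spend your induction one level lower, on $k$, to establish $t\,L_{a,k}'/L_{a,k}=M_{a,k}$ via the composition $L_{a,k}=L_{a,1}\circ L_{a,k-1}$ — a relation the paper simply asserts as ``direct computation.'' The net effect: the paper's argument is shorter given its unproved computational claims, while yours isolates the single analytic input (the logarithmic-derivative relation for $L_{a,1}$) and makes transparent that the identity itself is nothing more than the bookkeeping difference between a triangular and a symmetric sum. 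Your remark about working on $(0,t_{a,N}]$, where all the iterated logarithms are $\cC^1$ and valued in $(0,1]$, correctly addresses the legitimacy of the differentiations.
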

\begin{proof} See \cite{BFT2}. However for the sake of the reader we give it below. The proof is by induction. For $N=1$, $f_{a,1}=L_{a,1}$ and  \eqref{C:fM} follows by direct computation.
Suppose \eqref{C:fM} holds true for $f_{a,N}$. By direct computation:
\begin{equation}\label{C:MN+1der}
L_{a,l}^{'}= \frac{1}{t}L_{a,l}M_{a,l}; \;M_{a,N+1}^{'}(t)= \frac{1}{t}M_{a,N+1}(t)f_{a,N+1}(t).
\end{equation}
Further from \eqref{C:Mf}
\begin{equation}\label{C:fN+1fN}
f_{a,N+1} =f_{a,N} +M_{a,N+1}
\end{equation}
hence from \eqref{C:MN+1der}
\begin{equation}
f'_{a,N+1}(t) = f'_{a,N}(t)+\frac{1}{t}M_{a,N+1}(t)f_{a,N+1}(t)
\end{equation}
which together with \eqref{C:fN+1fN} and \eqref{C:fM} leads to
\begin{equation}
tf'_{a,N+1}(t)-\frac{1}{2}f_{a,N+1}(t)^2=\frac{1}{2} \sum_{k=1}^{N+1}M_{a,k}(t)^2.
\end{equation}
\end{proof}

For definiteness we formulate the refinement of Theorem \ref{H} in the case $a=e$ and omit the subscript $e$ i.e.  we write 
$M_p(t),\; f_p(t)$ for $M_{e,p}(t)$ and  $f_{e,p}(t)$
respectively.

\begin{theorem}\label{Href}
Suppose that Assumptions A and  Q hold true. 
Then  for $N=1,2, ...$   there exists $0<\nu_N <\frac{\nu_0}{2}$ such that  for all $0<\nu < \nu_N$ 
and  $\varphi
 \in \cC^{\infty}_0(\Omega)$:
 
\begin{equation}\label{H:HardyN} 
h_0\big[\varphi, \varphi\big]
\geq -c_N(\nu) \|\chi_{\Omega \setminus \Gamma{\nu}}\varphi \|^2+ \int_{\Gamma_{\nu}}\cH_N(x)|\varphi(x)|^2 \rho(x) dx
\end{equation}
where $c_N(\nu) <\infty$,
$\chi_{\Omega \setminus \Gamma{\nu}}$ is the characteristic function of $\Omega \setminus \Gamma{\nu}$ and
\begin{equation}\label{H:calHN}
\cH_N(x)= \frac{1}{4}\big(1-q(x)\big)a(x)\delta(x)^{\beta(x)-2}\Big[  \big(\beta(x)+\gamma(x)-1\big)^2 +
\sum_{k=1}^NM_k\big(\delta(x)\big)^2 \Big].
\end{equation}
In particular if $\varphi \in \cC^{\infty}_0(\Gamma_{\nu})$
\begin{equation}\label{H:HardyLN}
h_0[\varphi,\varphi] \geq \int_{\Gamma_{\nu}}\cH_N(x)|\varphi(x)|^2 \rho(x) dx.
\end{equation} 
\end{theorem}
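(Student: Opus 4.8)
The plan is to rerun the vector-field proof of Theorem \ref{H} essentially verbatim, changing only the scalar profile that multiplies $\nabla\delta$ in the Ansatz: in place of the single logarithm $f(t)=\frac{1}{\ln\frac1t}$ of \eqref{H:f1} I would insert the iterated-logarithm function $f_{N+1}(t)=\sum_{k=1}^{N+1}M_k(t)$ from \eqref{C:Mf} — note \emph{one extra level}, $N+1$ rather than $N$, which is the crux of the improvement. Thus for $x\in\Gamma_{\nu_0}$ I set
\[
\widetilde\bsX_N(x)=\tfrac12\big(1-q(x)\big)a(x)r(x)\,\delta(x)^{\beta(x)+\gamma(x)-1}\big[\beta(x)+\gamma(x)-1+f_{N+1}(\delta(x))\big]\nabla\delta(x),
\]
and $\bsX_N=\Psi\,\widetilde\bsX_N$ with the same cut-off $\Psi$ as in \eqref{H:Psi}. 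Since only the scalar multiplier has changed, the Schur-complement factors $1-q$ and $\tilde d$ entering the computation of $\bsX_N\cdot(\rho\bbD)^{-1}\bsX_N$ are untouched, so the divergence \eqref{H:divX1} and the quadratic term \eqref{H:X1D-1X1} carry over with $f$ replaced throughout by $f_{N+1}$.

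Subtracting, the linear cross term $\tfrac12(\beta+\gamma-1)f_{N+1}$ again cancels exactly as in the passage from \eqref{H:divX1}, \eqref{H:X1D-1X1} to \eqref{H:divX1D-1X1}, and on $\Gamma_{\nu_0/2}$ I am left with the profile $\tfrac12\delta f_{N+1}'(\delta)-\tfrac14 f_{N+1}(\delta)^2$ plus a remainder $R_N$ that is the exact analogue of \eqref{H:R1} with $f$ replaced by $f_{N+1}$. At this point I invoke the identity \eqref{C:fM} of Lemma \ref{C:BFT}, which converts $\delta f_{N+1}'(\delta)-\tfrac12 f_{N+1}(\delta)^2$ into $\tfrac12\sum_{k=1}^{N+1}M_k(\delta)^2$; this is the algebraic engine that replaces the one-line computation \eqref{H:f1M1}. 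The outcome, on $\Gamma_{\nu_0/2}$, is
\[
\nabla\cdot\bsX_N-\bsX_N\cdot(\rho\bbD)^{-1}\bsX_N=(1-q)a\rho\,\delta^{\beta-2}\Big[\tfrac14(\beta+\gamma-1)^2+\tfrac14\sum_{k=1}^{N+1}M_k(\delta)^2+R_N\Big].
\]

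Two estimates remain, the analogues of \eqref{H:supR1} and \eqref{H:M2R1}. The first is the remainder bound $\sup_{\Gamma_{\nu_0/2}}|R_N(x)|\,\delta(x)^{\frac{s-1}{2}}<\infty$; this is routine power-counting, since every factor $M_k$ lies in $(0,1]$, each product $M_k(\delta)\ln\delta$ is bounded, and $\delta M_k'(\delta)=M_k(\delta)f_k(\delta)$ is bounded by \eqref{C:MN+1der}, so $R_N$ has exactly the structure of $R_0$ and the bound follows from Assumption A, Assumption Q, Lemma \ref{GT} and $\lim_{\delta\to0}\delta^{\frac{1-s}{2}}\ln\delta=0$, as in \eqref{H:supR1}. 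The genuinely new and decisive point is how $R_N$ is absorbed: because $R_N=O(\delta^{\frac{1-s}{2}})$ decays like a positive power while $M_{N+1}(\delta)^2$ decays slower than any power, one has $R_N=o\big(M_{N+1}(\delta)^2\big)$, so there is $0<\nu_N<\nu_0/2$ with $\tfrac14 M_{N+1}(\delta)^2+R_N(x)\ge 0$ on $\Gamma_{\nu_N}$. This is precisely why I carry the extra $(N{+}1)$-st level: the throw-away term $\tfrac14 M_{N+1}^2$ soaks up the remainder, so the full sum $\sum_{k=1}^N M_k^2$ survives with its \emph{sharp} coefficient (whereas in Theorem \ref{H} half of the last term had to be sacrificed, cf.\ \eqref{H:M2R1}, which is exactly why $\cH_N$ in \eqref{H:calHN} is a strictly larger barrier than iterating with only $N$ levels would give). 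Discarding that term yields $\nabla\cdot\bsX_N-\bsX_N\cdot(\rho\bbD)^{-1}\bsX_N\ge\cH_N\rho$ on $\Gamma_{\nu_N}$. Finally, exactly as in \eqref{H:sup}, the cut-off $\Psi$ together with Assumption A makes the integrand uniformly bounded on $\Omega\setminus\Gamma_\nu$, producing $c_N(\nu)<\infty$; feeding this into Lemma \ref{H:VFA} gives \eqref{H:HardyN}, and restricting to $\varphi\in\cC_0^\infty(\Gamma_\nu)$ gives \eqref{H:HardyLN}.

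I expect the only real work to be the remainder estimate $|R_N|=O(\delta^{\frac{1-s}{2}})$, i.e.\ verifying that differentiating the iterated logarithms does not disturb the power-counting behind \eqref{H:supR1}; the identity \eqref{C:fM} performs all the algebraic heavy lifting, and the asymptotic comparison $R_N=o(M_{N+1}^2)$ is what fixes the choice of $\nu_N$ and delivers the clean coefficients in \eqref{H:calHN}.
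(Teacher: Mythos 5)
Your proposal is correct and is essentially the paper's own proof: the same Ansatz $\widetilde\bsX_N$ obtained by replacing $f$ with $f_{N+1}$ (one extra iterated-logarithm level), the identity of Lemma \ref{C:BFT} in place of \eqref{H:f1M1}, the remainder bound of type \eqref{H:supR1}, and the choice of $\nu_N$ so that $\tfrac14 M_{N+1}(\delta)^2+R_{N+1}\ge 0$ absorbs the remainder, leaving the full sum $\sum_{k=1}^N M_k^2$ in $\cH_N$. Your observation that the sacrificed $(N{+}1)$-st term is what preserves the sharp coefficients (unlike the factor $\tfrac12$ lost in \eqref{H:M2R1}) matches the paper's construction exactly.
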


\begin{proof}
Repeat the proof of Theorem \ref{H}. Replace $\widetilde\bsX_0$ given by \eqref{H:tildeX1}
with $\widetilde\bsX_N$  given by the same formula with $f\big(\delta(x)\big)$ replaced by $f_{N+1}\big(\delta(x)\big)$.
The same computation as in the proof of Theorem \ref{H} together with Lemma \ref{C:BFT} leads to:
\begin{align}\label{H:divXND-1XNRN}
&\nabla \cdot \bsX_N(x) -\bsX_N(x)\cdot \big(\rho (x) \ID(x)\big)^{-1}\bsX_N(x))=
\big(1-q(x)\big)a(x)\rho(x)\delta(x)^{\beta(x)-2}\nonumber \\
&\Big[\frac{1}{4}\big(\beta(x)+\gamma(x)-1\big)^2
+\frac{1}{4} \sum_{k=1}^NM_k\big(\delta(x)\big)^2 +\frac{1}{4}M_{N+1}\big(\delta(x)\big)^2
+R_{N+1}(x)\Big].
\end{align}
with $R_N(x)$ given by \eqref{H:R1} where $f\big(\delta(x)\big)$ is replaced by $f_{N+1}\big(\delta(x)\big)$. $R_{N+1}(x)$ satisfies \eqref{H:supR1} hence there exists $0 <\nu_{N} < \frac{\nu_0}{2}$ such that on $\Gamma_{\nu_N}$
\begin{equation}
\frac{1}{4}M_{N+1}\big(\delta(x)\big)^2
+R_{N+1}(x) \geq 0.
\end{equation}
\end{proof}

The subleading term, $\sum_{k=1}^NM_k\big(\delta(x)\big)^2$, in 
\eqref{H:calHN} becomes relevant when  the leading term  vanishes. An example where this is the case is the "standard" weak weighted/ boundary level
Hardy inequality i.e. $\bbD(x) =\delta(x)^{\beta}\mathds{1}, \;\rho(x)=1$ when \eqref{H:calH1} becomes
(see also \cite{Ro3})
\begin{equation}\label{C:weakweighH}
h_0\big[\varphi, \varphi\big]
\geq -c_N(\nu) \|\chi_{\Omega \setminus \Gamma{\nu}}\varphi \|^2+ \Big(\frac{\beta-1}{2}\Big)^2\int_{\Gamma_{\nu}}\delta(x)^{\beta-2}|\varphi(x)|^2 \rho(x) dx.
\end{equation}
Notice that for $\beta =1$ \eqref{C:weakweighH} gives no information while Theorem \ref{Href}
gives (for sufficient small $\nu$)

\begin{equation}\label{H:HardyNref}
h_0\big[\varphi, \varphi\big]
\geq -c_N(\nu) \|\chi_{\Omega \setminus \Gamma{\nu}}\varphi \|^2+ \frac{1}{4}
\int_{\Gamma_{\nu}} \frac{1}{\delta(x)}\big( \sum_{k=1}^NM_k\big(\delta(x)\big)^2\big) |\varphi(x)|^2 \rho(x) dx.
\end{equation}

3.  Theorem \ref{H} can be generalised to the case when logarithmic factors are 
added in the behaviour of $d(x)$ (see \eqref{S:blockD}) as $x \rightarrow \partial \Omega$ and we give below such an extension.

\begin{theorem}\label{C:Hln}
Suppose that Assumptions A  and Q with \eqref{S:AD} replaced with
\begin{equation}\label{S:ADln}
d(x)=a(x)\delta(x)^{\beta(x)}\big(\ln \delta(x)\big)^{\alpha}; \quad  a \in \cC^1\big(\Gamma_{\nu_0}; (0,\infty)\big);  \quad \beta \in 
\cC^1\big(\Gamma_{\nu_0}; \bbR\big), \; \alpha \in \bbR.
\end{equation}
hold true.
Then there exists $0 <\nu_{\alpha} <\frac{\nu_0}{2}$ such that for
for all $0<\nu < \nu_{\alpha}$ 
and  $\varphi
 \in \cC^{\infty}_0(\Omega)$:
 
\begin{equation}\label{H:Hardyalpha} 
h_0\big[\varphi, \varphi\big]
\geq -c_1(\nu) \|\chi_{\Omega \setminus \Gamma{\nu}}\varphi \|^2+ \int_{\Gamma_{\nu}}\cH_{\alpha}(x)|\varphi(x)|^2 \rho(x) dx
\end{equation}
where $c_1(\nu) <\infty$,
$\chi_{\Omega \setminus \Gamma{\nu}}$ is the characteristic function of $\Omega \setminus \Gamma{\nu}$ and
\begin{equation}\label{H:calHalpha}
\cH_{\alpha}(x)= \frac{1}{4}\big(1-q(x)\big)a(x)\delta(x)^{\beta(x)-2} \Big[ \big(\beta(x)+\gamma(x)-1 -\frac{\alpha}{\ln\frac{1}{\delta(x)}}\big)^2
+\frac{1-2\alpha -\alpha^2}{\ln\frac{1}{\delta(x)}^2}\Big].
\end{equation}
In particular if $\varphi \in \cC^{\infty}_0(\Gamma_{\nu})$
\begin{equation}\label{H:HardyLalpha}
h_0[\varphi,\varphi] \geq \int_{\Gamma_{\nu}}\cH_{\alpha}(x)|\varphi(x)|^2 \rho(x) dx.
\end{equation} 
\end{theorem}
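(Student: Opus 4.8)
The plan is to rerun the vector-field argument of Theorem \ref{H} with the single modification of decorating the Ansatz with the logarithmic factor now carried by the normal coefficient $d=a\delta^{\beta}(\ln\frac1\delta)^{\alpha}$. Since the quadratic term in Lemma \ref{H:VFA} is weighted by $\tilde d=d^{-1}(1-q)^{-1}=a^{-1}\delta^{-\beta}(\ln\frac1\delta)^{-\alpha}(1-q)^{-1}$, I would take, for $x\in\Gamma_{\nu_0}$,
\[
\widetilde\bsX_{\alpha}=\tfrac12\big(1-q\big)a\,r\,\delta^{\beta+\gamma-1}\big(\ln\tfrac1\delta\big)^{\alpha}\big[\beta+\gamma-1+\kappa\,f(\delta)\big]\nabla\delta ,\qquad f(t)=\tfrac1{\ln\frac1t},
\]
with a constant $\kappa$ fixed at the end, and cut off $\bsX_{\alpha}=\Psi\widetilde\bsX_{\alpha}$ using the same $\Psi$ as in \eqref{H:Psi}. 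The extra factor $(\ln\frac1\delta)^{\alpha}$ is inserted precisely so that one of the two powers it contributes to $\bsX_{\alpha}\cdot(\rho\bbD)^{-1}\bsX_{\alpha}$ cancels the $(\ln\frac1\delta)^{-\alpha}$ of $\tilde d$, leaving the clean quadratic term $\tfrac14(1-q)a\rho\,\delta^{\beta-2}(\ln\frac1\delta)^{\alpha}[\beta+\gamma-1+\kappa f]^{2}$.

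The one new computation is $\nabla\cdot\bsX_{\alpha}$ on $\Gamma_{\nu_0/2}$, where $\bsX_{\alpha}=\widetilde\bsX_{\alpha}$. Here $\nabla(\ln\frac1\delta)^{\alpha}=-\alpha f\,\delta^{-1}(\ln\frac1\delta)^{\alpha}\nabla\delta$, so the coefficient of the leading $\delta^{\beta-2}(\ln\frac1\delta)^{\alpha}$ term acquires a $-\alpha f$ contribution; this is exactly what promotes $(\beta+\gamma-1)^{2}$ to the shifted square $(\beta+\gamma-1-\frac{\alpha}{\ln(1/\delta)})^{2}$ in \eqref{H:calHalpha}. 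Treating the $f'$ term by the elementary identity $\delta f'(\delta)=f(\delta)^{2}$ and completing the square, I expect on $\Gamma_{\nu_0/2}$
\[
\nabla\cdot\bsX_{\alpha}-\bsX_{\alpha}\cdot(\rho\bbD)^{-1}\bsX_{\alpha}=(1-q)a\rho\,\delta^{\beta-2}\big(\ln\tfrac1\delta\big)^{\alpha}\Big[\tfrac14\big(\beta+\gamma-1-\tfrac{\alpha}{\ln\frac1\delta}\big)^{2}+\tfrac{\kappa'}4 f^{2}+R_{\alpha}\Big],
\]
where $\kappa'=2\kappa(1-\alpha)-\kappa^{2}-\alpha^{2}$ is a downward parabola in $\kappa$ (maximal at $\kappa=1-\alpha$) and $R_{\alpha}$ is the exact analogue of \eqref{H:R1}. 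The leading term is $\tfrac14(1-q)d\,\delta^{-2}[\cdots]$, i.e.\ the asserted barrier \eqref{H:calHalpha} with the logarithmic factor of $d$ reinstated.

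Two standard steps then conclude. First, as for \eqref{H:supR1}, Assumptions A and Q together with $\lim_{\delta\to0}\delta^{(1-s)/2}\ln\delta=0$ give a bound $|R_{\alpha}|\le C\,\delta^{(1-s)/2}$ (modulo logarithmic factors) on $\Gamma_{\nu_0/2}$, so $R_{\alpha}$ is negligible beside any fixed positive multiple of $f^{2}=(\ln\frac1\delta)^{-2}$, the latter decaying more slowly than every power of $\delta$. Second, choosing $\kappa=1-\alpha$ makes $\kappa'=1-2\alpha$, so the bracket exceeds the target $\tfrac14[(\beta+\gamma-1-\frac{\alpha}{\ln(1/\delta)})^{2}+(1-2\alpha-\alpha^{2})f^{2}]$ by $\tfrac{\alpha^{2}}4 f^{2}+R_{\alpha}$, which is $\ge0$ on a suitable $\Gamma_{\nu_{\alpha}}$; this both pins down $\nu_{\alpha}$ and yields the coefficient $1-2\alpha-\alpha^{2}$ of \eqref{H:calHalpha}. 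The region $\Omega\setminus\Gamma_{\nu}$ contributes $-c_{1}(\nu)\|\chi_{\Omega\setminus\Gamma_{\nu}}\varphi\|^{2}$ via the bound \eqref{H:sup}, and Lemma \ref{H:VFA} then gives \eqref{H:Hardyalpha}; restricting to $\varphi\in\cC_{0}^{\infty}(\Gamma_{\nu})$ gives \eqref{H:HardyLalpha}.

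The main obstacle is purely the bookkeeping in $\nabla\cdot\bsX_{\alpha}$: the three factors $\delta^{\beta+\gamma-1}$, $(\ln\frac1\delta)^{\alpha}$ and $f(\delta)$ are differentiated at once, and one must check that their $f$- and $f^{2}$-coefficients reassemble into the shifted square plus $\tfrac{\kappa'}4 f^{2}$ while every term containing a gradient of $\beta,\gamma,a,r$ or $q$ is confined to $R_{\alpha}$ and there estimated through Assumptions A and Q. Everything else transcribes the proof of Theorem \ref{H} verbatim.
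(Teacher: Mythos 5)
Your computation is correct, and for $\alpha\neq 0$ it proves the theorem; but the key step is genuinely different from the paper's. Both proofs run the vector--field method of Lemma \ref{H:VFA} with the Ansatz $\widetilde\bsX_{\alpha}=\tfrac12(1-q)ar\,\delta^{\beta+\gamma-1}(\ln\tfrac1\delta)^{\alpha}\,[\beta+\gamma-1+F(\delta)]\nabla\delta$; the difference is the corrector $F$. The paper keeps the unit coefficient on the leading logarithm and enlarges $F$ to the iterated--logarithm function $f_{1,2}(t)=\tfrac1{\ln\frac1t}+\tfrac1{(\ln\frac1t)(\ln\ln\frac1t)}$ of Lemma \ref{C:BFT}, intending the second--level term to be the positive spare that absorbs $R_{\alpha}$; you instead rescale the single--log corrector, $F=\kappa f$ with $\kappa=1-\alpha$, and use the slack $\tfrac{\alpha^{2}}4f^{2}$ between the optimal coefficient $\kappa'=1-2\alpha$ and the claimed $1-2\alpha-\alpha^{2}$ as the absorbing term. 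Your route is simpler (no iterated logarithms, no Lemma \ref{C:BFT}) and, as far as I can check, more robust: with the paper's choice, differentiating $(\ln\tfrac1\delta)^{\alpha}$ against the second piece $M_{2}=\tfrac1{(\ln\frac1\delta)(\ln\ln\frac1\delta)}$ of $f_{1,2}$ produces the cross term $-\tfrac{\alpha}{2}fM_{2}$, which for $\alpha>0$ is negative and dominates the intended spare $\tfrac14M_{2}^{2}$ (since $fM_{2}\gg M_{2}^{2}$); the paper's display \eqref{H:divXalphaD-1XNRalpha} appears to misrecord this contribution as $\tfrac14 fM_{2}-\tfrac{\alpha}{4}M_{2}^{2}$, whereas the correct value is $\tfrac14M_{2}^{2}-\tfrac{\alpha}{2}fM_{2}$. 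Your choice $\kappa=1-\alpha$ produces no such cross term at all. (You are also right to read \eqref{H:calHalpha} as carrying the factor $(\ln\tfrac1\delta)^{\alpha}$; its omission in the statement is evidently a typo, and for $\alpha<0$ it would matter.)

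The one genuine gap is the value $\alpha=0$, which the statement allows. There your absorbing term $\tfrac{\alpha^{2}}4f^{2}$ vanishes and nothing controls the sign of $R_{\alpha}$; indeed, with a pure $\kappa f$ corrector one can only reach coefficients strictly below $\kappa'\leq 1-2\alpha$, so the claimed value $1-2\alpha-\alpha^{2}=1$ at $\alpha=0$ is out of reach of your Ansatz. Two fixes are available. Either observe that $\alpha=0$ is exactly Theorem \ref{Href} with $N=1$ and run your argument only for $\alpha\neq0$; or merge the two ideas and take $F=(1-\alpha)f+M_{2}$. In the latter case the $fM_{2}$ cross terms cancel identically and one obtains on $\Gamma_{\frac{\nu_0}{2}}$
\begin{equation*}
\nabla\cdot\bsX_{\alpha}-\bsX_{\alpha}\cdot(\rho\bbD)^{-1}\bsX_{\alpha}
=(1-q)a\rho\,\delta^{\beta-2}\Big(\ln\tfrac1\delta\Big)^{\alpha}
\Big[\tfrac14\Big(\beta+\gamma-1-\tfrac{\alpha}{\ln\frac1\delta}\Big)^{2}
+\tfrac{1-2\alpha}{4}f^{2}+\tfrac14M_{2}^{2}+R_{\alpha}\Big],
\end{equation*}
whose spare $\tfrac14M_{2}^{2}$ dominates the power--law small $R_{\alpha}$ for every $\alpha\in\bbR$, yielding the theorem (with the improved constant $1-2\alpha$ in place of $1-2\alpha-\alpha^{2}$) uniformly in $\alpha$, including $\alpha=0$.
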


\begin{proof}
Let $\tilde\nu_0 = \min \{\nu_0, \frac{1}{e^e}\}$ and on $\Gamma_{\tilde\nu_0}$ 
\begin{equation}\label{H:tildeXalpha}
\widetilde\bsX_{\alpha}= \frac{1}{2}\big(1-q(x)\big)a(x)r(x)\delta(x)^{\beta(x)+\gamma(x)-1}
\big(\ln\frac{1}{\delta(x)}\big)^{\alpha}
\big[\beta(x)+\gamma(x)-1 + f_{1,2}\big(\delta(x)\big)\big]\nabla\delta(x)
\end{equation} 
where (see \eqref{C:Lap}, \eqref{C:Mf})

\begin{equation}
f_{1,2}(t)= \frac{1}{\ln\frac{1}{t}} +
\frac{1}{\big(\ln\frac{1}{t}\big)\big(\ln \ln \frac{1}{t}\big)}.
\end{equation}

Defining $\bsX_{\alpha}$ in terms of $\widetilde\bsX_{\alpha}$ as in the proof of Theorem \ref{H} with  $\nu_0$ replaced by $\tilde\nu_0$, repeating  the computations  and using the Lemma \ref{C:BFT}
one obtains that on $\Gamma_{\frac{\tilde\nu_0}{2}}$
\begin{align}\label{H:divXalphaD-1XNRalpha}
&\nabla \cdot \bsX_{\alpha}(x) -\bsX_{\alpha}(x)\cdot \big(\rho (x) \ID(x)\big)^{-1}\bsX_{\alpha}(x))=
\big(1-q(x)\big)a(x)\rho(x)\delta(x)^{\beta(x)-2}
\Big(\ln\frac{1}{\delta(x)}\Big)^{\alpha}\nonumber \\
&\Big\{\Big[\frac{1}{4}\Big(\beta(x)+\gamma(x)-1-\alpha \frac{1}{\ln\frac{1}{\delta(x)}}\Big)^2+
\frac{1-2\alpha -\alpha^2}{\ln\frac{1}{\delta(x)}^2}+ \nonumber \\
&\Big(\frac{1}{\ln\frac{1}{\delta(x)}}\Big)^2 \frac{1}{\ln \ln \frac{1}{\delta(x)}}\Big(1-\alpha \frac
{1}{\ln \ln \frac{1}{\delta(x)}}\big) \Big] +R_{\alpha}(x)\Big\}.
\end{align}
with
\begin{equation}
\sup_{x \in \Gamma_{\frac{\tilde\nu_0}{2}}}\big|R_{\alpha}(x)\big| \delta(x)^{1-s} < \infty.
\end{equation}
Then
 \eqref{H:Hardyalpha} with \eqref{H:calHalpha} follows choosing ${\nu_{\alpha}}$ small enough as  to have on $\Gamma_{\frac{\tilde\nu_{0}}{2}}$
\begin{equation}
\frac{1}{4}\Big(\frac{1}{\ln\frac{1}{\delta(x)}}\Big)^2 \frac{1}{\ln \ln \frac{1}{\delta(x)}}\Big(1-\alpha \frac
{1}{\ln \ln \frac{1}{\delta(x)}} \Big) +R_{\alpha}(x) \geq 0.
\end{equation}
\end{proof}

4.  For the sake of simplicity we assumed that $\Omega$ is bounded, simply connected and with smooth boundary. It is not hard to generalise  Theorems \ref{H} and \ref{Href} to the case when boundedness and simply connectedness conditions are relaxed. Still in order to control the behaviour at infinity uniformity conditions are to be imposed. We follow \cite{Br} at this point. More precisely, let $\Gamma \subset
\bbR^d$ be a $k$-dimensional, \\$k=0,1,2, ...,d-1$ differentiable manifolds without boundary. To each point $x^* \in \Gamma$ we associate a Cartesian system of coordinates in which the first $k$ coordinate axes are in the tangent space to $\Gamma$ at $x^*$. For given $0< r,\; C< \infty$ we say that $\Gamma \in \cF_{r,C}$ if for every $x^* \in \Gamma$ there exists a sphere a radius $r$ with centre at $x^*$ such that in this system of coordinates $\Gamma \cap \big\{ \big| x-x^*\big| <r\big\}$ 
is given by the equations $x_j = f_j\big(x_1,...,x_k \big);\;j=k+1,...,d$ where $ f_j \in \cC\big( \bbR^d; \bbR\big)$ and $\big| D^{\alpha}f_j(x_1,...,x_k)\big| \leq C$ for all multiindices 
$\alpha, \; |\alpha | \leq 2$.

For example, in this setting Theorem \ref{H} takes the form:
\begin{theorem}\label{HMC}
Let $\Omega \subset \bbR^d$ such that  $\partial \Omega =\cup_{j=1}^{N} \Gamma_j,\; N \leq \infty; \;
\Gamma_j \in  \cF_{r,C}; \; \text{dist} \big(\Gamma_j,\Gamma_k\big) \geq d_0 >0$.

Suppose that Assumption A and Q hold true with $\nu_{\Omega} \leq \frac{d_0}{4}$ , and in addition

\begin{equation}
(1-q)^{-1}, \; a,\; r,\; \frac{1}{a},\; \frac{1}{r},\; \beta, \;\gamma  \in L^{\infty}\big(\Gamma_{\nu_0}\big).
\end{equation}

Then   there exists $0<\nu_1 <\frac{\nu_0}{2}$ such that  for all $0<\nu < \nu_1$ 
and  $\varphi
 \in \cC^{\infty}_0(\Omega)$:

\begin{align}
h_0\big[\varphi, \varphi\big]
=\int_{\Omega}\overline{\nabla\varphi(x)}\cdot \big(\bbD(x)\nabla\varphi(x)\big)\rho(x)dx\nonumber \\
\geq -c_1(\nu) \|\chi_{\Omega \setminus \Gamma_{\nu}} \varphi \|^2+ \sum_{j=1}^{N}\int_{\Gamma_{j,\nu}}\cH_{0,j}(x)|\varphi(x)|^2 \rho(x) dx
\end{align}
where
\begin{equation}
\Gamma_{j,\nu}= \big\{x \in \Omega\big| \dist \big(x,\Gamma_j\big) <\nu \big\}
\end{equation}
and

\begin{equation}\label{H:calH0j}
\cH_{0,j}(x)= \frac{1}{4}\big(1-q(x)\big)a(x)\delta(x)^{\beta(x)-2} \Big[ \big(\beta(x)+\gamma(x)+d-k_j -2\big)^2 +\frac{1}{2}\frac{1}{\Big(\ln \frac{1}{\delta(x)}\Big)^2}\Big].
\end{equation}
In addition if $\varphi \in \cC^{\infty}_0(\Gamma_{j,\nu})$
\begin{equation}
h_0[\varphi,\varphi] \geq \int_{\Gamma_{j,\nu}}\cH_{0,j}(x)|\varphi(x)|^2 \rho(x) dx.
\end{equation}
\end{theorem}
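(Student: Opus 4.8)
The plan is to reduce the statement to the single-boundary-piece computation already performed in the proof of Theorem~\ref{H}, the only genuinely new ingredient being the behaviour of $\Delta\delta$ near a boundary component of codimension larger than one. First I would use the separation hypothesis $\dist(\Gamma_j,\Gamma_k)\ge d_0>0$ together with $\nu_\Omega\le d_0/4$: for small $\nu$ the tubular neighbourhoods $\Gamma_{j,\nu}$ are pairwise disjoint, on each of them $\delta(x)=\dist(x,\Gamma_j)$, and $\Gamma_\nu=\bigcup_j\Gamma_{j,\nu}$, so a test function can be analysed piece by piece. Since Lemma~\ref{H:VFA} holds for an arbitrary real $\cC^1$ vector field on $\Omega$ and uses neither boundedness nor simple connectedness, the entire argument again rests on choosing a good field $\bsX$ and estimating $\nabla\cdot\bsX-\bsX\cdot(\rho\bbD)^{-1}\bsX$ separately near each $\Gamma_j$ and in the bulk.

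The decisive new analytic input is the replacement of Lemma~\ref{GT} by its higher-codimension analogue. For a $k_j$-dimensional manifold $\Gamma_j\in\cF_{r,C}$, i.e.\ of codimension $d-k_j$, the distance function is $\cC^2$ in a one-sided tubular neighbourhood, $|\nabla\delta|=1$, and, in contrast with the hypersurface case, $\Delta\delta$ is singular:
\[
\delta(x)\,\Delta\delta(x)=(d-k_j-1)+O\!\big(\delta(x)\big),\qquad \delta(x)\to 0 .
\]
For $k_j=d-1$ this reduces to $|\delta\Delta\delta|\le C_{\Delta}\delta$ as in \eqref{S:graddelta}, and for an isolated point $k_j=0$ it is the elementary identity $\Delta|x|=(d-1)/|x|$. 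I would prove this via the normal exponential map of $\Gamma_j$, the uniform two-sided bound on the second fundamental form and on the focal radius being furnished precisely by the class $\cF_{r,C}$; this is what lets the remainder estimates, the threshold $\nu_1$, and the constant $c_1(\nu)$ be taken uniformly in $j$, so that the sum over $j$ survives even when $N=\infty$.

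With this in hand I would take on each $\Gamma_{j,\nu_0}$ the Ansatz \eqref{H:tildeX1}, but with the shift $\beta(x)+\gamma(x)-1$ inside the square bracket replaced by $\beta(x)+\gamma(x)+d-k_j-2$:
\[
\widetilde\bsX_{0,j}=\tfrac12\big(1-q\big)a\,r\,\delta^{\beta+\gamma-1}\big[\beta+\gamma+d-k_j-2+f(\delta)\big]\nabla\delta ,
\]
glued to $0$ away from $\Gamma_{j,3\nu_0/4}$ as in \eqref{H:X1}--\eqref{H:Psi}. The point of the shift is that in $\nabla\cdot\widetilde\bsX_{0,j}$ the term $(\nabla\cdot\text{scalar})\,\Delta\delta$ now contributes, to leading order, the extra factor $d-k_j-1$ multiplying $[\,\cdots+f\,]$; combined with the factor $\beta+\gamma-1$ from differentiating $\delta^{\beta+\gamma-1}$ along $\nabla\delta$, the cross term becomes $[\,B_j+f\,]\,B_j$ with $B_j:=\beta+\gamma+d-k_j-2$. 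The quadratic term $\bsX\cdot(\rho\bbD)^{-1}\bsX$ is computed exactly as in \eqref{H:X1D-1X1}, equalling $\tfrac14(1-q)a\rho\delta^{\beta-2}(B_j+f)^2$, so by the identity \eqref{H:f1M1} the difference collapses to $\tfrac14(1-q)a\rho\delta^{\beta-2}\big[B_j^2+(\ln\tfrac1\delta)^{-2}\big]$ up to a remainder $R_{0,j}$, with $\rho$ passing into the measure. This is precisely the barrier $\cH_{0,j}$ of \eqref{H:calH0j}.

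It then remains to control remainders and assemble the inequality. Besides the terms bounded in \eqref{H:supR1}, the remainder now carries the \emph{bounded} quantity $\delta\Delta\delta-(d-k_j-1)$; all of these are $O(\delta^{(s-1)/2})$ uniformly in $j$, so as in \eqref{H:M2R1} one absorbs half of the $(\ln\tfrac1\delta)^{-2}$ term to dominate $R_{0,j}$, fixing $\nu_1$ and yielding the coefficient $\tfrac12$ in \eqref{H:calH0j}. On $\Omega\setminus\Gamma_\nu$ the cut-off field gives a bounded contribution, producing $-c_1(\nu)\|\chi_{\Omega\setminus\Gamma_\nu}\varphi\|^2$, and summing the per-piece lower bounds over $j$ gives the claim; the $\cC_0^\infty(\Gamma_{j,\nu})$ case follows since only the $j$-th piece is then seen. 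I expect the main obstacle to be the second step: establishing the uniform asymptotics of $\delta\Delta\delta$ and a uniform tubular-neighbourhood radius for higher-codimension components out of $\Gamma_j\in\cF_{r,C}$. Everything else is a transcription of the proof of Theorem~\ref{H} with $\beta+\gamma-1$ replaced by $\beta+\gamma+d-k_j-2$.
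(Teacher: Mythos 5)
Your proposal is correct and follows essentially the same route as the paper's own outline: the identical shifted vector-field Ansatz with $\beta(x)+\gamma(x)-1$ replaced by $\beta(x)+\gamma(x)+d-k_j-2$, the same algebra collapsing the cross and quadratic terms via the identity \eqref{H:f1M1}, the same absorption of the $O\big(\delta^{(1-s)/2}\big)$ remainder into half of the logarithmic term, and the same use of the separation hypothesis to treat the components $\Gamma_{j,\nu}$ disjointly. The only difference is that the higher-codimension distance estimate $\big|\Delta\delta(x)-\tfrac{d-k_j-1}{\delta(x)}\big|\le C_{\Delta}$, which you correctly single out as the decisive new ingredient and propose to prove via the normal exponential map, is simply quoted by the paper from \cite{Br} (Lemmas 6.1 and 6.2) and \cite{AS} (Theorem 3.2).
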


\textit{ Outline of proof.} With the following changes the proof of Theorem \ref{HMC} mimics closely the proof of Theorem \ref{H}.

i. Instead of Lemma \ref{GT} one has to use its generalisation (see Lemma 6.1 and Lemma 6.2
in \cite{Br}, see also Theorem 3.2 in \cite{AS})

\begin{lemma}\label{Br}

There exists $0< \nu_{\Omega} \leq \frac{d_0}{4}$ and $C_{\Delta} <\infty$ such that
$\delta \in \cC^2\big(\Gamma_{\nu_{\Omega}}\big)$ and for 
 $x \in \Gamma_{j,\nu_{\Omega},},$

 \begin{equation}
\big| \nabla \delta (x)\big|  =1; \quad \big| \Delta \delta (x) - \frac{d-k_j -1}{\delta (x)}  \big| \leq C_{\Delta}.
\end{equation}
where $k_j$ is the dimension of $\Gamma_j$.
\end{lemma}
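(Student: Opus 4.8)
The plan is to prove Lemma~\ref{Br} by the classical theory of the distance function to a submanifold, realized through Fermi (normal) coordinates, taking care that every constant is uniform in $j$ and in the base point since $N$ may be infinite. First I would fix the collar width. Because each $\Gamma_j\in\cF_{r,C}$ is a $\cC^2$ manifold whose local graph functions have derivatives up to order $2$ bounded by $C$ uniformly, the reach of each $\Gamma_j$ (the radius of a tubular neighborhood on which the nearest-point projection onto $\Gamma_j$ is single valued and $\cC^1$) is bounded below by a constant depending only on $r$ and $C$; combined with $\dist(\Gamma_j,\Gamma_k)\ge d_0$ this yields a single $\nu_{\Omega}\le d_0/4$ such that for every $x\in\Gamma_{j,\nu_{\Omega}}$ the nearest point $\pi(x)\in\partial\Omega$ is unique and lies on $\Gamma_j$ alone. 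On this collar the normal exponential map $(p,\omega,t)\mapsto p+t\omega$, with $p\in\Gamma_j$, $\omega$ a unit normal, and $t\in(0,\nu_{\Omega})$, is a $\cC^1$ diffeomorphism onto $\Gamma_{j,\nu_{\Omega}}$, from which $\delta\in\cC^2(\Gamma_{\nu_{\Omega}})$ follows together with $\delta(p+t\omega)=t$ and $\pi(x)=p$.

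Next I would read off the eikonal equation. On the collar one has $\nabla\delta(x)=(x-\pi(x))/\delta(x)$, which is a unit vector, so $|\nabla\delta(x)|=1$; equivalently, $\nabla\delta$ is the unit normal to the level set $T_t=\{\delta=t\}$ pointing away from the core $\Gamma_j$. The Laplacian is then the divergence of this unit normal, so $\Delta\delta(x)=\nabla\cdot\nabla\delta(x)$ equals the mean curvature, i.e. the trace of the second fundamental form, of the tube $T_{\delta(x)}$ at $x$, computed with respect to the normal $\nabla\delta$.

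The heart of the argument is to compute this mean curvature and isolate its singular part. The tube $T_t$ is the image of the unit normal sphere bundle of $\Gamma_j$ under $(p,\omega)\mapsto p+t\omega$, a hypersurface of dimension $d-1$ whose principal curvatures split into two families. The $d-k_j-1$ directions obtained by rotating $\omega$ on the sphere $S^{d-k_j-1}\subset N_p\Gamma_j$ with $p$ fixed are spherical of radius $t$ and each contributes a principal curvature $1/t$, giving the singular total $(d-k_j-1)/t$. The remaining $k_j$ directions, obtained by moving $p$ along $\Gamma_j$ with $\omega$ held parallel in the normal connection, yield $\dot x=(I-tA_\omega)\dot p$ and hence principal curvatures $\kappa_i/(1-t\kappa_i)$, where the $\kappa_i$ are the eigenvalues of the shape operator $A_\omega$ of $\Gamma_j$; since the bound $C$ controls $|\kappa_i|$ uniformly, shrinking $\nu_{\Omega}$ so that $\nu_{\Omega}\sup_i|\kappa_i|\le\tfrac12$ keeps each such term bounded. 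Summing the two families gives
\begin{equation*}
\Delta\delta(x)=\frac{d-k_j-1}{\delta(x)}+\sum_{i=1}^{k_j}\frac{\kappa_i}{1-\delta(x)\kappa_i},
\end{equation*}
whose remainder is bounded by a constant $C_{\Delta}$ depending only on $C$ and $\nu_{\Omega}$, which is the asserted estimate.

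The main obstacle is uniformity. Because $N$ may be infinite one cannot argue by compactness on each $\Gamma_j$ separately, but must extract the lower bound on the reach and the upper bound on the shape operator directly from the single pair $(r,C)$ defining $\cF_{r,C}$, so that $\nu_{\Omega}$ and $C_{\Delta}$ are genuinely $j$-independent. The separation hypothesis $\dist(\Gamma_j,\Gamma_k)\ge d_0$ is exactly what guarantees that within $\nu_{\Omega}\le d_0/4$ of $\Gamma_j$ the distance to all of $\partial\Omega$ coincides with the distance to $\Gamma_j$, so that competing components never interfere and the clean tube geometry used above is valid throughout $\Gamma_{j,\nu_{\Omega}}$.
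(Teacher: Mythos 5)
Your argument is correct, but note that the paper does not actually prove Lemma \ref{Br}: it is imported from the literature, namely Lemmas 6.1 and 6.2 of Brusentsev \cite{Br} (quantitative tubular-neighbourhood and distance-function estimates precisely for manifolds of the class $\cF_{r,C}$) and Theorem 3.2 of Ambrosio--Soner \cite{AS} (the expansion of $\Delta\delta$ for submanifolds of arbitrary codimension). What you have written is in effect a self-contained reconstruction of those cited results, and the substance checks out: at $x=p+t\omega$ the Hessian of $\delta$ is block diagonal with respect to the splitting into the $d-k_j-1$ spherical directions (eigenvalue exactly $1/t$) and the $k_j$ directions $(I-tA_\omega)\dot p$ coming from $\Gamma_j$ (eigenvalues $\pm\kappa_i/(1-t\kappa_i)$, the sign depending on the convention for the shape operator), so taking traces yields the asserted identity, and your uniformity mechanism --- reach bounded below in terms of $(r,C)$ alone, with $\dist(\Gamma_j,\Gamma_k)\geq d_0$ and $\nu_\Omega\leq d_0/4$ forcing $\delta=\dist(\cdot,\Gamma_j)$ on $\Gamma_{j,\nu_\Omega}$ --- is exactly what is needed since $N$ may be infinite. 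Two steps deserve to be said more carefully. First, the lower bound on the reach uses not just $|D^\alpha f_j|\leq C$ but the requirement in the definition of $\cF_{r,C}$ that the \emph{entire} intersection $\Gamma\cap\{|x-x^*|<r\}$ is a single graph; this is what rules out a distant sheet of $\Gamma_j$ re-entering the ball and is the reason reach is controlled by $(r,C)$ alone. Second, $\delta\in\cC^2$ does not follow merely from the normal exponential map being a $\cC^1$ diffeomorphism (that gives only $\delta\in\cC^1$); one needs the $\cC^1$ regularity of the nearest-point projection $\pi$ combined with $\nabla\delta(x)=(x-\pi(x))/\delta(x)$, which you do state in the next breath, so the gap closes in the standard Foote/Gilbarg--Trudinger fashion. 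The trade-off between the two routes is the evident one: the paper's citation is short and defers all quantitative work to \cite{Br} and \cite{AS}, whereas your proof makes the $j$-independence of $\nu_\Omega$ and $C_\Delta$, which is the whole point of the hypothesis $\Gamma_j\in\cF_{r,C}$, completely transparent.
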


ii.  The choice of  $\widetilde\bsX_0(x)$ in \eqref{H:tildeX1} is now as follows.

For  $x \in \Gamma_{j,\nu_{0}},$:

\begin{equation}
\widetilde\bsX_0= \frac{1}{2}\big(1-q(x)\big)a(x)\rho(x)\delta(x)^{\beta(x)-1}\big(\beta(x)+\gamma(x)+d -k_j-2 +f\big(\delta(x)\big)\big)\nabla\delta(x).
\end{equation}

Accordingly, for $x \in \Gamma_{j,\nu_{0}} \; R_0(x)$ in \eqref{H:divX1} rewrites as:
\begin{equation}
R_0(x)=\frac{1}{2}\big(\beta(x)+\gamma(x)+d -k_j-1\big) + \mathcal O\big(\delta (x)^\frac{1-s}{2}\big). 
\end{equation}

5. Together with \eqref{H:calH1} giving the Hardy barrier, the inequalities \eqref{E:ESA},  \eqref{E:ESAM} which, for domains with   $\cC^2$ boundary, shows how the coefficients of $H$ merge together to ensure  
essential self-adjointness, are the main results of this paper. As in the case of Hardy barrier the essential self-adjointness condition is local in the sense that it is expressed in terms of the infimum over boundary layers of local quantities.
Notice that one can have essential self-adjointness of $H$ even for negative (deconfining!) potentials and essential self-adjointness of $H_ 0$
when $\beta(x)$  is negative i.e. $\bbD(x)$ blows up as $x \rightarrow \partial \Omega$. What our results add to the previous ones is on one hand that the degree, $\beta(x)$, of degeneracy of $\bbD(x)$ is allowed to vary over $\partial \Omega$ thus giving  a positive answer to a question in \cite{Ro1}
and on the other hand the isotropy of $\bbD(x)$ in the limit $x \rightarrow \partial \Omega$ is weakened to the assumption that it has near $\partial \Omega$ a $x$-dependent $2 \times 2$
block structure dominated by the diagonal (see \eqref{S:blockD}, \eqref{H:q} and \eqref{S:Q}).

6. Having the Hardy barrier from Theorem \ref{HMC} one can extend also the criteria for essential self-adjointness to unbounded multiply connected domains with $\cC^2$ boundary. For simplicity we only give 
the analog of Theorem \ref{Mbetaconst} for $V = 0$.

\begin{theorem}
In the setting of Theorem \ref{HMC} suppose suppose that
\begin{equation}\label{C:Dinfty}
\big(1+ | \cdot |\big)^2 \bbD \in  L^{\infty}\big( \Omega \setminus \Gamma_{\nu_0}\big),
\end{equation}

  and in addition there exists
 \begin{equation}\label{E:mu}
  0<\mu <\frac{\nu_0}{2}
 \end{equation}
 such that
  on $\Gamma_{\mu,j}$ either:
  
i. 
\begin{equation}\beta(x) = \beta_j \geq 2
\end{equation}
or

ii

\begin{equation}
\beta(x)  = \beta_j < 2, \quad \inf_{x \in \Gamma_{\mu,j}} \big(1-q(x)\big)
\Big (\frac{\beta_j+\gamma(x) + d   -k_j -2}{\beta_j -2}\Big)^2  \geq 1.
\end{equation}
Then $H_0$ is essentially self-adjoint.
\end{theorem}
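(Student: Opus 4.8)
The plan is to run the strategy of Theorems \ref{ESAC}--\ref{M} once more, feeding Lemma \ref{BI} and Lemma \ref{E:supml} with the boundary Hardy barrier $\cH_{0,j}$ of Theorem \ref{HMC} in place of $\cH_0$. The essential novelty is that two independent confinement mechanisms must now be installed simultaneously: the usual Agmon barrier near each boundary component $\Gamma_j$ (where $\delta(x)\to0$), and a completeness mechanism near spatial infinity (where $|x|\to\infty$), the latter supplied precisely by the degeneracy hypothesis \eqref{C:Dinfty}, which forces $\|\bbD(x)\|\le\const\,(1+|x|)^{-2}$ away from the boundary and hence makes $\Omega$ complete at infinity in the metric $ds^2=\sum\bbD^{-1}_{jk}\,dx_j\,dx_k$.

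First I would fix the lower bound. By Theorem \ref{HMC}, $h_0[\varphi,\varphi]\ge -c_1(\nu)\|\chi_{\Omega\setminus\Gamma_\nu}\varphi\|^2+\sum_j\int_{\Gamma_{j,\nu}}\cH_{0,j}|\varphi|^2\rho\,dx$, so \eqref{E:h} holds with $E_0=-c_1(\nu)$ and $B=\sum_j\cH_{0,j}\chi_{\Gamma_{j,\nu}}$ (using $V=0$ and $H_0\ge0$). Next I would choose $g$ component by component: on the part of $\Gamma_{\mu,j}$ where $\beta_j\ge2$ (the complete pieces) set $g=0$, and on the part where $\beta_j<2$ set $g(x)=\tfrac{2-\beta_j}{2}\ln\delta(x)\,\Psi(x)$ with the cutoff $\Psi$ of \eqref{H:Psi}. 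Then $\nabla g\cdot(\bbD\nabla g)=a(\tfrac{2-\beta_j}{2})^2\delta^{\beta_j-2}$ exactly as in \eqref{E:gradgMconst}, and verifying \eqref{E:gB} reduces, after discarding the positive logarithmic tail of $\cH_{0,j}$ in \eqref{H:calH0j} and absorbing the $\Psi$-transition region into the constant $\tfrac{|E-E_0|}{2}$, to $(2-\beta_j)^2\le(1-q)(\beta_j+\gamma+d-k_j-2)^2$, which is exactly condition (ii). This is the verbatim computation of Theorem \ref{Mbetaconst} with $\gamma-1$ replaced by $\gamma+d-k_j-2$, the codimension shift already built into $\cH_{0,j}$.

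It then remains to verify the hypothesis of Lemma \ref{E:supml}. Here I would use the product cutoff $\phi_l(x)=k_l(\delta(x))\,j_l(|x|)$, with $k_l$ as in \eqref{E:kl} and $j_l$ a radial cutoff equal to $1$ for $|x|\le R_l$ and $0$ for $|x|\ge R_{l+1}$, $R_l\uparrow\infty$, $|j'_l|\le 2/(R_{l+1}-R_l)$; since $r_l\downarrow0$ and $R_l\uparrow\infty$, one has $\phi_l|_K=1$ for large $l$ on each compact $K$. The crucial observation is that $\supp\nabla\phi_l\subset\{r_{l+1}<\delta<r_l\}\cup\{R_l<|x|<R_{l+1}\}$, so each $x$ lies in at most two of these sets; exact disjointness of the $B_l$ is therefore replaced by overlap multiplicity at most two, which changes only the constant in the harmonic-series argument of Lemma \ref{E:supml}. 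On the boundary piece the term $\nabla\phi_l\cdot(\bbD\nabla\phi_l)\le(k'_l)^2 a\delta^{\beta}\le\const\,\delta^{\beta-2}$ and $e^{2g}=\delta^{2-\beta_j}$ combine in \eqref{E:|m|} to a bound independent of $l$, exactly as in Theorem \ref{Mbetaconst}; on the infinity piece $g=0$, so $|m_l|=\nabla\phi_l\cdot(\bbD\nabla\phi_l)\le(j'_l)^2\|\bbD\|\le\const\,(R_{l+1}-R_l)^{-2}(1+R_l)^{-2}$, which by \eqref{C:Dinfty} is even $o(1)$ in $l$ (choosing e.g. $R_l=2^l$ makes this explicit).

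The main obstacle is the coordination of the two mechanisms in the transition region where an unbounded component $\Gamma_j$ runs off to infinity, i.e.\ where $\delta(x)$ is small and $|x|$ is large at once: there both $k'_l$ and $j'_m$ (for possibly distinct $l,m$) may be active, and \eqref{C:Dinfty} does not by itself bound the tangential block of $\bbD$ inside $\Gamma_{\nu_0}$. I would control this by reading \eqref{C:Dinfty} as valid on the whole support of the radial cutoff (so that there $e^{2g}\le1$ times $(j'_l)^2\|\bbD\|$ is still $O((1+|x|)^{-2}(R_{l+1}-R_l)^{-2})$), and by exploiting the uniform regularity $\Gamma_j\in\cF_{r,C}$, the separation $\dist(\Gamma_j,\Gamma_k)\ge d_0$, and Lemma \ref{Br} to obtain estimates on $\delta$, $\Delta\delta$ and the attendant constants uniform in $j$. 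The genuinely delicate point is precisely this uniformity across the (possibly infinitely many, possibly unbounded) components: condition (ii) must hold with a gap uniform in $j$, and the $L^\infty$ bounds postulated in Theorem \ref{HMC} must be invoked to keep $c_1(\nu)$, the constant in $|m_l|$, and $E_0$ independent of $j$. Once this uniformity is secured, summing over $l$ as in Lemma \ref{E:supml} forces $\psi_E\equiv0$ and yields essential self-adjointness.
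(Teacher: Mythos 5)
Your proposal is correct and is essentially the paper's own proof: feed the barrier $\cH_{0,j}$ of Theorem \ref{HMC} into Lemma \ref{BI}, choose $g$ componentwise exactly as in Theorem \ref{ESAC} (case i) and Theorem \ref{Mbetaconst} (case ii, with $\gamma-1$ replaced by $\gamma+d-k_j-2$, which turns \eqref{E:gB} into condition ii), and conclude via Lemma \ref{E:supml} after appending a radial factor to the cutoffs $\phi_l$. The only place you diverge is that radial factor: the paper takes $\phi_l(x)=k_l(\delta(x))\,k_\infty(|x|/l)$, whose transition annuli $\{l<|x|<2l\}$ overlap with multiplicity growing like $|x|$, so the disjointness invoked in the proof of Lemma \ref{E:supml} is lost and the summation argument survives only because \eqref{C:Dinfty} makes $|m_l|$ decay (like $l^{-4}$) on the part of those annuli outside $\Gamma_{\nu_0}$ --- a point the paper leaves implicit; your dyadic radii $R_l=2^l$ keep the radial shells pairwise disjoint, so your multiplicity-two bookkeeping is a cleaner way to preserve the harmonic-series argument. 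Finally, the corner difficulty you single out is genuine but is not peculiar to your write-up: when a component $\Gamma_j$ is unbounded, every radial shell meets $\Gamma_{\nu_0}$, where \eqref{C:Dinfty} as literally stated (on $\Omega\setminus\Gamma_{\nu_0}$) controls neither the tangential block $d_{22}$ nor $\tfrac{x}{|x|}\cdot\bbD(x)\tfrac{x}{|x|}$, and the paper's one-sentence verification passes over this region in silence; your proposal to read \eqref{C:Dinfty} on the whole support of the radial cutoff is a legitimate (mildly strengthening) repair of a gap that is present in the published argument as well.
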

\begin{proof}
The verification of Lemma \ref{BI}  and Lemma \ref{E:supml} follows the one in Theorem
  \ref{ESAC} in the case i. and the one in Theorem \ref{Mbetaconst} in the case ii. respectively. Since
for unbounded domains $\phi_l$ as defined in \eqref{E:phil} do not have compact support
one has to replace \eqref{E:phil} with
\begin{equation}\label{E:philunb}
\phi_l(x) = k_l\big(\delta(x)\big)k_{\infty}\big(\frac{|x|}{l}\big)
\end{equation}
where $k_l$ is given by \eqref{E:kl} and
\begin{equation}\label{E:kinfty}
k_\infty(t)\in \cC^1\big((0,\infty)\big);\quad \supp k'_{\infty} \subset (1,2) ;\quad
k_{\infty}(t)=
\begin{cases}
0\,, &\quad t>2\\
1\,, &\quad t<1,
\end{cases}
\end{equation}
\begin{equation}\label{E:kinftyderiv}
\big|k'_{\infty}(t)\big| \leq 2.
\end{equation}

Now, notice that as $|x| \rightarrow \infty$, $ \big|\nabla k_{\infty}\big(  \frac{|x|}{l}\big)\big| \sim
\frac{1}{l}$ which together with \eqref{C:Dinfty} implies the boundedness of
$\nabla \phi_l \cdot \big(\bbD\nabla \phi_l\big)$ as $|x| \rightarrow \infty$.

\end{proof}

7. Building on methods and results in \cite{NN2} and \cite{Ro1} D. W.  Robinson \cite{Ro2}
made a thorough study of essential self-adjointness of $H_0$ for the case $\rho =1$,
$\beta =const$, $\bbD \in L^{\infty}(\Omega: \bbR^{d \times d})$ and is isotropic in the limit
 $\delta(x) \rightarrow 0$:

\begin{equation}\label{C:Disot}
\inf_{r\in (0,r_0]} \sup_{\Gamma_r} \|\bbD(x)\delta(x)^{-\beta}- d(x)\mathds{1}\| =0.
\end{equation}

For the sake of simplicity  of  the discussion below we state his main result for the particular case when $\Omega$ is bounded, simply connected and with $\cC^2$ boundary.

\begin{theorem}\label{Ro} 
Suppose that $\bbD$ is Lipschitz continuous and there exists $r_0 >0$ such 
that \eqref{C:Disot} holds true with $d$ Lipschitz continuous $d,\; \frac{1}{d},\;
|\nabla d| \in L^{\infty}\big(\Gamma_{r_0}\big)$. Then:

i. For $\beta >\frac{3}{2}$  $H_0$ is essentially self-adjoint.

ii. If in addition

\begin{equation} 
\sup_{\Gamma_{r_0}}\big| \big( \nabla\bbD\delta\big)(x)\cdot\nabla \delta(x)\big| <\infty,
\end{equation}
where 
\begin{equation}
\big( \nabla\bbD\delta^{-\beta}\big)_j:= \sum_{k=1}^{d}\partial_k \big( \nabla\bbD\delta^{-\beta}\big)_{kj}, 
\end{equation}
 $H_0$ is not essentially self-adjoint for $\beta <\frac{3}{2}$.
\end{theorem}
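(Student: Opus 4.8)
The two parts of Theorem \ref{Ro} call for opposite techniques: part i is a confinement statement that fits directly into the machinery of Section \ref{S:4}, while part ii is a non-confinement (limit-circle) statement requiring the construction of a genuine deficiency element. The plan is to deduce i from the criteria already proved, and to treat ii by a one-dimensional indicial analysis near $\partial\Omega$ promoted to a rigorous deficiency element.

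For part i, the plan is to invoke the self-adjointness criteria of Section \ref{S:4} with $\rho=1$ (so $\gamma=0$), $V=0$ (so $v=0$, $w_\mu=0$), and $\beta$ constant. The isotropy hypothesis \eqref{C:Disot} forces the off-diagonal blocks of $\delta^{-\beta}\bbD$ to vanish as $\delta\to0$, hence $q(x)\to0$; combined with the Lipschitz regularity of $\bbD$ and $d$ and with $1/d\in L^\infty$, one reads off Assumptions A and Q on a boundary layer with $a=d$. When $\beta\ge2$ the block $\bbP_x\bbD\bbP_x$ is strongly degenerate and Theorem \ref{ESAC} applies directly. For $\tfrac32<\beta<2$ I would use Theorem \ref{Mbetaconst}: the confinement ratio in \eqref{E:ESA} collapses to $(1-q(x))(\beta-1)^2/(\beta-2)^2$, whose limit as $\delta\to0$ is $(\beta-1)^2/(\beta-2)^2$, and an elementary computation shows this exceeds $1$ precisely when $\beta>\tfrac32$. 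Choosing $\mu$ small enough that $q$ is correspondingly small on $\Gamma_\mu$ then secures \eqref{E:ESA}, giving essential self-adjointness; the borderline $\beta=\tfrac32$ is exactly the value at which the ratio equals $1$, which explains the strict inequality.

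For part ii the aim is to exhibit a nonzero $\psi\in L^2(\Omega;dx)$ in $\ker(H_0^*+1)$, which (as $H_0$ is symmetric and nonnegative, so its deficiency dimension is constant below the spectrum) certifies positive deficiency indices and hence failure of essential self-adjointness. First I would pass to boundary-normal coordinates $(\delta,\omega)$ in $\Gamma_{r_0}$ and freeze the tangential variable, reducing the leading part of $H_0$ to the radial operator $-\frac{d}{d\delta}\big(d\,\delta^{\beta}\frac{d}{d\delta}\big)$, the curvature term from $\Delta\delta$ and the tangential Laplacian being lower order. Its indicial analysis yields two local solutions, one bounded and one $\sim\delta^{1-\beta}$; against the measure $dx\simeq d\delta\,d\omega$ the second lies in $L^2$ near $\partial\Omega$ exactly when $2(1-\beta)>-1$, i.e. $\beta<\tfrac32$, which is the limit-circle regime. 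I would then set $\psi=\chi(\delta)\,u(\delta)$ with $u$ the formal $L^2$ solution and $\chi$ a cutoff supported in $\Gamma_{r_0}$ equal to $1$ near the boundary, and correct it by solving $(H_0^*+1)\psi=0$ through a fixed-point argument in a weighted space, treating as perturbations the anisotropic correction $\bbD-d\,\delta^{\beta}\mathds{1}$, the variation of $d$, and the curvature term. The extra hypothesis $\sup_{\Gamma_{r_0}}\big|(\nabla\bbD\delta^{-\beta})\cdot\nabla\delta\big|<\infty$ is precisely what bounds the divergence-type remainder produced when $\bbD$ is differentiated, keeping the correction in $L^2$ and preserving the $\delta^{1-\beta}$ leading behavior.

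The main obstacle is the error control in part ii: upgrading the scalar limit-circle picture to the full anisotropic PDE. The delicate point is that $\delta^{1-\beta}$ is only marginally square integrable when $\beta$ is near $\tfrac32$, so the perturbation argument must preserve the precise exponent rather than merely the integrability; this is exactly where the Lipschitz bounds on $d$ and the extra derivative bound on $\bbD\delta^{-\beta}$ become indispensable. Showing that the tangential and curvature contributions are genuinely subordinate to, rather than comparable with, the radial leading term—so that the constructed $\psi$ remains a nonzero $L^2$ weak solution of $H_0\psi=-\psi$—is the crux of the argument.
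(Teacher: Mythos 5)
First, a point of orientation: the paper does not actually prove Theorem \ref{Ro} --- it is quoted from Robinson \cite{Ro2}. The only internal content related to it is the Remark following Proposition \ref{PR}, which observes that part i can be recovered from Theorem \ref{Mbetaconst} ``by a similar argument'' to the one in Proposition \ref{PR}. So your part i can be compared against that route, while for part ii there is nothing in the paper to compare against.

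For part i your reduction has a genuine gap at the step ``one reads off Assumptions A and Q on a boundary layer with $a=d$.'' Lipschitz continuity of $\bbD$ together with the \emph{qualitative} isotropy condition \eqref{C:Disot} controls the size of $\bbD\delta^{-\beta}-d\mathds{1}$ but gives no control whatsoever on its gradient, and the gradient bounds \eqref{S:ADsup}, \eqref{S:Q} can genuinely fail. Concretely, take $d=1$ and, near the boundary, $\bbD(x)=\delta(x)^{\beta}\bigl(1+\delta(x)\sin(1/\delta(x))\bigr)\mathds{1}$ with $\tfrac32<\beta<2$: this $\bbD$ is Lipschitz (all derivative terms are $O(\delta^{\beta-1})$, bounded since $\beta>1$) and satisfies \eqref{C:Disot}, yet $a(x)=1+\delta(x)\sin(1/\delta(x))$ has $|\nabla \ln a|\sim \delta^{-1}$, so $\sup|\nabla\ln a|\,\delta^{s}=\infty$ for every $s<1$ and Assumption A fails; hence Theorem \ref{Mbetaconst} cannot be invoked as stated. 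The paper's device (in Proposition \ref{PR} and the Remark) circumvents exactly this: one never differentiates $\bbD$. Instead, the vector field $\widetilde\bsX_0$ and the Agmon weight $g$ are built from the isotropic comparison operator $\widetilde\bbD=d\,\delta^{\beta}\mathds{1}$, which \emph{does} satisfy Assumption A because $|\nabla d|,\,1/d\in L^{\infty}$; the true $\bbD$ then enters only through the undifferentiated quadratic expressions $\widetilde\bsX_0\cdot(\bbD^{-1}\widetilde\bsX_0)$ and $\nabla g\cdot(\bbD\nabla g)$, which \eqref{C:Disot} controls up to a factor $1+O(\epsilon)$ on a thin enough layer; since $(\beta-1)^2/(2-\beta)^2>1$ strictly for $\beta>\tfrac32$, the $O(\epsilon)$ losses are absorbed. (Your use of Theorem \ref{ESAC} for $\beta\ge2$ is fine, since that theorem needs no gradient bounds; your computation of the threshold ratio is also correct.)

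For part ii, the paper offers no proof at all (it defers to \cite{Ro2}), so your limit-circle program stands on its own. The exponent bookkeeping is right ($\delta^{1-\beta}\in L^{2}$ near $\partial\Omega$ iff $\beta<\tfrac32$), and producing a nonzero element of $\ker(H_0^*+1)$ would indeed disprove essential self-adjointness for the nonnegative symmetric $H_0$. But as you yourself acknowledge, the entire analytic core --- promoting the frozen-coefficient radial solution to an exact $L^2$ weak solution, with the tangential Laplacian, curvature term, anisotropic error, and the marginal integrability as $\beta\uparrow\tfrac32$ all controlled by an unspecified fixed-point scheme in an unspecified weighted space --- is left unexecuted. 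As it stands, part ii is a plausible program rather than a proof, and the stated hypothesis $\sup_{\Gamma_{r_0}}|(\nabla\bbD\delta^{-\beta})\cdot\nabla\delta|<\infty$ is used only heuristically, without a precise identification of where it enters the error estimates.
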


Theorem \ref{Ro} provides the answer (at the level of  power behaviour) to the question
 of dependence of essential self-adjointness as a function of 
the degree of degeneracy of $\bbD(x)$ as $x \rightarrow \partial\Omega$, except for the 
critical value, $\beta =\frac{3}{2}$, which remains undecided. At the price of  a quantitative version of \eqref{C:Disot} the following proposition  fills this gap. 
 
 \begin{proposition}\label{PR}
Let  $\Omega$ is bounded, simply connected and with $\cC^2$ boundary. Suppose
that $\bbD \in \cC^1(\Omega; \bbR^{d \times d})$ and there exists $r_0 >0,\; s<1$ such that on 
$\Gamma_{r_0}$
 
 \begin{equation}\label{C:Disots}
  \|\bbD(x)\delta(x)^{-\beta}- d(x)\mathds{1}\| \leq c_{r_0} \delta(x)^{1-s} 
\end{equation}
with $d \in \cC^1(\Omega)$ and $d,\; \frac{1}{d},\;
|\nabla d| \in L^{\infty}\big(\Gamma_{r_0}\big)$.
Then $H_0$ is essentially self-adjoint for $\beta \geq \frac{3}{2}$.
\end{proposition}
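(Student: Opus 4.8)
The plan is to follow the proofs of Theorem \ref{H} and Theorem \ref{Mbetaconst}, but with one essential modification forced by the hypotheses: the quantitative isotropy \eqref{C:Disots} controls $\bbD$ only in operator norm and gives \emph{no} bound on $\nabla\bbD$. Hence Assumption A cannot be checked directly, since the radial coefficient $a(x)=\delta(x)^{-\beta}\,\nabla\delta(x)\cdot\big(\bbD(x)\nabla\delta(x)\big)$ has an uncontrolled gradient. The remedy is to build both the Hardy vector field and the Agmon weight out of the scalar $d$ of \eqref{C:Disots} — whose gradient \emph{is} controlled, as $|\nabla d|,1/d\in L^\infty(\Gamma_{r_0})$ — and to treat the gap between the true anisotropic coefficients and their isotropic surrogates as a remainder, bounded in norm alone via \eqref{C:Disots}. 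Throughout one has $\rho=1$ (so $\gamma=0$, $r=1$) and $V=0$; for $\beta\ge 2$ the operator is strongly degenerate and essential self-adjointness follows from Theorem \ref{ESAC} (note $\|\bbP_x\bbD\bbP_x\|\delta^{-\beta}$ is bounded by \eqref{C:Disots}), so it remains to treat $\tfrac32\le\beta<2$.

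First I would establish the Hardy inequality. In the vector-field inequality of Lemma \ref{H:VFA} I take
\[
\widetilde\bsX_0(x)=\tfrac12\,d(x)\,\delta(x)^{\beta-1}\big[\beta-1+f(\delta(x))\big]\nabla\delta(x),\qquad f(t)=\tfrac{1}{\ln(1/t)},
\]
localized by $\Psi$ as in \eqref{H:X1}--\eqref{H:Psi}. Since the only functions differentiated in $\nabla\cdot\bsX_0$ are $d$, $\delta$, and $f(\delta)$, the remainder involves only $\nabla\ln d$ and $\Delta\delta$, both bounded, so it obeys the analog of \eqref{H:supR1}. The quadratic term needs $\nabla\delta\cdot\bbD^{-1}\nabla\delta$, which I estimate by a Neumann series: \eqref{C:Disots} gives $\|\delta^{-\beta}\bbD-d\,\mathds{1}\|\le c_{r_0}\delta^{1-s}$, whence $\nabla\delta\cdot\bbD^{-1}\nabla\delta=\delta^{-\beta}d^{-1}\big(1+O(\delta^{1-s})\big)$ with no derivative of $\bbD$ entering. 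Every anisotropy error is $O(\delta^{1-s})$, hence $O(\delta^{(1-s)/2})$, and is folded into the remainder, which is then absorbed using $tf'(t)-\tfrac12 f(t)^2=\tfrac12(\ln\tfrac1t)^{-2}$ from \eqref{H:f1M1}. This yields $h_0[\varphi,\varphi]\ge -C\|\varphi\|^2+\int_{\Gamma_{\nu_1}}\cH_0^d\,|\varphi|^2\,dx$ with
\[
\cH_0^d(x)=\tfrac14\,d(x)\,\delta(x)^{\beta-2}\Big[(\beta-1)^2+\tfrac12\big(\ln\tfrac1{\delta(x)}\big)^{-2}\Big]\ge 0.
\]
Note that Assumption Q is never used: the Schur-complement quantity $q$ is replaced once and for all by the norm bound \eqref{C:Disots}.

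Next I would run the Agmon scheme of Theorem \ref{Mbetaconst}. I choose $g(x)=\tfrac{2-\beta}{2}\ln\delta(x)\,\Psi(x)$ as in \eqref{E:gconst}; crucially $g$ does not involve $\bbD$, so $\nabla g\cdot\bbD\nabla g=(\tfrac{2-\beta}{2})^2\delta^{-2}\,\nabla\delta\cdot\bbD\nabla\delta=(\tfrac{2-\beta}{2})^2 d\,\delta^{\beta-2}\big(1+O(\delta^{1-s})\big)$, again by \eqref{C:Disots}. With $E_0=-C$ and $B=\cH_0^d$, condition \eqref{E:gB} reduces at leading order to $(\tfrac{2-\beta}{2})^2\le\tfrac14(\beta-1)^2$, i.e. $\beta\ge\tfrac32$, while the surviving $O(\delta^{1-s})$ discrepancy is dominated by the term $\tfrac18 d\,\delta^{\beta-2}(\ln\tfrac1\delta)^{-2}$ in $\cH_0^d$, because $\delta^{1-s}(\ln\tfrac1\delta)^2\to0$. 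Finally Lemma \ref{E:supml} is verified as in Theorem \ref{Mbetaconst}: with $\phi_l$ from \eqref{E:phil} one has $\nabla\phi_l\cdot\bbD\nabla\phi_l=(k_l')^2\,\nabla\delta\cdot\bbD\nabla\delta=O(r_l^{-2}d\,\delta^{\beta})$ and $e^{2g}=\delta^{2-\beta}$, so \eqref{E:suplm} holds, and Lemma \ref{BI} together with Lemma \ref{E:supml} yields essential self-adjointness.

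The crux is the critical exponent $\beta=\tfrac32$. There $(\tfrac{2-\beta}{2})^2=\tfrac14(\beta-1)^2=\tfrac1{16}$, so the leading $\delta^{\beta-2}$ coefficients in \eqref{E:gB} match \emph{exactly} and there is no slack; the whole argument then rests on the anisotropy error being strictly smaller than the subleading inverse-logarithm term of the barrier. This is precisely what the quantitative rate $\delta^{1-s}$ in \eqref{C:Disots} secures — any positive power of $\delta$ beats $(\ln\tfrac1\delta)^{-2}$ — and it is also the exact reason why Robinson's qualitative isotropy \eqref{C:Disot} cannot decide $\beta=\tfrac32$. I expect the only genuinely delicate bookkeeping to be the uniform control, down to the boundary, of all anisotropy remainders at the rate $O(\delta^{1-s})$; in particular the Neumann expansion of $\bbD^{-1}$ in $\bsX_0\cdot\bbD^{-1}\bsX_0$ must be shown to converge and to be estimated uniformly on $\Gamma_{\nu_1}$, which is where the hypotheses $d,1/d\in L^\infty(\Gamma_{r_0})$ and \eqref{C:Disots} enter together.
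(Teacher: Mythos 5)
Your proof is correct, and its skeleton is the paper's own: reduce to $\tfrac32\le\beta<2$ via Theorem \ref{ESAC}, replace $\bbD$ by the isotropic surrogate $\widetilde\bbD=d\,\delta^{\beta}\mathds{1}$, build the Hardy vector field from $d$ (whose gradient, unlike that of $\bbD$, is controlled), treat the anisotropy purely as a norm perturbation $\bbD^{-1}=\widetilde\bbD^{-1}+\delta^{-\beta}\cO(\delta^{1-s})$ entering only the quadratic term, and then run Lemma \ref{BI} and Lemma \ref{E:supml}; your Neumann-series argument is the honest justification of the paper's asserted expansion \eqref{C:Dinvper}, and your remark that Assumption A and Assumption Q cannot be invoked directly is exactly why the surrogate is needed. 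The one genuine divergence is how the critical case $\beta=\tfrac32$ is closed. The paper keeps the template of Theorem \ref{Mbetaconst} literally: it uses $B$ \emph{without} the inverse-log term (as in \eqref{E:Bbetaconst}) and instead upgrades the Agmon weight to $g=\bigl(\tfrac{2-\beta}{2}\ln\delta+\tfrac12\ln\ln\tfrac1\delta\bigr)\Psi$ (see \eqref{C:gconst}), so that the bracket in $\nabla g\cdot\bigl(\bbD\nabla g\bigr)$ becomes $\bigl(1-\tfrac{c}{\ln(1/\delta)}\bigr)^2+c_{iso}(\nu_0)\delta^{1-s}\le 1$ for small $\delta$, i.e.\ the anisotropy error is absorbed inside the weight itself; the price is $e^{2g}=\delta^{2-\beta}\ln\tfrac1\delta$, which is why Lemma \ref{E:supml} must allow $\sup|m_l|\le C_m l$. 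You instead keep the simple weight \eqref{E:gconst} and retain the term $\tfrac18\,d\,\delta^{\beta-2}\bigl(\ln\tfrac1\delta\bigr)^{-2}$ inside $B$ --- legitimate here precisely because $V=0$, so that term is not consumed absorbing a potential error as it was in \eqref{E:lnconst} --- and let it dominate the $\cO(\delta^{1-s})$ discrepancy. Both devices rest on the same fact, $\delta^{1-s}\bigl(\ln\tfrac1\delta\bigr)^{2}\to 0$, so the critical exponent comes out the same; your variant buys a marginally cleaner ending (a uniform bound on $|m_l|$ rather than one growing like $l$), while the paper's modified $g$ is the choice it can reuse verbatim in Theorem \ref{C:ESAlog}, where logarithmic factors are present in $\bbD$ itself.
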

\begin{proof}
Since the case $\beta \geq 2$ is covered by Theorem \ref{ESAC} one can assume that $\beta <2$.
Remark first that from \eqref{C:Disots} and 
 $d,\; \frac{1}{d},\;
 \in L^{\infty}\big(\Gamma_{r_0}\big)$
it follows that if $\widetilde\bbD= d \delta^{\beta}$ then
\begin{equation}\label{C:Dper}
\bbD= \widetilde\bbD + \delta^{\beta} \cO\big(\delta^{1-s}\big)
\end{equation}
and for sufficiently small $\delta(x)$,
\begin{equation}\label{C:Dinvper}
\bbD^{-1}= \widetilde\bbD^{-1} + \delta^{-\beta} \cO\big(\delta^{1-s}\big).
\end{equation}
Repeat the proof of Theorem \ref{H} with $\widetilde\bsX_0$ corresponding to $\widetilde\bbD$. 
i.e. $\widetilde\bsX_0 =\frac{1}{2}\big(1-q(x)\big)d(x)\delta(x)^{\beta-1}\big( \beta -1 +
f\big(\delta(x)\big)\big) \nabla\delta(x)$. 
Notice that the "perturbation" $\cO\big(\delta^{1-s}\big)$ enter only in the computation 
$\widetilde\bsX_0 \cdot \big(\bbD^{-1}
\widetilde\bsX_0\big) =  \widetilde\bsX_0 \cdot \big(\widetilde\bbD^{-1} \widetilde\bsX\big)
+\delta^{\beta -2}\cO\big(\delta^{1-s}\big)$ hence one obtains \eqref{H:calH1} with
$q=0$. As concerning $g$ due to the perturbation term in \eqref{C:Dper} we need we need a better choice than in Theorem \ref{Mbetaconst}. More precisely we take
\begin{equation}\label{C:gconst}
 g(x)=\Big(\frac{2-\beta}{2} \ln \delta(x)+\frac{1}{2} \ln \ln \frac{1}{\delta(x)}\Big)\Psi(x)
\end{equation}
where $\Psi$ (see \eqref{H:Psi}) is  the same as in the proof of Theorem \ref{H}.
Next, we compute  $\nabla g(x)\cdot \big(\bbD(x)\nabla g(x)\big)$on $\Gamma_{\mu}$.
 For $\delta(x) <\frac{\nu_0}{2}$, $\Psi(x)=1$ hence for $x \in \Gamma_{\mu}$
 \begin{equation} \label{E:gradgconst}
\nabla g(x)= \frac{2-\beta}{2}\delta(x)^{-1} \Big(1-\frac{1}{2 \ln \frac{1}{\delta(x)}}\Big)\nabla\delta(x)
\end{equation}
which together with \eqref{C:Dper} leads to

\begin{equation}\label{E:gradgMconstiso}
\nabla g(x)\cdot \big(\bbD(x)\nabla g(x)\big)=
\Big(\frac{2-\beta}{2}\Big)^2\delta(x)^{\beta-2} \Big[\Big(1-\frac{1}{2 \ln \frac{1}{\delta(x)}}\Big)^2
+ c_{iso}(\nu_0)\delta^{1-s}(x)\Big].
\end{equation}
Noticing that for $\delta(x)$ sufficiently small
\begin{equation}
\Big(1-\frac{1}{2 \ln \frac{1}{\delta(x)}}\Big)^2
+ c_{iso}(\nu_0)\delta^{1-s}(x) \leq 1
\end{equation}
from this point the proof repeats the proof of Theorem \ref{Mbetaconst}. 
Since $ V=q=0,\; \rho=1$, \eqref{E:ESA} gives the condition 
$\Big(\frac {\beta-1}{\beta -2}\Big)^2 \geq 1$ hence $\beta \geq \frac{3}{2}$.
\end{proof}

\begin{remark}

A similar argument leads to the fact that if \eqref{C:Disot} holds true then Theorem \ref{Mbetaconst} implies
Theorem \ref{Ro} i. i.e., $H_0$ is essentially self-adjoint for $\beta > \frac{3}{2}$.
\end{remark}

At the level of power behaviour and for $\bbD$ asimptotically isotropic as $x \rightarrow \partial \Omega$ the problem of dependence of self-adjointness of $H_0$ on the degeneracy of $\bbD$
 is completely solved by Theorem \ref{Ro} and Proposition \ref{PR}. The transition from essential self-adjoint to not essential self-adjoint takes place at the critical  value $\beta = \frac{3}{2} $. It is an interesting problem to "zoom" the neighbourhood of the critical value and to obtain a more precise description of the essential self-adjoint to not essential  self-adjoint transition. In particular, assuming that as $x \rightarrow \partial \Omega$, $\bbD(x) \sim \delta(x)^{\frac{3}{2}}\big(\ln\frac{1}{\delta(x)}\big)$ the problem is to find the critical value, $\alpha_c$, of $\alpha$ where the transition essential self-adjoint to not essential self-adjoint takes place. Notice that from Proposition \ref{PR} $\alpha =0$ lies in the essential self-adjoint region.

\begin{theorem}\label{C:ESAlog}
Let  $\Omega$ is bounded, simply connected and with $\cC^2$ boundary. Suppose
that $\bbD \in \cC^1(\Omega; \bbR^{d \times d})$ and there exists $r_0 >0,\; s<1$ such that on 
$\Gamma_{r_0}$
 
 \begin{equation}\label{C:Disots}
  \|\bbD(x)\delta(x)^{-\frac{3}{2}}\Big(\ln\frac{1}{\delta(x)}\big)^{-\alpha}- d(x)\mathds{1}\| \leq c_{r_0} \delta(x)^{1-s} 
\end{equation}
with $d \in \cC^1(\Omega)$ and $d,\; \frac{1}{d},\;
|\nabla d| \in L^{\infty}\big(\Gamma_{r_0}\big)$.
Then $H_0$ is essentially self-adjoint for $\alpha \leq \frac{1}{4}$.
 \end{theorem}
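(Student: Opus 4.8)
The plan is to run the proof of Proposition \ref{PR} essentially unchanged, upgrading both the vector field and the Agmon weight so as to carry the extra logarithmic factor. Since the range $\beta\ge 2$ is already covered by Theorem \ref{ESAC}, and here $\beta=\tfrac32<2$, I may work throughout in the degenerate regime. First I would reduce to the isotropic leading term exactly as in \eqref{C:Dper}--\eqref{C:Dinvper}: setting $\widetilde{\bbD}=d\,\delta^{3/2}\big(\ln\tfrac1{\delta}\big)^{\alpha}\mathds{1}$, the hypothesis \eqref{C:Disots} together with $d,\tfrac1d\in L^{\infty}$ gives, near $\partial\Omega$,
\[
\bbD=\widetilde{\bbD}+\delta^{3/2}\big(\ln\tfrac1{\delta}\big)^{\alpha}\,\cO(\delta^{1-s}),\qquad
\bbD^{-1}=\widetilde{\bbD}^{-1}+\delta^{-3/2}\big(\ln\tfrac1{\delta}\big)^{-\alpha}\,\cO(\delta^{1-s}).
\]

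Next I would produce the Hardy barrier by running the vector-field computation of Theorem \ref{C:Hln} with $\rho=1$ (hence $\gamma=0$, $r=1$), with $q=0$ (isotropy), $\beta=\tfrac32$, and with $\widetilde{\bsX}_{\alpha}$ built from $\widetilde{\bbD}$. The perturbation enters only through $\widetilde{\bsX}_{\alpha}\cdot\bbD^{-1}\widetilde{\bsX}_{\alpha}=\widetilde{\bsX}_{\alpha}\cdot\widetilde{\bbD}^{-1}\widetilde{\bsX}_{\alpha}+\delta^{-1/2}\big(\ln\tfrac1{\delta}\big)^{\alpha}\cO(\delta^{1-s})$, an error that is $\cO(\delta^{1-s})$ relative to the barrier and hence negligible. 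This yields \eqref{E:h} with $B$ equal, near $\partial\Omega$, to the quantity supplied by \eqref{H:divXalphaD-1XNRalpha},
\[
B=\tfrac14\, d\,\delta^{-1/2}\big(\ln\tfrac1{\delta}\big)^{\alpha}\Big[\big(\tfrac12-\tfrac{\alpha}{\ln\frac1{\delta}}\big)^{2}+\tfrac{1-2\alpha-\alpha^{2}}{(\ln\frac1{\delta})^{2}}\Big];
\]
the decisive point is that the factor $\big(\ln\tfrac1{\delta}\big)^{\alpha}$ carried by $\bbD$ survives in the barrier.

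The heart of the matter is the Agmon weight, for which I would take \emph{the same} $g$ as in the critical case of Proposition \ref{PR}, namely $g=\big(\tfrac{2-\beta}{2}\ln\delta+\tfrac{2-\beta}{4}\ln\ln\tfrac1{\delta}\big)\Psi$ with $\Psi$ as in \eqref{H:Psi}, so that on $\Gamma_{\nu_0/2}$ its gradient is $\nabla g=\tfrac{2-\beta}{2}\delta^{-1}\big(1-\tfrac{1}{2\ln\frac1{\delta}}\big)\nabla\delta$. Using the reduction above and $\nabla\delta\cdot\widetilde{\bbD}\nabla\delta=d\,\delta^{3/2}\big(\ln\tfrac1{\delta}\big)^{\alpha}$, the common leading constant $\big(\tfrac{2-\beta}{2}\big)^{2}=\tfrac14(\beta-1)^{2}=\tfrac1{16}$ and the common factor $d\,\delta^{-1/2}\big(\ln\tfrac1{\delta}\big)^{\alpha}$ cancel on both sides of \eqref{E:gB}, so that the Agmon inequality reduces to comparing $\big(1-\tfrac{1}{2\ln\frac1{\delta}}\big)^{2}$ with $\big(1-\tfrac{2\alpha}{\ln\frac1{\delta}}\big)^{2}+\tfrac{4(1-2\alpha-\alpha^{2})}{(\ln\frac1{\delta})^{2}}$. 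The coefficients of $(\ln\tfrac1{\delta})^{-1}$ are $-1$ on the left and $-4\alpha$ on the right, so the inequality holds for all small $\delta$ precisely when $4\alpha\le 1$; at $\alpha=\tfrac14$ the linear terms cancel and the surviving $(\ln\tfrac1{\delta})^{-2}$ term keeps the correct sign, while for every $\alpha\le\tfrac14$ the exponentially small $\cO(\delta^{1-s})$ perturbation is dominated. I expect this single comparison to be the main obstacle, since it is exactly the balance between the weight inherited from the power-law critical case and the logarithmically shifted barrier, and it is what produces the threshold $\alpha\le\tfrac14$.

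Finally I would verify the hypothesis of Lemma \ref{E:supml} with $\phi_l$ as in \eqref{E:phil}. Here $e^{2g}\sim\delta^{1/2}\big(\ln\tfrac1{\delta}\big)^{1/4}$, while on $B_l$ one has $\nabla\phi_l\cdot\bbD\nabla\phi_l\lesssim\delta^{-1/2}\big(\ln\tfrac1{\delta}\big)^{\alpha}$, so by \eqref{E:|m|} $\sup_{B_l}|m_l|\lesssim\big(\ln\tfrac1{\delta}\big)^{1/4+\alpha}\sim l^{1/4+\alpha}\le C_m\,l$ whenever $\alpha\le\tfrac34$; thus \eqref{E:suplm} is comfortably satisfied and is not the binding constraint. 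With \eqref{E:h}, \eqref{E:gB} and \eqref{E:suplm} established, Lemma \ref{E:supml} yields essential self-adjointness of $H_0$ for $\alpha\le\tfrac14$, completing the proof.
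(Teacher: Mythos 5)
Your proposal is correct and takes essentially the same route as the paper: reduction to the isotropic leading part as in Proposition \ref{PR}, the logarithmically weighted Hardy barrier supplied by Theorem \ref{C:Hln} (with $q=0$, $\gamma=0$, $\beta=\tfrac32$, $a=d$), the Agmon weight combining $\ln\delta$ with a $\ln\ln\tfrac{1}{\delta}$ correction, and the same comparison of the coefficients of $\big(\ln\tfrac{1}{\delta}\big)^{-1}$ (with the $\big(\ln\tfrac{1}{\delta}\big)^{-2}$ check at equality) producing the threshold $\alpha\le\tfrac14$. The only departures are cosmetic and in your favor: you normalize the $\ln\ln\tfrac{1}{\delta}$ coefficient as $\tfrac{2-\beta}{4}$, which makes $g$ consistent with the gradient formula actually used in the proof of Proposition \ref{PR} (the paper's display \eqref{C:gconst} carries $\tfrac12$ there, a harmless slip), and you verify \eqref{E:suplm} explicitly, obtaining the non-binding constraint $\alpha\le\tfrac34$, where the paper simply invokes the argument of Theorem \ref{Mbetaconst}.
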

 
 \begin{proof}
Proceeding  as in the proof of Proposition \ref{PR} we obtain from Theorem \ref{C:Hln} that for $\nu$ sufficiently small
 \begin{equation}\label{H:Hardyalpha32} 
h_0\big[\varphi, \varphi\big]
\geq -c_1(\nu) \|\chi_{\Omega \setminus \Gamma{\nu}}\varphi \|^2+ \int_{\Gamma_{\nu}}\cH_{\alpha}(x)|\varphi(x)|^2 \rho(x) dx
\end{equation}
with
\begin{align}\label{H:calHalpha32}
&\cH_{\alpha}(x)= \nonumber \\
&\frac{1}{4}d(x)\delta(x)^{-\frac{1}{2}} \Big(\ln\frac{1}{\delta(x)}\Big)^{\alpha}\Big[ \big(\frac{1}{2} -\frac{\alpha}{\ln\frac{1}{\delta(x)}}\big)^2
+\frac{1-2\alpha -\alpha^2}{\big(\ln\frac{1}{\delta(x)}\big)^2}\Big]=\nonumber \\
&\frac{1}{16}d(x)\delta(x)^{-\frac{1}{2}} 
\Big(\ln\frac{1}{\delta(x)}\Big)^{\alpha}\Big(1-
\frac{4\alpha}{\ln \frac{1}{\delta(x)}}+
\frac{4(1-2\alpha)}{\Big(\ln\frac{1}{\delta(x)}\Big)^2}\Big).
\end{align}
Choosing $g(x)$ as in Proposition \ref{PR} one obtains in this case
\begin{align}\label{E:gradgMisoalpha}
&\nabla g(x)\cdot \big(\bbD(x)\nabla g(x)\big)=\nonumber \\
&\frac{1}{16}\delta(x)^{-\frac{1}{2}}\Big(\ln\frac{1}{\delta(x)}\Big)^{\alpha} \Big[\Big(1-\frac{1}{2 \ln \frac{1}{\delta(x)}}\Big)^2
+ c_{iso}(\nu_0)\delta^{1-s}(x)\Big].
\end{align}

Then as in Theorem \ref{Mbetaconst} the essential self-adjointness of  $H_0$ for $\alpha \leq \frac{1}{4}$ follows from the fact that for sufficient small $\delta(x)$
 \begin{equation}
 \frac{1-
\frac{4\alpha}{\ln \frac{1}{\delta(x)}}+
\frac{4(1-2\alpha)}{\Big(\ln\frac{1}{\delta(x)}\Big)^2}\Big)}{\Big(1-\frac{1}{2 \ln \frac{1}{\delta(x)}}\Big)^2
+ c_{iso}(\nu_0)\delta^{1-s}(x)} \geq 1.
 \end{equation}
 
 \end{proof}
8. During the last decade U. Boscain and his collaborators made a detailed study of essential self-adjointness (aka quantum confinement) of Laplace-Beltrami operator on two-dimensional manifolds endowed with almost Riemannian structures (2-ARS) i.e. Riemannian structures which become singular on some one-dimensional submanifolds; we send the reader to the recent paper \cite{BBP} for a detailed discussion and further references. 

Since, as noticed in \cite{NN2}, the drift-diffussion operator can be viewed (up to a sign) as Laplace-Beltrami operator on weighted Riemannian manifolds it is a natural questions whether the results on 
drift-diffusion operator are related to the  results for Laplace-Beltrami operators 2-ARS. Let us notice first that the result in \cite{NN2} as well as those in \cite{Ro1}, \cite{Ro2} do not apply to the setting in \cite{BBP} since they require $\bbD(x)$ to be in a neighbourhood of the isotropic case (see Comment 6 above). On the contrary since the results in this paper do not require
$\bbD(x)$ to be almost isotropic Corollary \ref{C:2ARS} below can applied to the crucial step in proving the essential self-adjointness
 of Laplace-Beltrami operator on 2-ARS \cite{BL}, \cite{BBP}, namely the analysis in a tubular neighbourhood of the singular set.

Let $\Omega$ be the unit disc in $\bbR^2$, $\Omega =\big\{x=(x_1,x_2)\big| |x| <1\big\}$ and for $0< \nu_0 <1 $,

$$\Gamma_{\nu_0}=\big\{ x\in \Omega \big| \big(1-|x| \big)< \nu_0\big\}.$$

Consider on $\Omega$ a Riemannian distance such that on $\Gamma_{\nu_0}$:
\begin{equation}
ds^2= dr^2 + \frac{1}{(1-r)^{2\alpha}e^{2\Phi(r,\theta)}}d\theta^2,
\end{equation}
where $\alpha \in \bbR; \;r, \theta$ are the standard polar coordinates in $\Omega$ and  $\Phi \in 
\cC^{\infty}\big( \Gamma_{\nu_0}; \bbR\big)$ with uniformly bounded second order derivatives.
On $\Gamma_{\nu_0}$ the Riemannian volume $\omega_{\alpha}$, Laplace-Beltrami operator $\Delta_{\alpha}$ and the
curvature $K_{\alpha}$ have the form:
\begin{equation}
\omega_{\alpha} =\frac{1}{f_{\alpha}(r, \theta)} dr d\theta,
\end{equation}

\begin{equation}
\Delta_{\alpha}= \partial_r^2+ f_{\alpha}^2 \partial_{\theta}^2 +\frac{ \big(\partial_r f_{\alpha}\big)}{f_{\alpha}} \partial_r
+ f_{\alpha}\big(\partial_r f_{\alpha}\big)\partial_{\theta},
\end{equation}

\begin{equation}
K_{\alpha}(r,\theta)=\frac{f_{\alpha}(r,\theta)\big(\partial_r^2 f_{\alpha}(r, \theta)\big)-4\big(\partial_r f_{\alpha}(r, \theta)\big)^2}{f_{\alpha}(r, \theta)^2}
\end{equation}

where 

\begin{equation}
f_{\alpha}(r, \theta)\big) =(1-r)^{\alpha}e^{\Phi(r,\theta)}.
\end{equation}

Consider now the operator \cite{BBP}
\begin{equation}
H_{\alpha,c}=-\Big(\frac{1}{2}\Delta_{\alpha} +c K_{\alpha}\Big); \quad c \in\bbR;
\quad \cD(H_{\alpha,c})=\cC_0^{\infty}(\Omega).
\end{equation}

\begin{corollary}\label{C:2ARS}
Suppose that
\begin{equation}\label{C:HARS}
\Big(\frac{1+\alpha}{2}\Big)^2 -2c\alpha(1+\alpha) \geq 1.
\end{equation}
Then $H_{\alpha,c}$ is essentially self-adjoint in $L^2\big(\Omega;  \omega_{\alpha}(x)dx\big)$.
\end{corollary}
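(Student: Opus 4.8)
The plan is to recognise $H_{\alpha,c}$ as a drift-diffusion operator of the form \eqref{S:H} and then apply Theorem \ref{Mbetaconst}, mirroring the Euclidean treatment of $-\Delta$. Since a drift-diffusion operator is, up to sign, a weighted Laplace--Beltrami operator \cite{NN2}, I would write $-\tfrac12\Delta_\alpha$ in divergence form in the Cartesian coordinates of the disc and read off the coefficients: the Riemannian volume density gives $\rho=(|x|\,f_\alpha)^{-1}$, the halved inverse metric gives the diffusion matrix $\bbD$, and the curvature term gives the potential $V=-cK_\alpha$. Near $\partial\Omega$ the distance is $\delta=1-|x|$, the inward normal $\nabla\delta$ is radial, and in the orthonormal normal/tangential frame the diffusion matrix is diagonal, $\bbD=\tfrac12\diag\!\big(1,\,|x|^2 f_\alpha^2\big)$. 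Comparing with the block form \eqref{S:blockD} I then record $d=\tfrac12$, whence $a=\tfrac12$ and $\beta\equiv0$ in \eqref{S:AD}; from $\rho=(|x|\,f_\alpha)^{-1}=(|x|\,e^{\Phi})^{-1}\delta^{-\alpha}$ I read $\gamma\equiv-\alpha$ with smooth positive prefactor $(|x|\,e^{\Phi})^{-1}$ in \eqref{S:Arho}; and since $\bbD$ is diagonal the off-diagonal blocks vanish, so $q\equiv0$ by \eqref{H:q}.

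With these identifications Assumptions A and Q are essentially automatic: $\beta$, $\gamma$, $a$ and $1-q$ are constant apart from the smooth factor $(|x|\,e^{\Phi})^{-1}$, whose logarithmic gradient is bounded on $\Gamma_{\nu_0}$ because $\Phi\in\cC^\infty$ has bounded derivatives there; hence the weighted-gradient suprema in \eqref{S:Arhosup}, \eqref{S:ADsup} and \eqref{S:Q} are finite, the $\nabla\beta$, $\nabla\gamma$, $\nabla\ln a$ and $\nabla\ln(1-q)$ contributions vanishing identically. The only genuine computation is the potential estimate \eqref{E:Vbetaconst}. Expanding $K_\alpha$ to leading order as $\delta\to0$—using $f_\alpha=\delta^\alpha e^{\Phi}$ and the bounded derivatives of $\Phi$—gives $K_\alpha\,\delta^2=\kappa+O(\delta)$ with $\kappa$ an explicit constant proportional to $\alpha(1+\alpha)$. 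Hence $V=-cK_\alpha$ satisfies $V(x)\,\delta(x)^2=v+O(\delta)$ with $v=-c\kappa$; since $\beta-2=-2$ this is exactly the shape demanded by \eqref{E:Vbetaconst}, the $O(\delta)$ error being absorbed into the $-w_\mu\delta^{(1-s)/2}$ term and $v$ serving as the (constant) leading coefficient.

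It then suffices to substitute $\beta=0$, $\gamma=-\alpha$, $a=\tfrac12$, $q=0$ and the constant $v$ into the self-adjointness criterion of Theorem \ref{Mbetaconst}; the theorem applies because $\beta\equiv0<2$, and as $q=0$ condition \eqref{E:qsmall} is trivial, so \eqref{E:ESA} and \eqref{E:ESAC} coincide. A direct substitution turns $\tfrac{(\beta+\gamma-1)^2+4v/a}{(\beta-2)^2}\ge1$ into precisely \eqref{C:HARS}, and Theorem \ref{Mbetaconst} delivers essential self-adjointness. I expect the only real work to be the curvature asymptotics together with the check that its subleading, $\Phi$-dependent corrections respect the error structure of \eqref{E:Vbetaconst}; the interior region $\Omega\setminus\Gamma_{\nu_0}$, where the coefficients are smooth and non-degenerate, poses no obstruction since essential self-adjointness is a boundary phenomenon.
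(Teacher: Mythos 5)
Your architecture is exactly the paper's: write $H_{\alpha,c}$ in the form \eqref{S:H} with $\rho=\big((1-r)^{\alpha}re^{\Phi}\big)^{-1}$ and $\bbD=\tfrac12\big(\bbP_{\mathbf e_r}+(1-r)^{2\alpha}r^2e^{2\Phi}\bbP_{\mathbf e_\theta}\big)$, read off $\beta=0$, $\gamma=-\alpha$, $a=\tfrac12$, $q=0$, check Assumptions A and Q with $s=0$ using the bounded derivatives of $\Phi$, fold the curvature error into the $w_\mu\delta^{(1-s)/2}$ term of \eqref{E:Vbetaconst}, and invoke Theorem \ref{Mbetaconst} (with $q=0$ parts i and ii indeed coincide). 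All of that matches the paper's proof and is fine.

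The gap sits in the one step you yourself flagged as ``the only real work'' and then did not carry out: the \emph{sign} of the curvature constant, which is the entire content of the corollary. Using $K=-\partial_r^2h/h$ for $ds^2=dr^2+h^2d\theta^2$ with $h=(1-r)^{-\alpha}e^{-\Phi}$, one finds
\begin{equation*}
K_\alpha(r,\theta)=-\frac{\alpha(1+\alpha)}{(1-r)^2}\big(1+O(1-r)\big),
\end{equation*}
so your $\kappa$ equals $-\alpha(1+\alpha)$, not $+\alpha(1+\alpha)$ (for $\alpha=1$, $\Phi=0$ this is the Grushin curvature $-2/\delta^2$). With your identification $V=-cK_\alpha$ --- which is the correct reading of the displayed definition $H_{\alpha,c}=-\big(\tfrac12\Delta_\alpha+cK_\alpha\big)$ --- this gives $v=-c\kappa=+c\alpha(1+\alpha)$, and substituting $\beta=0$, $\gamma=-\alpha$, $a=\tfrac12$ into \eqref{E:ESAC} yields
\begin{equation*}
\Big(\frac{1+\alpha}{2}\Big)^2+2c\alpha(1+\alpha)\ \geq\ 1,
\end{equation*}
i.e.\ \eqref{C:HARS} with the \emph{opposite} sign on the coupling term. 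Your assertion that the substitution produces ``precisely \eqref{C:HARS}'' therefore forces $\kappa=+\alpha(1+\alpha)$ and is a sign error; it is not cosmetic, since the two inequalities select different ranges of $c$. Note that the paper's own proof takes $v(x)=-c\alpha(1+\alpha)$, which is consistent with reading the potential as $V=+cK_\alpha$ (its identification \eqref{C:ARSV} is itself misprinted), and that is what makes \eqref{C:HARS} come out; so the printed statement corresponds to the operator $-\tfrac12\Delta_\alpha+cK_\alpha$. To close your proof you must compute and display the sign of $\kappa$ and then either prove the corollary with $+2c\alpha(1+\alpha)$ (the correct conclusion for $H_{\alpha,c}$ as defined) or adopt the convention $V=+cK_\alpha$; as written, the conventions you commit to and the inequality you claim are incompatible.
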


\begin{proof}
By direct computation, on $\Gamma_{\nu_0}$:
\begin{equation}\label{C:ARSK}
K_{\alpha}(r,\theta)= -\frac{\alpha(\alpha +1)}{(1-r)^2}\big(1 + W_{\alpha}(r, \theta)\big)
\end{equation}
with
\begin{equation}
\sup_{x \in \Gamma_{\nu_0}}\big|W_{\alpha}(r, \theta)\big| (1-r)= w(\alpha) <\infty.
\end{equation}

On $\Gamma_{\nu_0}$, with the identifications:

\begin{equation}\label{C:ARSrho}
\rho(x)=\frac{1}{(1-r)^{\alpha}re^{\Phi(r,\theta)}},
\end{equation}

\begin{equation}\label{C:ARSD}
\bbD(x)=\frac{1}{2}\Big(\bbP_{\mathbf e_r} + (1-r)^{2\alpha}r^2e^{2\Phi(r,\theta)}
\bbP_{\mathbf e_{\theta}}\Big)
\end{equation}

where $\bbP_{\mathbf e_r},\; \bbP_{\mathbf e_{\theta}}$ are the orthogonal projections in $\bbR^2$ along $\mathbf e_r =(\cos \theta, \sin \theta),\; \mathbf e_{\theta}=
(-\sin \theta, \cos \theta)$ respectively and
\begin{equation}\label{C:ARSV}
V(x)=K_{\alpha}(r,\theta),
\end{equation}
$H_{\alpha,c}$ has the form \eqref{S:H}.

From \eqref{C:ARSrho}-\eqref{C:ARSV} $\rho,\; \bbD, \; V$ satisfy Asssumption A and
\eqref{E:Vbetaconst} (notice that $\delta(x) =(1-r)$) with $\beta=0,\; \gamma(x)=-\alpha,\;
a(x)=\frac{1}{2},\; v(x)= -c\alpha(\alpha+1), s=0$ and $q(x)= 0$. Then Corollary \ref{C:2ARS}
follows from Theorem \ref{Mbetaconst}ii. 
\end{proof}

Particular  cases  of  Corrolary \ref{C:2ARS} are related to the sufficient conditions for esential
self-adjointnes in \cite{BL}, \cite{BBP}; in the case $\alpha =1$, $\Phi (1,  \theta)=0$ it
gives   another proof of Theorem 1 in \cite{BL} while the case $\Phi (r, \theta) = 0$
corresponds to the sufficient part of Proposition 4 in \cite{BBP}. One may speculate  that the results in Section \ref{S:4} can provide sufficient conditions ensuring essential self-adjointness of
Laplace-Beltrami operator on larger classes of ARS in $d\geq 2$.

\end{document}